\pgfplotsset{width=7cm,compat=1.9}
\newtheorem{lemma}{Lemma}
\newtheorem{proposition}{Proposition}
\newcommand{\renyi}{R$\mathrm{\acute{e}}$nyi }
\newcommand{\mc}{\mathcal}
\newcommand{\mb}{\mathbf}
\newcommand{\mbb}{\mathbb}
\newcommand{\sub}[2]{ \mathop{#1}\limits_{#2}}
\newcommand{\pr}{\mathrm{Pr}}
\newcommand{\comments}[1]{}
\begin{document}
%\title{Single copy estimation of total correlation}
%\title{Correlation hierachy from randomized measurements}
\title{Characterizing correlation within multipartite quantum systems via local randomized measurements}

\author{Zhenhuan Liu}
\email{liu-zh20@mails.tsinghua.edu.cn}
\affiliation{Center for Quantum Information, Institute for Interdisciplinary Information Sciences, Tsinghua University, Beijing 100084, China}
\author{Pei Zeng}
\email{peizeng.phy@gmail.com}
\affiliation{CAS Center for Excellence in Quantum Information and Quantum Physics, University of Science and Technology of China, Hefei, Anhui 230026, China}
\affiliation{Center for Quantum Information, Institute for Interdisciplinary Information Sciences, Tsinghua University, Beijing 100084, China}
\author{You Zhou}
\email{zyqphy@gmail.com}
\affiliation{Nanyang Quantum Hub, School of Physical and Mathematical Sciences, Nanyang Technological University, Singapore 637371}
\affiliation{Centre for Quantum Technologies, National University of Singapore, 3 Science Drive 2, 117543 Singapore}
\author{Mile Gu}
\email{mgu@quantumcomplexity.org}
\affiliation{Nanyang Quantum Hub, School of Physical and Mathematical Sciences, Nanyang Technological University, Singapore 637371}
\affiliation{Centre for Quantum Technologies, National University of Singapore, 3 Science Drive 2, 117543 Singapore}

\begin{abstract}
Given a quantum system on many qubits split into a few different parties, how many total correlations are there between these parties? Such a quantity, aimed to measure the deviation of the global quantum state from an uncorrelated state with the same local statistics, plays an important role in understanding multipartite correlations within complex networks of quantum states. Yet, the experimental access of this quantity remains challenging as it tends to be non-linear, and hence often requires tomography which becomes quickly intractable as dimensions of relevant quantum systems scale. Here, we introduce a much more experimentally accessible quantifier of total correlations, which can be estimated using only single-qubit measurements. It requires far fewer measurements than state tomography, and obviates the need to coherently interfere multiple copies of a given state. Thus we provide a tool for proving multipartite correlations that can be applied to near-term quantum devices.
\end{abstract}
% \date{\today}% It is always \today, today,
             %  but any date may be explicitly specified
\maketitle

\section{Introduction}\label{sec:intro}

%\comments{The great adventure on the number of controllable qubits in various physical platforms \cite{yan2019strongly,Song2019,Arute2019Supremacy,Jurcevic2017Dynamical,Friis2018,Zhang2017dynamical,bernien2017probing,Omran2019,Yang2020Cooling} enables the fruitful investigation of \comments{these} correlations in quantum matters \cite{Kimble2008internet,Perseguers2013network,Bloch2012ultracold,blatt2012quantum,cerezo2020variational,bharti2021noisy}.} 

The preparation of highly correlated quantum states across many qubits is essential for advanced quantum information processing \cite{preskill2018quantum,horodecki2009quantum,modi2012classical}. Yet, in the noisy intermediate-scale quantum (NISQ) era, techniques for doing so are not necessarily reliable. Consequently, there is surging interest in quantum benchmarking \cite{Eisert2020certification,Kliesch2021Certification} --- identifying efficient means of verifying what a quantum computer is doing compared to what it is meant to do. Of these, an analysis of how many correlations exist across many qubits faces significant challenges owing to the exponentially growing size of Hilbert space. This is especially true when there is no prior information regarding how the state is prepared. 

\begin{figure}[htbp]
    \centering
    \includegraphics[scale=0.28]{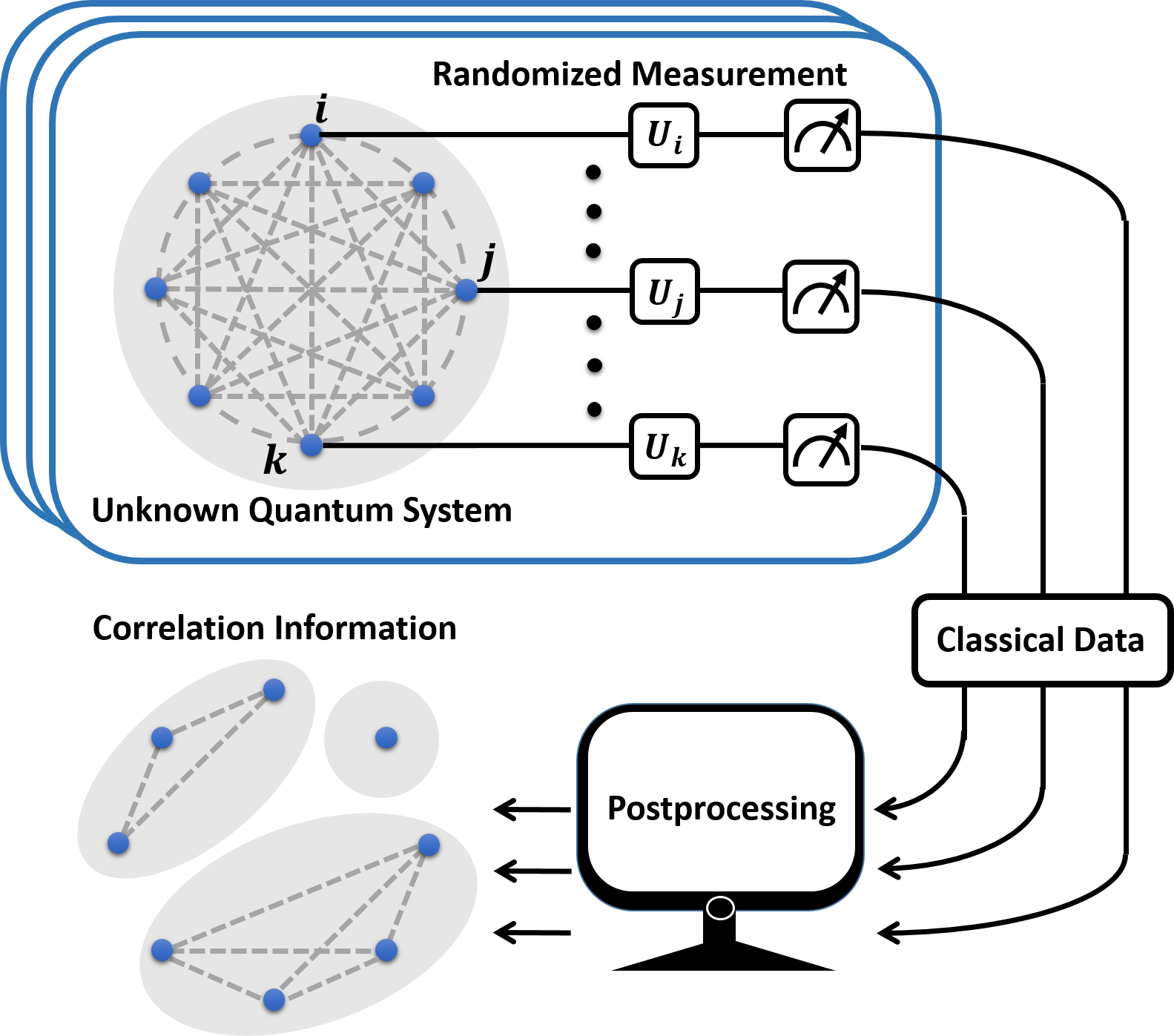}
    \caption{Illustration of the main idea. We present a protocol to measure the total correlation of an unknown multipartite quantum state in any partition. We perform qubitwise local randomized measurements on sequentially prepared states and thus obtain the classical measurement data. Using tailored data postprocessing strategies, the correlation information with respect to any chosen partition can be extracted.}
    \label{fig:Structure}
\end{figure}

One key amount of common interest is the total correlation within a multipartite quantum system \cite{modi2010unified,modi2012classical}. Consider a joint quantum system consisting of $k$ subsystems $\{g_1,g_2, \ldots, g_k\}$, where each subsystem $g_i$ has local statistics specified by respective density operator $\rho_i$. The joint system would be said to have no correlation if the joint state obeys $\rho = \bigotimes_{i=1}^k\rho_i$, such that the global statistics is simply the product of its marginals. A state then possesses correlations if there exists deviation from this tensor product. A common measure of such deviation is the relative entropy, resulting in the quantifier $S(\rho\|\bigotimes_{i=1}^k\rho_i)=\sum_iS(\rho_i)-S(\rho)$, where $S(\cdot)$ represents the von Neumann entropy. The quantity has found applications in quantum thermodynamics \cite{Goold2016thermoreview} and many-body physics \cite{de2018genuine, Goold2015MBL}, and the characterization of genuine multipartite correlation \cite{Bennett_2011,giorgi2011genuine,Girolami_2017}. Nevertheless, the quantity remains difficult to access in practical experiments due to its inherent nonlinearity. Most approaches would require either interacting multiple copies of $\rho$ or state tomography, tasks that can be prohibitive if $\rho$ already presents the most challenging state one can synthesize on NISQ devices.

%\zy{more on gap}
%\cite{Bennett_2011} \cite{Girolami_2017}
Here, we propose a quantifier of the total correlation --- \emph{correlation overlap} --- whose technological accessibility is much closer to the synthesis of $\rho$ itself. In particular, our protocol only requires repeated synthesis of the same $\rho$, together with local (qubitwise) random unitary evolution and computational measurements (see Fig.~\ref{fig:Structure}). Specifically, we show that the \emph{correlation overlap} can be obtained by postprocessing the measurement data and the amount of data required is much less than the traditional quantum tomography. Meanwhile, the quantity itself maintains its operational meaning as a quantifier of total correlations, and can also be immediately adapted to measure how close candidate systems are to the maximally entangled.

\section{Definition}\label{sec:def}
Recall that if a $k$-partite quantum state $\rho$ is uncorrelated, it can be written as $\bigotimes_{i=1}^k\rho_i$ with $\rho_i=\tr_{\bar{i}}(\rho)$ being the reduced density matrix of the $i$-th subsystem. Normally the relative entropy is adopted to quantify the distance between them \cite{modi2010unified,modi2012classical}. The von Neumann entropy involved can be in principle acquired by the quantum state tomography, which is already challenging for systems with more than ten qubits. Thus, in order to make the measurement protocol scalable, one needs to avoid state tomography \cite{mandal2020coherence}. Alternative entropy functions such as \renyi entropy can be obtained by measuring the purity of the state 
\cite{Islam2015Measuring,Kaufmanen2016tanglement,Brydges2019Probing}. However, \renyi entropy can violate the subadditivity \cite{horodecki2009quantum,linden2013structure}, which makes it nonideal for quantifying total correlation. Alternative approaches include uses of witnesses \cite{GUHNE2009detection,Friis2019Reviews} to detect the presence of certain correlations \cite{Huber2013Structure,Shahandeh2014Structural,Lu2018Structure,zhou2019detecting}. These witnesses are typically tailored for specific classes of states (e.g. \cite{toth2005detecting,zhou2019detecting}) and are generally ineffective when applied to states without the preparation information \cite{Zhu2010Minimal,Dai2014Witness,You2020coherent}.

Here we quantify the total correlation based on the fidelity between a given state $\rho$ and its marginals as follows:
\begin{equation}\label{eq:correlation def}
\begin{aligned}
&C(\rho)=-\log\mathcal{F}\left(\rho,\bigotimes_{i=1}^k\rho_i\right),
\end{aligned}
\end{equation}
with the fidelity \cite{Liang_2019} being

\begin{equation}\label{eq:F2correlation2}
\begin{aligned}
\mathcal{F}\left(\rho,\bigotimes_{i=1}^k\rho_i\right)=\frac{\tr\left(\rho \bigotimes_{i=1}^k\rho_i\right)}{\sqrt{\tr(\rho^2)\left[\prod_{i=1}^k\tr(\rho_i^2)\right]}}.
\end{aligned}
\end{equation}

Notice that $k$ is not necessarily the number of qubits, but the number of subsystems under some partition. In Appendix \ref{app:discussion property}, we show that such total correlation measure satisfies certain key properties, such as faithfulness, no change under local unitary transformation, and additivity under tensor product. By taking the minimization on all possible bipartitions, one can also generalize it to quantify genuine multipartite correlation. We remark that other fidelity measures \cite{Liang_2019} could also be adapted to define the total correlation where the measurability is the main concern.

%\zy{modify later}
%In addition to measurement convinience, 
%By changing the partition of subsystems, one can get a series of values of this quantity, which can provide fine-grained information about the correlation structure. In Sec.~\ref{sec:local}, we will show that measuring correlation in different partitions will not increase the number of measurements, one only need to change the postprocessing strategy. 

The denominator of Eq.~\eqref{eq:F2correlation2} is composed of a few purity terms, and there already exist effective methods to measure them \cite{Ekert_2002,Brydges2019Probing}. The main contribution of this work is that we develop a protocol to effectively measure the numerator
\begin{equation}\label{eq:Tk definition}
\begin{aligned}
T_k:=\tr\left(\rho\bigotimes_{i=1}^k\rho_i\right)
\end{aligned}
\end{equation}
based on randomized measurements \cite{van2012Measuring,Elben2018Random,Brydges2019Probing}. %using the same data in measuring the purity terms in denominator of $C(\rho)$. 
We denote $T_k$ as the \emph{correlation overlap} (CRO), which is directly relative to the Hilbert-Schmidt distance
\begin{equation}
\begin{aligned}
D_{\mathrm{HS}}\left(\rho,\bigotimes_{i=1}^k\rho_i\right)=\tr(\rho^2)+\prod_{i=1}^k\tr(\rho_i^2)-2T_k.
\end{aligned}
\end{equation}
When $\rho$ is a low-rank state \cite{coles2019tracedis}, such quantity can offer a tight bound of the trace distance between $\rho$ and its marginals, which can be further applied in the quantum independence testing \cite{yu2019quantum}. In addition, we also discuss the application of bipartite CRO in bipartite entanglement detection, and leave it in Appendix \ref{app:T2EW}.

\section{Efficient Estimation Protocols} \label{sec:total correlation}
%\subsection{tripartite quantum state}\label{subsec:tripartite quantum state}
%introducing the main mathematical tools we use in this work and 
We now show that the total correlation defined in the previous section can be effectively estimated, irrelevant of the party number $k$. For simplicity of discussion, take the tripartite state $\rho_{ABC}$ as an example. Following the definition in Sec.~\ref{sec:def}, the essential quantity one needs to evaluate is the tripartite CRO
\begin{equation}\label{eq:T3}
\begin{aligned}
T_3=\tr\left[\rho_{ABC}(\rho_A\otimes\rho_B\otimes\rho_C)\right].
\end{aligned}
\end{equation}
The difficulty to measure $T_3$ lies in that it is a nonlinear function of $\rho_{ABC}$ and thus cannot be obtained by measuring the observable on a single-copy state. In fact, given four identical states $\rho_{ABC}^{\otimes 4}$, one can make a joint measurement among these copies \cite{Ekert_2002},
\begin{equation}\label{eq:global obs}
\begin{aligned}
T_3&=\tr\left\{S_A\otimes S_B\otimes S_C\left[\rho_{ABC}\otimes(\rho_A\otimes\rho_B\otimes\rho_C)\right]\right\}\\
&=\tr\left[S_A^{(1,2)}\otimes S_B^{(1,3)}\otimes S_C^{(1,4)}(\rho_{ABC}^{\otimes 4})\right].
\end{aligned}
\end{equation}
Here $S_A^{(1,2)}$ is the {\footnotesize SWAP} operator acting on the Hilbert space of the first two copies of subsystem $A$, $\mathcal{H}_A^1\otimes\mathcal{H}_A^2$, and acts trivially on the last two copies, $S_A^{(1,2)}|\psi\rangle_A^1|\phi\rangle_A^2=|\phi\rangle_A^1|\psi\rangle_A^2$. And similarly for the other {\footnotesize SWAP} operators $S_B^{(1,3)}$ and $ S_C^{(1,4)}$ (see Fig.~\ref{fig:tripartite correlation}(c) for an illustration).

%\zy{need a para to gradually shift to single-copy, virtual generate permutation}
%Treating $S_A^{1,2}\otimes S_B^{1,3}\otimes S_C^{1,4}$ as an  observable, $T_3$ can be m by a common observable measurement. 
This kind of measurement in general demands the preparation of identical copies of the state $\rho$, and the joint measurements across the distinct copies, which is possible for the one-dimensional system and the few parties case, for example, $k=2$ \cite{Islam2015Measuring,Kaufmanen2016tanglement}. However, it is very challenging for the higher-dimensional system and for the number of parties $k$ being not small. In the following, we develop a measurement protocol based on randomized measurements \cite{van2012Measuring,Elben2018Random,Brydges2019Probing,huang2020predicting}, which only needs the preparation of singlecopies of the state $\rho$. Randomized measurements find applications not only in quantum information, like entanglement negativity extraction \cite{elben2020mixedstate,singlezhou,neven2021symmetryresolved}, entanglement detection \cite{Tran2016Correlations,ketterer2019characterizing,Knips2020,ketterer2020entanglement,ketterer2020certifying}, Fisher information quantification \cite{rath2021Fisher,yu2021Fisher}, and quantum certification \cite{Elben2020Cross,zhang2020experimental,zhang2021experimental}, but also in quantum many-body physics \cite{Vermersch2019Scrambling,Elben2020topological,Cian2020Chern,garcia2021quantum}.

\textbf{Global Measurement Protocol} -- We first propose a means to measure CRO using random unitary gates that act globally on each system. This protocol can then be subsequently modified to use only local unitary gates on each qubit with a modest sacrifice in error scaling. Our global measurement protocol works as follows: Sample and operate random unitary $U=\bigotimes_{i=1}^{k}U_{g_i}$ on each subsystem $g_i$ for the total $k$-partite system, independently, and then conduct computational basis measurement $\ket{s}=\ket{s_{g_1},s_{g_2},\dots,s_{g_k}}$ in a sequential manner. After sufficient repeating of the preparation and measurement, one can get the estimation of the conditional probability 
\begin{equation}\label{eq:globalCprob}
\begin{aligned}
\mathrm{Pr}\left(s_{g_1},s_{g_2},\dots,s_{g_k}\Bigg|\bigotimes_{i=1}^{k}U_{g_i}\right)=\bra{s}U\rho U^{\dag}\ket{s}
\end{aligned}
\end{equation}
and also its marginals $\mathrm{Pr}(s_{g_i}|U_{g_i})$ for the $i$-th subsystem. %$\mathrm{Pr}(s_B|U_B)$ and $\mathrm{Pr}(s_C|U_C)$. 
The target quantity CRO $T_k$, can be written as the postprocessing of these conditional probabilities shown in Proposition \ref{prop:Tkglobal}, and we summarize the protocol in Algorithm \ref{algo:correlation measure}.

\begin{algorithm}[H]
\caption{Global Measurement Protocol for $T_k$}\label{algo:correlation measure}
\begin{algorithmic}[1]
\Require
$N_U\times N_M$ sequentially prepared $\rho$
\Ensure
Probability distribution of the measurement outcomes conditioned on the evolution unitary $\mathrm{Pr}\left(s_{g_1},s_{g_2},\dots,s_{g_k}\Big|\bigotimes_{i=1}^{k}U_{g_i}\right)$ in Eq.~\eqref{eq:globalCprob}.

\For{$i= 1~\text{\textbf{to}}~N_U$} %\Comment{Enumerating all the locations}
 \State Randomly pick a unitary matrix $U=\bigotimes_{i=1}^{k}U_{g_i}$, with each $U_{g_i}\in \mc{H}_{g_i}$ sampled uniformly from the unitary 2-design ensemble.
 \State Operate $U$ on $\rho$ to get $U \rho U^\dag$. 
 \For{$j= 1~\text{\textbf{to}}~N_M$} 
  \State  Measure $U\rho U^\dagger$ in the computational basis $\{\ket{s}=\ket{s_{g_1},s_{g_2},\dots,s_{g_k}}\}$.
  \State Record the measurement results.
  \EndFor
 \State Estimate the probability and its marginals in Eq.~\eqref{eq:globalCprob}.
\EndFor
\State Do the data postprocessing given in Proposition \ref{prop:Tkglobal} for $T_k$.
\end{algorithmic}
\end{algorithm}

\begin{proposition}\label{prop:Tkglobal}
For a $k$-partite state $\rho$, the CRO $T_k$ defined in Eq.~\eqref{eq:Tk definition}, can be evaluated by postprocessing the measurement data, i.e., averaging the multiplication of the total and the marginal probabilities under the random unitary evolution as follows:
\begin{equation}\label{eq:core eq correlation}
\begin{aligned}
T_k=\sum_{s,s'}\sub{\mbb{E}}{U}\left[\mathrm{Pr}(s|U)\prod_{i=1}^kX_{\mathop{g}_i}(s_{\mathop{g}_i},s_{\mathop{g}_i}')\ \mathrm{Pr}(s_{\mathop{g}_i}'|U_{g_i})\right],
\end{aligned}
\end{equation}
with the function
\begin{equation} \label{eq:Xg}
\begin{aligned}
%X_{\mathop{g}}=d_{\mathop{g}}(-d_{\mathop{g}})^{\delta_{s_{\mathop{g}},s_{\mathop{g}}'}-1},
X_{\mathop{g}_i}(s_{\mathop{g}_i},s_{\mathop{g}_i}')=-(-d_{\mathop{g}_i})^{\delta_{s_{\mathop{g}_i},s_{\mathop{g}_i}'}},
\end{aligned}
\end{equation}
where $d_{\mathop{g}_i}$ is the dimension of the $i$-th subsystem $g_i$, and $\mbb{E}_U$ denotes averaging over unitary 2-design ensembles on each subsystem $g_i$ independently. % like Eq.~\eqref{eq:swap construction}.
\end{proposition}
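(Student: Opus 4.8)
The plan is to reduce the $k$-partite claim to a single-subsystem statement about the unitary $2$-design twirl, exploiting that the $U_{g_i}$ are drawn independently and that the weight $X_{g_i}$ couples only the pair $(s_{g_i},s'_{g_i})$. First I would rewrite every probability as a trace of a rotated rank-one projector, $\mathrm{Pr}(s|U)=\tr[\rho\,\bigotimes_i(U_{g_i}^\dagger\ket{s_{g_i}}\bra{s_{g_i}}U_{g_i})]$ and $\mathrm{Pr}(s'_{g_i}|U_{g_i})=\tr[\rho_i\,U_{g_i}^\dagger\ket{s'_{g_i}}\bra{s'_{g_i}}U_{g_i}]$. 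Substituting these into the right-hand side of Eq.~\eqref{eq:core eq correlation}, and regarding $\rho$ together with its marginals as one state $\rho\otimes\rho_1\otimes\cdots\otimes\rho_k$ on the enlarged space $(\bigotimes_i\mc{H}_{g_i})\otimes\mc{H}_{g_1}'\otimes\cdots\otimes\mc{H}_{g_k}'$, the whole expression becomes $\tr[(\rho\otimes\rho_1\otimes\cdots\otimes\rho_k)\bigotimes_i M_{g_i}]$, where I introduce the per-subsystem two-copy operator
\begin{equation*}
M_{g_i}=\sub{\mbb{E}}{U_{g_i}}\sum_{s,s'}X_{g_i}(s,s')\,(U_{g_i}^\dagger\ket{s}\bra{s}U_{g_i})\otimes(U_{g_i}^\dagger\ket{s'}\bra{s'}U_{g_i})
\end{equation*}
acting on $\mc{H}_{g_i}\otimes\mc{H}_{g_i}'$. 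The factorization into $\bigotimes_i M_{g_i}$ is legitimate precisely because the unitaries and the sums decouple across subsystems.

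The central step is to show $M_{g_i}=\sw_{g_i}$, the {\footnotesize SWAP} between $\mc{H}_{g_i}$ and $\mc{H}_{g_i}'$. For this I would invoke the standard $2$-design twirl $\sub{\mbb{E}}{U}[U^{\dagger\otimes2}\,O\,U^{\otimes2}]=a\,\id+b\,\sw$ with $a=\frac{d\tr(O)-\tr(O\sw)}{d(d^2-1)}$ and $b=\frac{d\tr(O\sw)-\tr(O)}{d(d^2-1)}$, specialized to $O=\ket{s}\bra{s}\otimes\ket{s'}\bra{s'}$, for which $\tr(O)=1$ and $\tr(O\sw)=\delta_{s,s'}$. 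Summing against $X_{g_i}(s,s')=-(-d_{g_i})^{\delta_{s,s'}}$ — which equals $d_{g_i}$ on the $d_{g_i}$ diagonal terms and $-1$ on the $d_{g_i}(d_{g_i}-1)$ off-diagonal terms — the coefficient of $\id$ collapses to zero while the coefficient of $\sw$ sums to one. This cancellation is the crux of the argument and the place where the exact analytic form of $X_{g_i}$, rather than any other weight, is essential; I expect checking these two scalar sums to be the only genuine computation in the proof, and hence the main (if mild) obstacle.

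With $M_{g_i}=\sw_{g_i}$ in hand, the proof closes by the multipartite {\footnotesize SWAP} trick: the operator $\bigotimes_i\sw_{g_i}$ is exactly the full swap between $\bigotimes_i\mc{H}_{g_i}$ and $\bigotimes_i\mc{H}_{g_i}'$ once $\mc{H}_{g_i}$ is matched with $\mc{H}_{g_i}'$, so that $\tr[(\rho\otimes\rho_1\otimes\cdots\otimes\rho_k)\bigotimes_i\sw_{g_i}]=\tr[\rho\,(\rho_1\otimes\cdots\otimes\rho_k)]=T_k$. Finally I would remark that the derivation uses only the first two moments of the ensemble, so it holds for any unitary $2$-design rather than just the Haar measure, and that it is manifestly independent of the party number $k$, the subsystems entering only through the product over $i$.
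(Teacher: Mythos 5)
Your proposal is correct and follows essentially the same route as the paper: rewrite the probabilities as traces of rotated projectors, factorize the expression over subsystems, show that the twirled postprocessing operator $\Phi^2(X_{g_i})$ equals the {\footnotesize SWAP} $S_{g_i}$, and finish with $\tr[(\bigotimes_i S_{g_i})(\rho\otimes\bigotimes_i\rho_i)]=T_k$. The only cosmetic difference is that you verify $\Phi^2(X_{g_i})=S_{g_i}$ via the two-parameter ansatz $a\,\id+b\,\sw$ and two scalar sums, whereas the paper carries out the full Weingarten expansion (and separately notes, via $C_{\pi,\sigma}=C_{\sigma,\pi}$, that the adjoint twirl you implicitly use coincides with the twirl itself); both computations are equivalent and your cancellation check is right.
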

The detailed proof is left in Appendix \ref{proof:Tkglobal}. The intuition is that by multiplying the probabilities in the postprocessing, one can \emph{virtually} get a few copies of $\rho$. Then by averaging on the random unitary, one can further generate permutation operators among virtual copies.  

\begin{figure*}[htbp]
    \centering
    \includegraphics[scale=0.3]{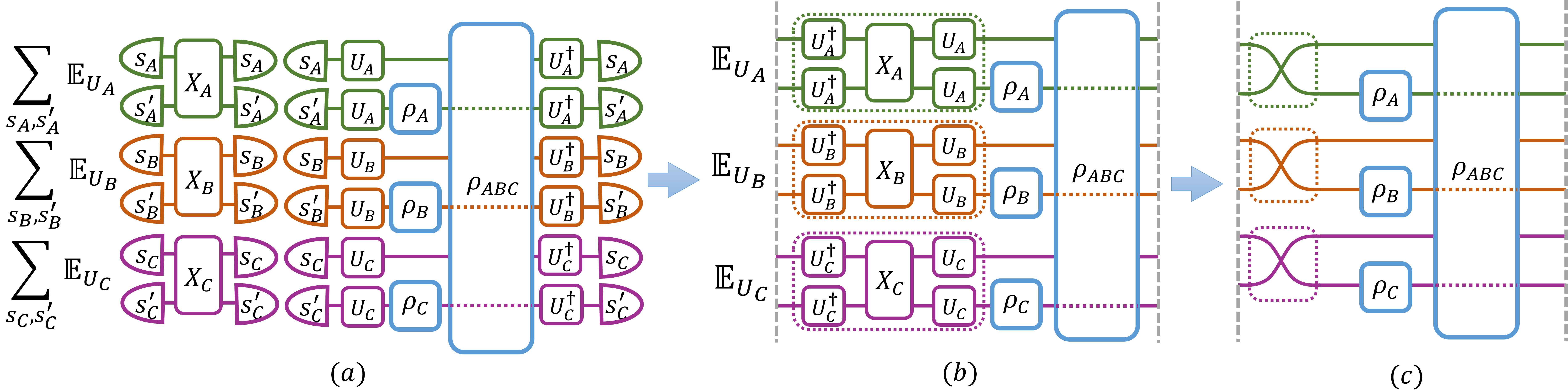}
    \caption{Diagrammatic illustration of the proof of Proposition \ref{prop:Tkglobal}. Here, for simplicity, we take the tripartite state as an example. (a) is the diagram representation of Eq.~\eqref{eq:core eq correlation}. The small half circles with $s,s'$ inside represent the computational basis measurement. The dashed lines indicate that $\rho_{ABC}$ is not connected to $\rho_A\otimes\rho_B\otimes\rho_C$. In (b), we use the cyclic property of the trace formula to put the random unitary evolution on $X_g$, for $g\in \{A,B,C\}$, defined in Eq.~\eqref{eq:Xg}. The vertical gray dashed lines denote the periodic boundary condition, i.e., the trace. The colored dashed boxes are twofold twirling channels acting on $X_g$, which equal to the {\footnotesize SWAP} operators as shown in Eq.~\eqref{eq:SWAP construction}. As a result, we recover $T_3$ in (c) with these {\footnotesize SWAP} operators represented by the ``X''-shape cross in each dashed box, with the formula given in Eq.~\eqref{eq:global obs}.}
    \label{fig:tripartite correlation}
\end{figure*}

Here we sketch the proof outline using Fig.~\ref{fig:tripartite correlation} for the $T_3$ of the tripartite state $\rho_{ABC}$. In Fig.~\ref{fig:tripartite correlation}(a), the conditional probabilities are multiplied, and the box labels the classical function $X_g(s_g,s_{g'})$ with $g\in\{A,B,C\}$ for the subsystem. In Fig.~\ref{fig:tripartite correlation}(b), by using the cyclic property of the trace, we can effectively put the random unitary  on the operator $X_g=\sum_{s_g,s_g'}X_g(s_g,s_{g}')\ket{s_g,s_{g}'}\bra{s_g,s_{g}'}$. In Fig.~\ref{fig:tripartite correlation}(c), we average on the unitary ensemble to generate a {\footnotesize SWAP} operator by the identity \cite{Elben2018Random,Elben2019toolbox},
\begin{equation}\label{eq:SWAP construction}
\begin{aligned}
\Phi^2(X_g):=\sub{\mbb{E}}{U_g\in\mathcal{E}}\left[U_g^{\otimes 2}X_gU_g^{\dagger \otimes 2}\right]=S_g.
\end{aligned}
\end{equation}
We denote this average on the two-copy Hilbert space as the ``twrling'' channel $\Phi^2(\cdot)$. Note that the unitary ensemble $\mathcal{E}$ need not be the Haar measure, and any unitary 2-design ensemble (such as the Clifford group \cite{Divincenzo2002quantum,dankert2009exact}) is sufficient,  which is more practical compared with the previous work by some of us \cite{singlezhou}. If one naively generalizes the protocol there to measure the $k$-partite CRO, a $(k+1)$-design ensemble is needed. Unitary $t$-design with $t\geq4$ is still poorly understood \cite{zhu2016clifford}, and the generation of these ensembles would need deep quantum circuit, which is quite impractical compared to the current protocol.

%\red{ Here, we are the first to propose the protocol measuring the fidelity between state and its marginal, utilizing the marginal probability distributions. Our protocol can be easily generalized to local version, as discussed in next subsection, and measuring higher moments. Our scheme needs less state sample compared to directly applying the former postprocessing \cite{Elben2020Cross}, where the samples to estimate global and marginal probability should be distinct ones.  }
In real experiments, the sampling time $N_U$ and the measurement time $N_M$ as shown in Algorithm \ref{algo:correlation measure} are both finite; the postprocessing will be more delicate compared to Eq.~\eqref{eq:core eq correlation} which corresponds to the case where $N_U$ and $N_M$ are infinite. We show how to construct an unbiased estimator for the scenario with finite $N_U$ and $N_M$ in Sec.~\ref{stat}.
We remark that our protocol measures the fidelity of the state to its marginal, not with an unrelated state as in Ref.~\cite{Elben2020Cross}.   By adequately utilizing the marginal distributions, our postprocessing shown in Sec.~\ref{stat} needs less state samples and thus is more efficient than directly applying the former one, say Ref.~\cite{Elben2020Cross}. %where the samples to estimate global and marginal probability should be distinct ones. 
Our protocol can be further generalized to local gate version, as discussed in the next section.

\textbf{Local Measurement Protocol} -- We can further simplify the procedure above so that it makes use of only single-qubit random gates. Specifically, the global measurement protocol involves the need to sample random unitary on each subsystem, which may contain several qubits. This is challenging even for the moderate subsystem size. In contrast, the following local measurement protocol only involves random single-qubit Pauli measurement.

Recall that the essence of the global measurement protocol is to construct ``virtual'' {\footnotesize SWAP} operators across different copies in Eq.~\eqref{eq:global obs} by data postprocessing shown in Proposition \ref{prop:Tkglobal}. In fact, {\footnotesize SWAP} operator is factorizable. The big {\footnotesize SWAP} operator $S$ acting on $n$-qubit pairs $\mathcal{H}_2^{\otimes n}\otimes\mathcal{H}_2^{\otimes n}$, %where $n$ labels the number of qubits, 
can be decomposed as $S=\bigotimes_{l=1}^nS_l$,
with $S_l$ the small {\footnotesize SWAP} operator for the $i$-th qubit pair (see Fig.~\ref{fig:local method} for an illustration). This fact enlightens us to substitute the random unitary, say $U_A$ (also $U_B$ and $U_C$) in Fig.~\ref{fig:tripartite correlation}, to the tensor product form $U_A=\bigotimes_{l=1}^{n_A}U_l$, where each single-qubit unitary $U_l$ is from the 2-design ensemble independently, similar for other subsystems $B$ and $C$. Correspondingly, the postprocessing function $X_{\mathop{g}}(s_{\mathop{g}},s_{\mathop{g}}')$ in Eq.~\eqref{eq:Xg} is modified to the multiplication of local functions as shown in Eq.~\eqref{eq:Xglocal}.

\comments{
that the global {\footnotesize SWAP} operator can be constructed by averaging over qubit unitary two design:
\begin{equation}\label{eq:qubit swap}
\begin{aligned}
\bigotimes_{i=1}^NS_i&=\bigotimes_{i=1}^N\sub{\mbb{E}}{U_i}\left[U_i^{\otimes 2}\left(\sum_{s_i,s_i'}2(-2)^{-D[s_i,s_i']}|s_i,s_i'\rangle\langle s_i,s_i'|\right)U_i^{\dagger\otimes 2} \right]\\
&=\sub{\mbb{E}}{U}\left[U^{\otimes 2}\left(\sum_{\vec{s},\vec{s}'}2^N(-2)^{-D[\vec{s},\vec{s}']}|\vec{s},\vec{s}'\rangle\langle\vec{s},\vec{s}'|\right)U^{\dagger\otimes 2}\right] .
\end{aligned}
\end{equation}
}

For the general $k$-partite state $\rho$, suppose it contains $n=\sum_{i=1}^k n_{g_i}$ qubits with $i$-th party having $n_{g_i}$ qubits. We denote the computational basis as the $n$-bit binary vector $|\vec{s}\rangle=|s_1,s_2,\dots,s_n\rangle,s_l=0/1$, and the state restricted on the $i$-th party as $\ket{\vec{s}_{\mathop{g}_i}}$.
By modifying the global protocol in Algorithm \ref{algo:correlation measure}, the local measurement protocol of CRO $T_k$ is shown in Algorithm \ref{algo:local}, and now we aim to obtain the following conditional probability
\begin{equation}\label{eq:localCprob}
\begin{aligned}
\mathrm{Pr}\left(s_1,s_2,\dots,s_n\Bigg|\bigotimes_{l=1}^{n}U_l\right)
\end{aligned}
\end{equation}
and its marginals $\mathrm{Pr}(\vec{s}_{\mathop{g}_i}|U_{g_i})$ with $U_{g_i}=\bigotimes_{l\in g_i} U_l$. The data postprocessing is summarized in Proposition \ref{prop:Tklocal}.
\begin{proposition}\label{prop:Tklocal}
Given a $k$-partite state $\rho$, the CRO $T_k$ defined in Eq.~\eqref{eq:Tk definition}, can be evaluated by postprocessing the measurement data, i.e., averaging the multiplication of the total and the marginal probabilities under the single-qubit random unitary evolution as follows.
\begin{equation}\label{eq:Tklocal}
\begin{aligned}
%T_3=\sum_{\vec{s},\vec{s}'}\sub{\mbb{E}}{U}\left[\mathrm{Pr}(\vec{s}_A,\vec{s}_B,\vec{s}_C|U_A,U_B,U_C)\prod_{g=A,B,C}\tilde{X}_{g}(\vec{s}_g,\vec{s}'_g)\ \mathrm{Pr}(\vec{s}'_g|U_g)\right]
T_k=\sum_{\vec{s},\vec{s}'}\sub{\mbb{E}}{U}\left[\mathrm{Pr}(\vec{s}|U)\prod_i^k\tilde{X}_{\mathop{g}_i}(\vec{s}_{\mathop{g}_i},\vec{s}_{\mathop{g}_i}') \mathrm{Pr}(\vec{s}_{\mathop{g}_i}'|U_{g_i})\right]
%&=\sum_{\vec{s},\vec{s}'}2^n(-2)^{-D[\vec{s},\vec{s}']}\sub{\mbb{E}}{U}\left[\mathrm{Pr}(\vec{s}|U)\prod_{g=A,B,C}\mathrm{Pr}(\vec{s}_{\mathop{g}_i}'|U_i)\right]
\end{aligned}
\end{equation}
with the function
\begin{equation} \label{eq:Xglocal}
\begin{aligned}
\tilde{X}_{\mathop{g}_i}(\vec{s}_{\mathop{g}_i},\vec{s}_{\mathop{g}_i}'):= \prod_{l\in g_i} X_{l}(s_l,s'_l)%= 2^{N_g} (-2)^{-D[\vec{a}^1,\vec{a}^2]} 
\end{aligned}
\end{equation}
where $X_{l}$ is defined in Eq.~\eqref{eq:Xg} with $d_l=2$, and $\mbb{E}_U$ denotes averaging $U=\bigotimes_{l=1}^{n}U_l$ over unitary 2-design ensembles on each qubit independently. %$D[\vec{s},\vec{s}']=\sum_{i=1}^n(1-\delta_{s_i,s_i'})$ is the Hamming distance between $\vec{s}$ and $\vec{s}'$. 
\end{proposition}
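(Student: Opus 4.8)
The plan is to reuse the entire architecture of the proof of Proposition~\ref{prop:Tkglobal} and to isolate the single place where the choice of unitary ensemble entered, namely the twirling identity Eq.~\eqref{eq:SWAP construction}. Concretely, I would first define, for each qubit $l$, the diagonal two-copy operator $X_l=\sum_{s_l,s_l'}X_l(s_l,s_l')\ket{s_l,s_l'}\bra{s_l,s_l'}$ on $\mc{H}_2\otimes\mc{H}_2$, so that the local postprocessing function in Eq.~\eqref{eq:Xglocal} is exactly the diagonal matrix element of the tensor-product operator $\tilde{X}_{g_i}=\bigotimes_{l\in g_i}X_l$. With this identification, the sum over $\vec{s},\vec{s}'$ in Eq.~\eqref{eq:Tklocal} is, just as in Fig.~\ref{fig:tripartite correlation}(a)--(b), a trace over two virtual copies in which the probabilities supply $\rho$ and its marginals while the $\tilde{X}_{g_i}$ sit on the diagonal; using the cyclic property of the trace I would move each subsystem unitary $U_{g_i}^{\otimes 2}$ onto $\tilde{X}_{g_i}$. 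Everything downstream of this point --- the reconstruction of the four-copy observable in Eq.~\eqref{eq:global obs} and hence of $T_k$ --- is verbatim the global argument, so the proposition reduces to a single operator identity.

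That identity is the local analogue of Eq.~\eqref{eq:SWAP construction}:
\begin{equation*}
\sub{\mbb{E}}{U_{g_i}}\left[U_{g_i}^{\otimes 2}\,\tilde{X}_{g_i}\,U_{g_i}^{\dagger\otimes 2}\right]=S_{g_i},\qquad U_{g_i}=\bigotimes_{l\in g_i}U_l ,
\end{equation*}
where each $U_l$ is drawn independently from a single-qubit $2$-design. To prove it I would exploit the factorized structure on both sides. Because the $U_l$ are independent and $\tilde{X}_{g_i}$ is a tensor product over the qubits of $g_i$, after regrouping the two copies of each qubit into adjacent pairs the expectation factorizes as $\bigotimes_{l\in g_i}\sub{\mbb{E}}{U_l}\left[U_l^{\otimes 2}X_l U_l^{\dagger\otimes 2}\right]$. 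Each factor is precisely Eq.~\eqref{eq:SWAP construction} specialized to $d_l=2$, hence equals the single-qubit SWAP $S_l$. Finally I would invoke the factorization of the SWAP operator already noted in the text, $\bigotimes_{l\in g_i}S_l=S_{g_i}$, to collect the per-qubit SWAPs into the subsystem SWAP and close the identity.

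I expect the main obstacle to be purely the tensor-factor bookkeeping in this factorization step rather than anything analytic. The operator $U_{g_i}^{\otimes 2}$ naturally lives on $\left(\bigotimes_{l\in g_i}\mc{H}_l\right)^{\otimes 2}$, i.e.\ with the two copies as the outer tensor factor, whereas the per-qubit twirling wants $\bigotimes_{l\in g_i}(\mc{H}_l\otimes\mc{H}_l)$, with the two copies grouped qubit by qubit; I would need to make explicit the fixed permutation of Hilbert-space factors that intertwines these two orderings and check that it commutes with both $\tilde{X}_{g_i}$ (which is diagonal and symmetric under it) and with the identification $\bigotimes_l S_l=S_{g_i}$. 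Once this reordering is written down carefully, independence of the $U_l$ does the rest, and the only remaining verification is the scalar single-qubit check that the diagonal entries $X_l(s_l,s_l')=-(-2)^{\delta_{s_l,s_l'}}$ are exactly the coefficients that Eq.~\eqref{eq:SWAP construction} produces at $d_l=2$, which is the content already used for a single party in the global protocol.
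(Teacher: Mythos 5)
Your proposal is correct and follows essentially the same route as the paper's own proof: both substitute the diagonal operator $\tilde{X}_{g_i}=\bigotimes_{l\in g_i}X_l$ into the postprocessing formula, move the unitaries onto the $X$ operators by cyclicity of the trace, factorize the expectation over the independent single-qubit unitaries into per-qubit twirls each equal to $S_l$ via Eq.~\eqref{eq:SWAP construction} at $d_l=2$, and reassemble $\bigotimes_l S_l=S_{g_i}$ to recover $T_k$. The tensor-factor reordering you flag is indeed the only bookkeeping issue, and the paper handles it implicitly in exactly the way you describe (together with the observation that the adjoint twirl equals the twirl for a $2$-design).
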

Proposition \ref{prop:Tklocal} can be proved following the proof of Proposition \ref{prop:Tkglobal}, and the proof is left to Appendix \ref{proof:Tklocal}.

\comments{
\begin{proposition}
\begin{equation}\label{eq:qubit measurement}
\begin{aligned}
T_3=\sum_{\vec{s},\vec{s}'}2^N(-2)^{-D[\vec{s},\vec{s}']}\sub{\mbb{E}}{U}\left[P_{ABC}(\vec{s}_A,\vec{s}_B,\vec{s}_C|U_A,U_B,U_C)P_A(\vec{s}_A'|U_A)P_B(\vec{s}_B'|U_B)P_C(\vec{s}_C'|U_C)\right].
\end{aligned}
\end{equation}
Where $\vec{s}=(\vec{s}_A,\vec{s}_B,\vec{s}_C)$, $\vec{s}'=(\vec{s}_A',\vec{s}_B',\vec{s}_C')$, $|\vec{s}\rangle$ is the computational basis of this N-qubit Hilbert space. $\sub{\mbb{E}}{U}$ denotes averaging over N qubit unitary 2-designs.
\end{proposition}
}

\begin{algorithm}[H]
\caption{Local Measurement Protocol for $T_k$}\label{algo:local}
\begin{algorithmic}[1]
\Require
$N_U\times N_M$ sequentially prepared $\rho$
\Ensure
Probability distribution of the measurement outcomes conditioned on the evolution unitary $\mathrm{Pr}\left(s_1,s_2,\dots,s_n\Bigg|\bigotimes_{l=1}^{n}U_l\right)$ in Eq.~\eqref{eq:localCprob}.

\For{$i= 1~\text{\textbf{to}}~N_U$} %\Comment{Enumerating all the locations}
 \State Randomly pick a unitary matrix $U=\bigotimes_{l=1}^{n}U_l$, with each $U_l\in \mc{H}_{l}$ on the $l$-th qubit sampled uniformly from the unitary 2-design ensemble.
 \State Operate $U$ on $\rho$ to get $U \rho U^\dag$. 
 \For{$j= 1~\text{\textbf{to}}~N_M$} 
  \State  Measure $U\rho U^\dagger$ in the computational basis $\{\ket{s}=|s_1,s_2,\dots,s_n\rangle$.
  \State Record the measurement results.
  \EndFor
 \State Estimate the probability and its marginals in Eq.~\eqref{eq:localCprob}.
\EndFor
\State Do the data postprocessing given in Proposition \ref{prop:Tklocal} for $T_k$.
\end{algorithmic}
\end{algorithm}

Besides the practicality of the local measurement protocol, another advantage is that the measurement procedure and the postprocessing procedure are decoupled. In particular, one can choose to study the correlation for any partition of the system or the correlation information restricted on some subsystems in parallel, by only changing the postprocessing function.

\begin{figure}[htbp]
    \centering
    \includegraphics[scale=0.23]{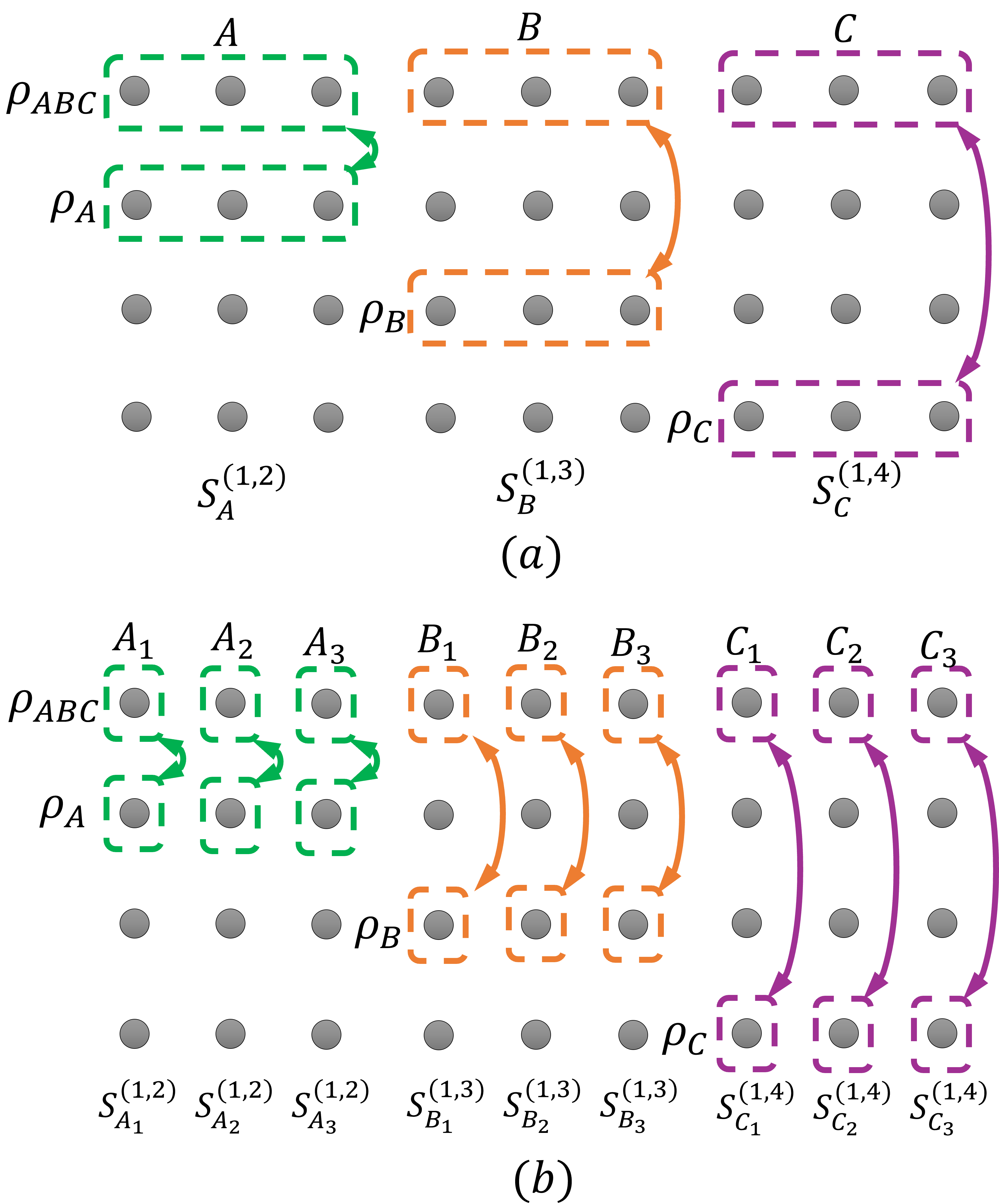}
    \caption{Here we provide a graphical example explaining how the global and local methods work. $\rho_{ABC}$ is an nine-qubit three-partite quantum state with each party containing three qubits. The curved lines with arrows in both ends denote {\footnotesize SWAP} operators. In global method (a), three-qubit {\footnotesize SWAP} operators are constructed using three-qubit random unitary matrix $U_A$, $U_B$, and $U_C$.   While in local method (b), the three-qubit {\footnotesize SWAP} operators are decomposed into a tensor product of single-qubit {\footnotesize SWAP} operators, which can be constructed using single-qubit random unitary gates. Each colored dashed box represents a unitary 2-design.}
    \label{fig:local method}
\end{figure}

\section{Statistical analysis}\label{stat}
In practical situations, the sampling times of the random unitary matrices $N_U$, and the number of the projective measurement $N_M$ under a given unitary are both finite. The multiplication of them, $N_UN_M$, quantifies how many copies of $\rho$ in total one needs to prepare in sequence. 
In this section, for clarity, we focus on the tripartite CRO in Eq.~\eqref{eq:T3} and construct an unbiased estimator for it. We then analyze the variance of the estimator in this finite sampling scenario. The scaling of the variance respective to $N_U$, $N_M$ and $D$ characterizes the sample complexity of our protocol. Similar analysis works for the cases of general $k$-partite CRO. We use $\ket{s}=\ket{s_A,s_B,s_C}$ to denote the computational basis of $\mathcal{H}=\mc{H}_A\otimes \mc{H}_B\otimes \mc{H}_C$, where the total dimension is $D=d_A d_B d_C$. In the following, we first illustrate the estimation protocol with global unitary evolution, and then proceed to show how this can be converted to the one with the single-qubit measurements.

To construct an unbiased estimator, we first note that the postprocessing expression in Eq.~\eqref{eq:core eq correlation} can be equivalently written as $4$-time multiplication of the probability distribution, %copies of the states $\mc{H}^{\otimes 4}$,
\begin{equation} \label{eq:core eq correlation2}
\begin{aligned}
T_3 = \sum_{ \bm{s}_A,\bm{s}_B,\bm{s}_C } &X^{(1,2)}_A(s^1_A,s^2_A) X^{(1,3)}_B(s^1_B,s^3_B)  X^{(1,4)}_C(s_C^1,s_C^4)\\ 
&\times\sub{\mbb{E}}{U}\left[\prod_{i=1}^4 \pr(s_A^i, s_B^i, s_C^i|U_A, U_B, U_C)\right],
\end{aligned}
\end{equation}
where we denote $\bm{s}_A=(s_A^1,s_A^2,s_A^3,s_A^4)$ as a 4-dit string with $s_A^i\in \{0,1,\dots,d_A-1\}$, similar for $\bm{s}_B$ and $\bm{s}_C$. $X^{(1,2)}_A(s_A^1,s_A^2)$ is the function in Eq.~\eqref{eq:Xg} restricted on the first two indices, %$\mc{H}_{A}^{1}\otimes \mc{H}_{A}^{2}$,
similar for $X_B^{(1,3)}(s_B^1,s_B^3)$ and $X_C^{(1,4)}(s_C^1,s_C^4)$. %We remark that, all the subscripts occur on the letters $a$, $b$, $c$, and $s$ indicates different copies, not the power of the values.

In Algorithm~\ref{algo:correlation measure}, one samples $N_U$ times of $U=U_A\otimes U_B\otimes U_C$ to perform experiments. For the $t$-th round of unitary sampling, one repeats the preparation and measurement for $N_M$ times. For the $i$-th time of measurement, we define a matrix-valued random variable 
\begin{equation}
    \hat{r}_U(i)=\ket{\hat{s}_U(i)}\bra{\hat{s}_U(i)}, %\text{ with } %s^i=(a^i,b^i,c^i),
\end{equation}
where $\hat{s}_U(i)$ is a classical random variable with the conditional probability
\begin{equation}
    \pr(s|U)=\bra{s}U\rho_{ABC} U^{\dag}\ket{s},
\end{equation}
to record the measurement result $s$. For each random unitary choice, one finally gets $N_M$ independent samples $\{\hat{r}_U(i)\}_{i=1}^{N_M}$. We then construct an estimator for $T_3$ as follows
\begin{widetext}
\begin{equation} \label{eq:Mt}
\begin{aligned}
    \hat{M}(t) &= \binom{N_M}{4}^{-1} \sum_{1\leq i<j<k<l \leq N_M} \tr\left\{ Q_3 \left[\hat{r}_U(i) \otimes \hat{r}_U(j) \otimes \hat{r}_U(k) \otimes \hat{r}_U(l)\right] \right\} \\
    &= \binom{N_M}{4}^{-1} \sum_{1\leq i<j<k<l\leq N_M}  X^{(1,2)}_A\left(\hat{s}_U(i),  \hat{s}_U(j)\right)   X^{(1,3)}_B\left( \hat{s}_U(i), \hat{s}_U(k) \right)  X^{(1,4)}_C\left( \hat{s}_U(i), \hat{s}_U(l) \right), \\
\end{aligned}
\end{equation}
\end{widetext}
with 
\begin{equation}\label{eq:processQ}
    Q_3 := \left( X_A^{(1,2)}\otimes I_A^{(3,4)} \right) \otimes \left( X_B^{(1,3)}\otimes I_B^{(2,4)} \right) \otimes \left( X_C^{(1,4)}\otimes I_C^{(2,3)} \right)
\end{equation}
being an observable on $\mc{H}^{\otimes 4}$. $\hat{M}(t)$ is an unbiased estimator in the sense that $\mathbb{E}_{U,\mb{s}} \left[\hat{M}(t) \right]=T_3$, with the expectation value taken for all random $U$ and measurement outputs. Since the estimators $\{\hat{M}(t)\}_{t=1}^{N_U}$ are 
independent and identically distributed, the final estimator is defined as $\hat{M}=\frac1{N_U} \sum_{t=1}^{N_U} \hat{M}(t)$, which is naturally unbiased. 

In the local measurement protocol, the unitaries on subsystems $A$, $B$ and $C$ are substituted to products of the random unitaries on qubits. To construct the unbiased estimator for the local protocol, accordingly the postprocessing matrix $Q$ in Eq.~\eqref{eq:processQ} should be adjusted to 
\begin{equation}
    Q_{3,\mathrm{loc}} := \left( \tilde{X}_A^{(1,2)}\otimes I_A^{(3,4)} \right) \otimes \left( \tilde{X}_B^{(1,3)}\otimes I_B^{(2,4)} \right) \otimes \left( \tilde{X}_C^{(1,4)}\otimes I_C^{(2,3)} \right)
\end{equation}
with $\tilde{X}_A^{(1,2)}=\bigotimes_{i=1}^{n_A} X_{A_i}^{(1,2)}$ the product of the qubitwise $X$ operator. Similar as in Eq.~\eqref{eq:Mt}, one can construct the final unbiased estimator $\hat{M}_L=\frac1{N_U} \sum_{t=1}^{N_U} \hat{M}_L(t)$.

To construct the unbiased estimator for $T_k$, one just needs to extend the definition of $Q_3$ and $Q_{3,\mathrm{loc}}$ to the $k$-partite scenario
\begin{equation}
\begin{aligned}
Q_k &:= \bigotimes_{i=2}^{k+1}\left(X_{g_i}^{(1,i)}\otimes I_{g_i}^{\overline{(1,i)}}\right),\\
Q_{k,\mathrm{loc}} &:= \bigotimes_{i=2}^{k+1}\left(\tilde{X}_{g_i}^{(1,i)}\otimes I_{g_i}^{\overline{(1,i)}}\right),
\end{aligned}
\end{equation}
where $\overline{(1,i)}$ is the complementary set of $(1,i)$ of $\{1,2,\dots,k+1\}$.
We further give the following result on the variance of these constructed estimators for $T_k$.
\begin{proposition}\label{prop:var}
In the regime $D\gg N_M\gg k$, the variance of the unbiased estimators $\hat{M}$ and $\hat{M}_L$ for the $k$-partite CRO show the following scaling:
\begin{equation} \label{eq:varscale}
\begin{aligned}
&\mathrm{Var}\left(\hat{M}\right) =\Theta(\frac{D}{N_UN_M^{k+1}}),\\
&\mathrm{Var}\left(\hat{M}_L\right) =O(\frac{D^{\log_23}}{N_UN_M^{k+1}}),
\end{aligned}
\end{equation}
where $\hat{M}$ and $\hat{M}_L$ are constructed with the measurement data from the protocols in Algorithm \ref{algo:correlation measure} and Algorithm \ref{algo:local}, respectively.
\end{proposition}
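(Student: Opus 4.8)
The plan is to exploit that $\hat M=\frac1{N_U}\sum_{t=1}^{N_U}\hat M(t)$ is an average of $N_U$ i.i.d.\ single-round estimators, so that $\mathrm{Var}(\hat M)=\frac1{N_U}\mathrm{Var}(\hat M(t))$ and it suffices to analyze one round (identically for $\hat M_L$). Within a round the unitary $U$ is fixed and the $N_M$ outcomes $\{\hat r_U(i)\}$ are i.i.d.\ draws from $\pr(\cdot|U)$, so $\hat M(t)$ is a degree-$(k+1)$ $U$-statistic whose kernel is $\tr[Q_k\,\hat r_U(i_1)\otimes\cdots\otimes\hat r_U(i_{k+1})]$ (and $Q_{k,\mathrm{loc}}$ for the local case). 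I would then compute $\mathrm{Var}(\hat M(t))=\mathbb{E}[\hat M(t)^2]-T_k^2$ by expanding the square of the $U$-statistic and taking the joint expectation over the measurement outcomes and over the $2$-design ensemble.

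First I would organize $\mathbb{E}[\hat M(t)^2]$ as a double sum over pairs of $(k+1)$-subsets $(\sigma,\tau)$ of $\{1,\dots,N_M\}$, grouped by their overlap $c=|\sigma\cap\tau|\in\{0,1,\dots,k+1\}$. Standard $U$-statistic counting shows that the normalized multiplicity of the overlap-$c$ class scales as $N_M^{-c}$ (times a $k$-dependent combinatorial constant) in the regime $N_M\gg k$. For each class I would evaluate the moment $g_c=\mathbb{E}_{U,\mathbf{s}}[h_\sigma h_\tau]$: the measurement expectation turns products of the diagonal functions $X_{g_i}$ (resp.\ $\tilde X_{g_i}$) into sums of marginal collision probabilities, after which the $2$-design average is carried out with the twirl identity already used in Eq.~\eqref{eq:SWAP construction}. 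Because the kernel $Q_k$ is asymmetric — the ``full-state'' copy (index $1$) enters every $X_{g_i}$ while each ``marginal'' copy enters only one — each overlap class $c$ splits into sub-cases according to whether the shared indices include the special copies of $\sigma$ and $\tau$; these sub-cases must be enumerated and their moments bounded separately.

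The heart of the argument is to show that in the regime $D\gg N_M\gg k$ the full-overlap term $c=k+1$ dominates. I would compute $g_{k+1}=\mathbb{E}[h_\sigma^2]$ directly: squaring $X_{g_i}$ produces $d_{g_i}^{2\delta}$, whose measurement average yields $(d_{g_i}^2-1)\,p_{U,g_i}+1$, and the $2$-design average of the resulting collision probability gives a per-subsystem factor $d_{g_i}+(d_{g_i}-1)\tr(\rho_{g_i}^2)$, which is of order $d_{g_i}$ (equal to $2d_{g_i}-1$ for pure marginals). Multiplying over the $k$ parties gives $g_{k+1}=\Theta(D)$, and together with the $N_M^{-(k+1)}$ multiplicity and the $1/N_U$ prefactor this yields $\mathrm{Var}(\hat M)=\Theta(D/(N_UN_M^{k+1}))$. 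For the local protocol the same computation factorizes over qubits: each qubit ($d=2$) contributes the factor $2+\tr(\rho_l^2)\le 3$ instead of $\Theta(d_{g_i})$, so taking the product over all $n=\log_2 D$ qubits replaces $D$ by at most $3^{\,n}=D^{\log_2 3}$. Since this per-qubit factor only saturates at $3$ (pure qubit marginals), one obtains the upper bound $O(D^{\log_2 3}/(N_UN_M^{k+1}))$ rather than a two-sided estimate, which explains the $\Theta$ versus $O$ distinction.

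The main obstacle will be establishing this dominance rigorously, i.e.\ showing that the lower-overlap contributions — in particular the $c=0$ term $g_0-T_k^2=\mathrm{Var}_U(f(U))$, which is $N_M$-independent because it measures the intrinsic fluctuation of the single-shot conditional mean $f(U)=\mathbb{E}[\hat M(t)|U]$ across the ensemble — are subdominant to the $\Theta(D/N_M^{k+1})$ full-overlap term. This requires bounding each $g_c$ and exploiting $D\gg N_M$ so that the large leading moment $g_{k+1}=\Theta(D)$ overcomes the extra $N_M$-suppression, and it is where the $2$-design estimates are most delicate (note that $f(U)^2$ involves fourth moments of the measurement probabilities, which are not fixed by the $2$-design property, so these terms must be bounded rather than computed exactly). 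A secondary difficulty is the careful bookkeeping of the asymmetric-kernel sub-cases and, for the local bound, verifying that the qubitwise factorization indeed produces the clean $3^{\,n}$ scaling rather than cross terms that would spoil the $D^{\log_2 3}$ rate.
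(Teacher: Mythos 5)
Your proposal follows essentially the same route as the paper's proof: reduce to the single-round variance, expand $\mathbb{E}[\hat M(t)^2]$ over pairs of index subsets classified by their overlap (with sub-cases for whether the shared indices include the distinguished first copy), show the full-overlap class dominates via $\Phi^{(2)}[(X_{g_i})^2]=d_{g_i}I+(d_{g_i}-1)S$ giving $\Theta(D/N_M^{k+1})$, and bound the local case qubit-wise by $\bigotimes_l(2I_l+S_l)$ yielding $3^n=D^{\log_2 3}$. The only notable difference is that where you propose to bound the lower-overlap moments (correctly observing they are not fixed by a $2$-design), the paper instead assumes a $4$-design and evaluates them exactly for pure tensor-product states via Weingarten calculus, arguing that case is extremal.
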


For the global random unitary case, we rigorously prove that the variance scales linearly with $D$, $N_U^{-1}$, and $N_M^{-(k+1)}$; while for the local random unitary case, we provide an upper bound on the scaling. No matter in the global or the local case, such error scaling is much better than full tomography \cite{haah2017sample}. Inaddition, the variance decreases when increasing the party number $k$, which is equivalent to the number of virtual copies of state. %Such property is an important advantage compared with shadow estimation \cite{huang2020predicting}. 
In Appendices \ref{app:variance1} and \ref{app:variance2}, we provide a detailed analysis of the statistical variance.

To support our theoretical analysis, we conduct numerical experiments for the local protocol, i.e., the random unitary matrix applied is the tensor product of the random qubit ones. The numerical results are shown in Fig.~\ref{fig:numerical data}. In Fig.~\ref{fig:numerical data}(a)-\ref{fig:numerical data}(c), we choose the tripartite Greenberger-Horne-Zeilinger (GHZ) state, with an equal qubit number in each party, as the target state. The exact value $T_3=0.125$ is independent of the qubit number, so that the variance itself is suitable to quantify the quality of the estimation result. We first show how the variance changes with $N_U$ when measuring a three-qubit GHZ state for different $N_M$ in Fig.~\ref{fig:numerical data}(a). These three lines with slopes about $-1$ are coincident with the conclusion in Proposition\ref{prop:var} that $\mathrm{Var}\left(\hat{M}_L\right)\propto N_U^{-1}$. The variance decreases with the increase of $N_M$, the measurement times per unitary evolution. Then, by adjusting the qubit number of the GHZ state and with a fixed $N_M=10$, we also find that the variance increases for larger system dimension in Fig.~\ref{fig:numerical data}(b).  

To study how the variance scales with dimension $D$ when $D\gg N_M$, we change the qubit number of the target GHZ state from 6 to 21 and set $N_U=100$ and $N_M=10$ in Fig.~\ref{fig:numerical data}(c). The slope $\alpha=1.2587$ from the linear regression of $\log_2\left[\mathrm{Var}\left(\hat{M}_L\right)\right]$ and the qubit number $n$. It indicates $\mathrm{Var}\left(\hat{M}_L\right)\propto D^\alpha$ with $\alpha \approx 1.26<\log_23$, which is consistent with our theoretical result in Proposition\ref{prop:var}.  In Fig.~\ref{fig:numerical data}(d), we take a six-qubit noisy $W$ state as an example, show the measurement results for different $N_M$ with $N_U=100$, %Here we choose W state as target state because its real value changes with the ratio of noise. 
and find that our protocol can provide such high-quality measurement results as $N_M \geq 20$.

\begin{figure}[htbp]
\centering

\includegraphics[width=8.5cm]{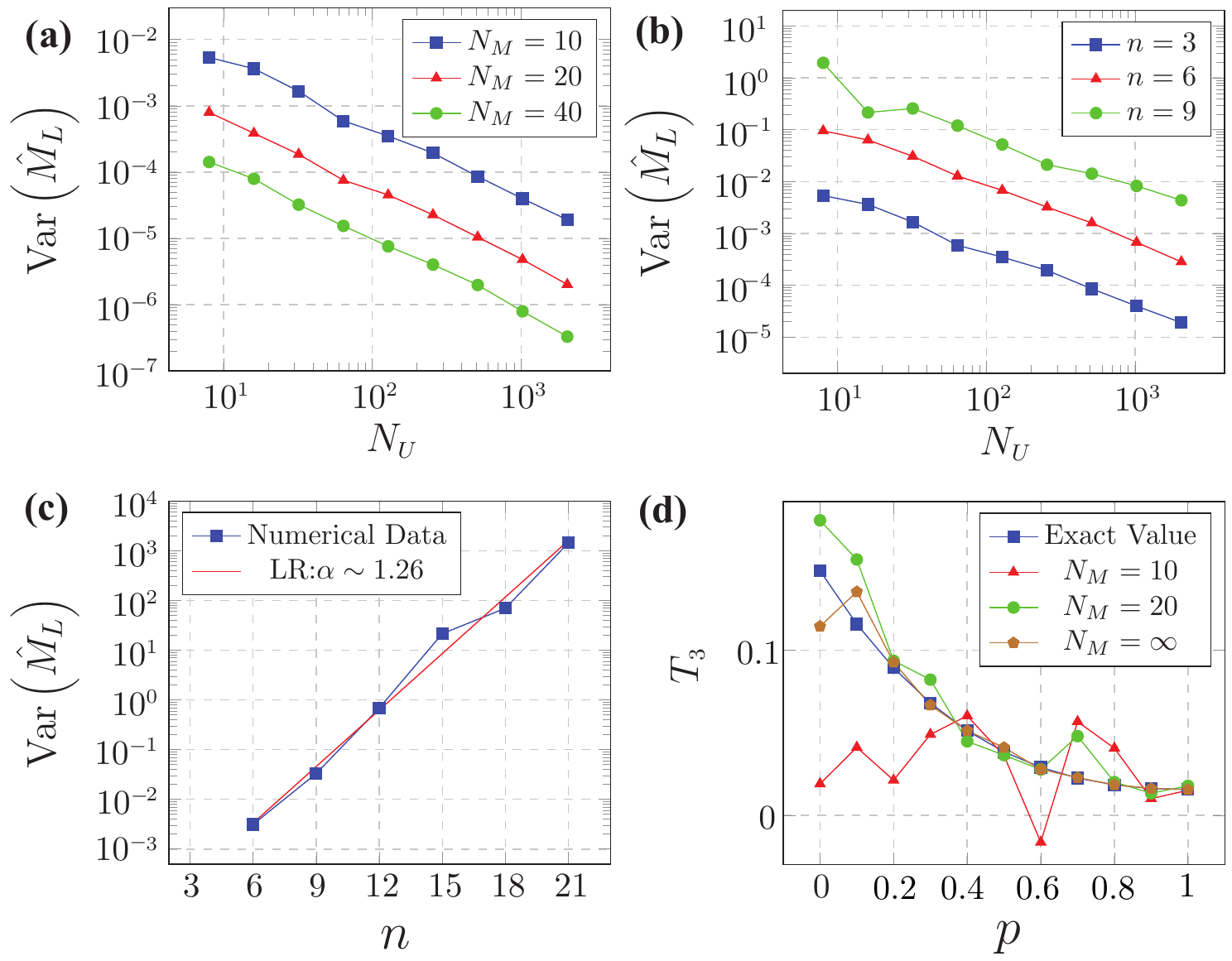}

\caption{Numerical results for the estimation of the tripartite CRO $T_3$ in Eq.~\eqref{eq:T3} with the local measurement protocol. %We set the random unitary matrices to be the tensor product of qubit random unitary matrix which are easily accessible in the real physical platforms. 
(a) The variance scaling with $N_U$ for different $N_M$ when measuring the three-qubit GHZ state. (b) The variance scaling with $N_U$ when measuring the GHZ state with different number of qubits, and $N_M=10$. (c) The variance dependence on the number of qubits of the measured GHZ state with $N_U=100$ and $N_M=10$. We linearly regress the data and obtain the slope $\alpha=1.2587$. (d) The estimation for the noisy state. We measure $T_3$ of a six-qubit noisy $W$ state $\rho_{ABC}=(1-p)|W\rangle\langle W|+\frac{p}{2^6}I$ with $N_U=100$ and different $N_M$.}
\label{fig:numerical data}
\end{figure}

\section{Application to Measuring Fidelity to Maximally Entangled States} \label{sec:fide}
Our protocol can also be modified to measure the fidelity between the candidate bipartite state, and that of a maximally entangled state, 
\begin{equation}
\begin{aligned}
|\Psi^+\rangle=\frac{1}{\sqrt{d}}\sum_{i=1}^d|ii\rangle,
\end{aligned}
\end{equation}
on $\mathcal{H}_A\otimes\mathcal{H}_B$. This fidelity is important in entanglement detection \cite{GUHNE2009detection} and many quantum communication tasks \cite{horodecki2009quantum}. Prior methods need an ideally prepared state \cite{Elben2020Cross} or fixed basis measurements. %\cite{zhou2019decomposition}. 
Utilizing random measurements, the maximally entangled state can be virtually produced by postprocessing, and randomness may make it more robust against the noise in the measurement basis.

Without loss of generality, we consider $d=2^n$ with $n$ being the qubit number of each party. % We believe that this section may give some insights about randomized measurements to the readers.
Recall that the outer product form of the {\footnotesize SWAP} operator is $S=\sum_{i,j=1}^d|i\rangle\langle j|\otimes |j\rangle\langle i|$, and the maximally entangled state is proportional to the partial transpose of $S$ as follows.
\begin{equation}\label{eq:MES and SWAP}
\begin{aligned}
|\Psi^+\rangle\langle\Psi^+|=&\frac{1}{d}\sum_{i,j=1}^d|i\rangle\langle j|\otimes |i\rangle\langle j|\\ %=\frac{1}{d}\sum_{i,j=1}^d|i\rangle\langle j|\otimes |j\rangle\langle i|^T
=&\frac{1}{d}\left(\sum_{i,j=1}^d|i\rangle\langle j|\otimes |j\rangle\langle i|\right)^{T_B}=\frac{1}{d}S^{T_B},
\end{aligned}
\end{equation}
and the corresponding diagrams are shown in Figs.~\ref{fig:random fidelity}(a) and \ref{fig:random fidelity}(b).

Suppose the state we actually produce is $\rho\in\mathcal{H}_A\otimes\mathcal{H}_B$. According to Eq.~\eqref{eq:MES and SWAP}, the fidelity between $\rho$ and $|\Psi^+\rangle$ can be represented using $S$ as
\begin{equation}
\begin{aligned}
%\langle\Psi^+|\rho|\Psi^+\rangle=
\tr\left(\rho|\Psi^+\rangle\langle\Psi^+|\right)=\frac{1}{d}\tr\left(\rho S^{T_B}\right).
\end{aligned}
\end{equation}
Recall that Eq.~\eqref{eq:SWAP construction} shows that the {\footnotesize SWAP} operator can be effectively generated by randomized measurements. Hence the fidelity can be further rewritten as
\begin{equation}\label{eq:random fidelity}
\begin{aligned}
\tr\left(\rho|\Psi^+\rangle\langle\Psi^+|\right)=&\frac{1}{d}\tr\left(\rho[\Phi^2(X)]^{T_B}\right)\\
=&\frac{1}{d}\tr\left\{\rho\sub{\mbb{E}}{U}\left[(U\otimes U)^\dag X(U\otimes U)\right]^{T_B}\right\}\\
%=&\frac{1}{d}\tr\left\{\rho\sub{\mbb{E}}{U}\left[(U\otimes U^*)X(U^\dagger\otimes U^T)\right]\right\}\\
%=&\frac{1}{d}\tr\left\{\sub{\mbb{E}}{U}\left[(U^\dagger\otimes U^T)\rho(U\otimes U^*)\right]X\right\}\\
=&\frac{1}{d}\tr\left\{\sub{\mbb{E}}{U}\left[(U\otimes U^*)\rho(U^\dagger\otimes U^T)\right]X\right\}.
\end{aligned}
\end{equation}
In the final equality, we use the fact that $X$ is a diagonal matrix such that $X^{T_B}=X$, and also the cyclic property of trace to put the unitary evolution on the state (see Fig.~\ref{fig:random fidelity} for an illustration).

Similar as for the local unitary protocol shown in Sec.~\ref{sec:total correlation}, one can apply the local random unitary 
$U=\bigotimes_{i=1}^nU_i$ here. As a result, the postprocessing matrix is substituted by $\tilde{X}(\vec{s}_A,\vec{s}_B)=\prod_{i} X_{i}(s^A_i,s^B_i)=2^n(-2)^{-D[\vec{s}_A,\vec{s}_B]}$ as in Eq.~\eqref{eq:Xglocal}. Note that here the postprocessing function is on the measurement result of the two parties $A$ and $B$, not on the different copies as before. %Substituting the form of $X$, we get the core equation of fidelity measurement protocol. 
We summarize the postprocessing under local random unitary as follows.
%The third equal sign is because $X=\sum_{\vec{s}_A,\vec{s}_B}2^n(-2)^{-D[\vec{s}_A,\vec{s}_B]}|\vec{s}_A,\vec{s}_B\rangle\langle\vec{s}_A,\vec{s}_B|$ is a diagonal matrix, so that $X^{T_B}=X$. Fig.~\ref{fig:random fidelity} provides a graphic illustration of Eq.~\eqref{eq:random fidelity}.
\begin{figure}[htbp]
    \centering
    \includegraphics[scale=0.3]{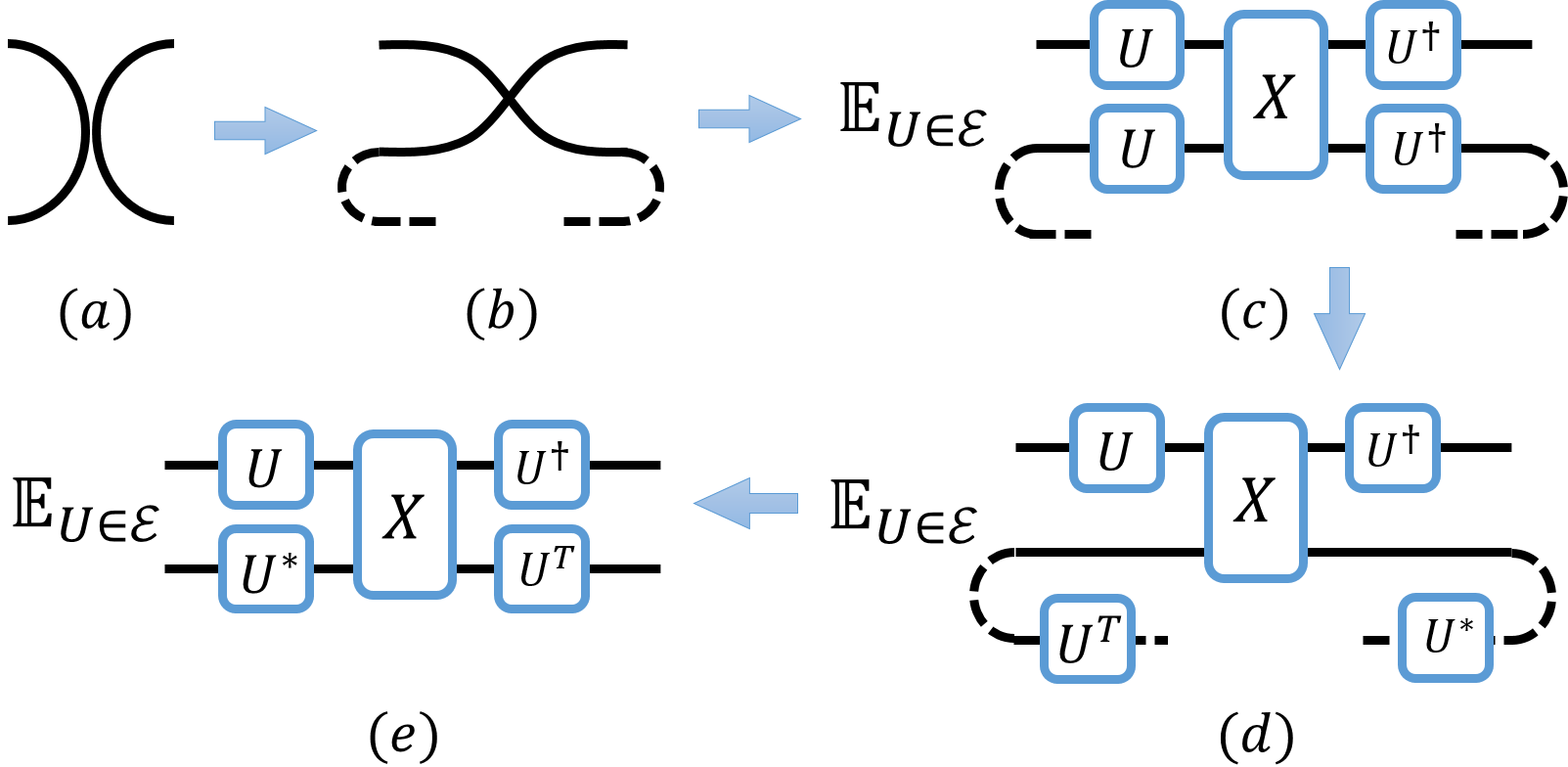}
    \caption{Diagrammatic illustration of the derivation in Eq.~\eqref{eq:random fidelity}. $\mbb{E}_{U\in\mathcal{E}}$ denotes averaging over (qubit) unitary 2-design and the dotted lines represent the transposition on the second subsystem. (a) $d|\Psi^+\rangle\langle\Psi^+|\to$ (b) $S^{T_B}\to$ (c) $[\Phi^2(X)]^{T_B}=\sub{\mbb{E}}{U}\left[(U\otimes U)X(U\otimes U)^\dagger\right]^{T_B}\to$ (d) $\sub{\mbb{E}}{U}\left[(U\otimes U^*)X^{T_B}(U^\dagger\otimes U^T)\right]\to$ (e) $\sub{\mbb{E}}{U}\left[(U\otimes U^*)X(U^\dagger\otimes U^T)\right]$.}
    \label{fig:random fidelity}
\end{figure}

\begin{proposition}  \label{prop:fidelity}

For a bipartite state $\rho_{AB}$ with subsystems A and B both containing $n$ qubits, the fidelity of a state $\rho_{AB}$ with the maximally entangled state $\ket{\Psi^+}$ can be expressed by the following local randomized measurement result:
\begin{equation}
\begin{aligned}
\tr\left(\rho_{AB}|\Psi^+\rangle\langle\Psi^+|\right)=\sum_{\vec{s}_A,\vec{s}_B}(-2)^{-D[\vec{s}_A,\vec{s}_B]}\sub{\mbb{E}}{U}\pr(\vec{s}_A,\vec{s}_B|U),\\
\pr(\vec{s}_A,\vec{s}_B|U)=\tr\left[(U\otimes U^*)\rho(U^\dagger\otimes U^T)|\vec{s}_A,\vec{s}_B\rangle\langle\vec{s}_A,\vec{s}_B|\right],
\end{aligned}
\end{equation}
where $P(\vec{s}_A,\vec{s}_B|U)$ is the probability when measuring  $(U\otimes U^*)\rho(U^\dagger\otimes U^T)$ in the computational basis $\{|\vec{s}_A,\vec{s}_B\rangle\}$;
\comments{
\begin{equation}
\begin{aligned}
\pr(\vec{s}_A,\vec{s}_B|U)=\tr\left[(U\otimes U^*)\rho(U^\dagger\otimes U^T)|\vec{s}_A,\vec{s}_B\rangle\langle\vec{s}_A,\vec{s}_B|\right].
\end{aligned}
\end{equation}
}
the random unitary $U = \bigotimes_{i=1}^{n} U_i$ is a tensor product of unitaries on each qubit, where each $U_i$ is sampled from a unitary $2$-design.
\end{proposition}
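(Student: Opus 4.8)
The plan is to build directly on the global-unitary derivation already assembled in Eq.~\eqref{eq:random fidelity}, and to promote it to single-qubit unitaries by exploiting the factorizability of the {\footnotesize SWAP} operator used in the local protocol of Sec.~\ref{sec:total correlation}. The starting representation is $|\Psi^+\rangle\langle\Psi^+| = \frac{1}{d}S^{T_B}$ from Eq.~\eqref{eq:MES and SWAP}, so that $\tr(\rho_{AB}|\Psi^+\rangle\langle\Psi^+|) = \frac{1}{d}\tr(\rho_{AB}S^{T_B})$, where $S$ is the {\footnotesize SWAP} between parties $A$ and $B$. The only ingredient that must be upgraded relative to the global case is the way $S$ is generated by twirling: rather than a single $d$-dimensional twirl, I will produce $S$ as a tensor product of single-qubit {\footnotesize SWAP}s, each obtained from a single-qubit twirl via Eq.~\eqref{eq:SWAP construction}.

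First I would write $S = \bigotimes_{i=1}^n S_i$ with $n = \log_2 d$, where $S_i$ swaps the $i$-th qubit of $A$ with the $i$-th qubit of $B$, and apply Eq.~\eqref{eq:SWAP construction} qubitwise, $S_i = \sub{\mbb{E}}{U_i}[U_i^{\otimes 2}X_iU_i^{\dagger\otimes 2}]$, with $X_i$ the single-qubit operator of Eq.~\eqref{eq:Xg} at $d_i = 2$ (here $U_i^{\otimes 2}$ means $U_i$ acting on qubit $i$ of each party). Since the twirls on distinct qubit pairs are independent and act on disjoint factors, their tensor product equals one twirl over $U = \bigotimes_{i=1}^n U_i$ applied to $\tilde{X} := \bigotimes_{i=1}^n X_i$, i.e. $S = \sub{\mbb{E}}{U}[(U\otimes U)\tilde{X}(U\otimes U)^\dagger]$. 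Substituting this into $\frac{1}{d}\tr(\rho_{AB}S^{T_B})$ and repeating verbatim the partial-transpose manipulation of Eq.~\eqref{eq:random fidelity} --- namely $[(U\otimes U)\tilde{X}(U\otimes U)^\dagger]^{T_B} = (U\otimes U^*)\tilde{X}^{T_B}(U^\dagger\otimes U^T)$, valid because $T_B$ reverses the order and conjugates the unitaries on the $B$ qubits --- gives $\tr(\rho_{AB}|\Psi^+\rangle\langle\Psi^+|) = \frac{1}{d}\tr\{\sub{\mbb{E}}{U}[(U\otimes U^*)\rho_{AB}(U^\dagger\otimes U^T)]\tilde{X}\}$, where I used that $\tilde{X}$ is diagonal so $\tilde{X}^{T_B} = \tilde{X}$.

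To finish, I would expand the trace in the basis $\{|\vec{s}_A,\vec{s}_B\rangle\}$. Since $\tilde{X}$ is diagonal, only the diagonal entries of $(U\otimes U^*)\rho_{AB}(U^\dagger\otimes U^T)$ contribute, and these are precisely the outcome probabilities $\pr(\vec{s}_A,\vec{s}_B|U)$ in the statement. The attached weight is $\tilde{X}(\vec{s}_A,\vec{s}_B) = \prod_{i=1}^n X_i(s_i^A,s_i^B)$; evaluating Eq.~\eqref{eq:Xg} at $d_i = 2$ gives $X_i = 2$ when $s_i^A = s_i^B$ and $X_i = -1$ otherwise, so over the $n - D[\vec{s}_A,\vec{s}_B]$ agreeing positions and the $D[\vec{s}_A,\vec{s}_B]$ disagreeing ones, $\tilde{X}(\vec{s}_A,\vec{s}_B) = 2^{\,n-D[\vec{s}_A,\vec{s}_B]}(-1)^{D[\vec{s}_A,\vec{s}_B]} = 2^n(-2)^{-D[\vec{s}_A,\vec{s}_B]}$. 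The prefactor $2^n$ cancels $d = 2^n$, leaving the claimed weight $(-2)^{-D[\vec{s}_A,\vec{s}_B]}$ and hence the stated identity.

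I expect the main obstacle to be the careful bookkeeping of the partial transpose through the local twirl: one must verify that $T_B$ factorizes across the qubitwise tensor structure so that the substitution $U\otimes U \mapsto U\otimes U^*$ (with the conjugate on the right) remains valid factor-by-factor, and that the diagonal form of $\tilde{X}$ --- on which the collapse to $\pr(\vec{s}_A,\vec{s}_B|U)$ rests --- is genuinely preserved under $T_B$. One should also be mindful that the two ``copies'' entering the twirl are here the two physical parties $A$ and $B$ rather than distinct state copies. Once these points are settled, the remaining eigenvalue count is elementary.
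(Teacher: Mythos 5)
Your argument is correct and follows essentially the same route as the paper: representing $|\Psi^+\rangle\langle\Psi^+|=\frac{1}{d}S^{T_B}$, generating $S$ by twirling a diagonal postprocessing operator, pushing the partial transpose through the twirl to turn $U\otimes U$ into $U\otimes U^*$, and evaluating the qubitwise weights $\prod_i X_i(s_i^A,s_i^B)=2^n(-2)^{-D[\vec{s}_A,\vec{s}_B]}$ so that $2^n$ cancels $1/d$. The only difference is that you make explicit the factorization $S=\bigotimes_i S_i$ with independent single-qubit twirls, which the paper leaves as a one-line appeal to the local protocol of Sec.~\ref{sec:total correlation}; your swap of $U\leftrightarrow U^{\dagger}$ relative to the paper's convention is harmless since the two-fold twirling channel is self-adjoint.
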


Based on Proposition \ref{prop:fidelity}, we summarize the fidelity measurement protocol in Algorithm~\ref{algo:fidelity measurement}. 
\begin{algorithm}[H]
\caption{Fidelity with Local Measurement}
\label{algo:fidelity measurement}
\begin{algorithmic}[1]
\Require
$N_U\times N_M$ sequentially prepared
$\rho_{AB}$
with subsystems $A$ and $B$ both containing $n$ qubits.
% $N$ 2-design qubit unitary ensembles and their complex conjugate.
\Ensure
Probabilities measured in computational measurement basis under random unitary evolution $\pr(\vec{s}_A,\vec{s}_B|U)$.
\For{$i= 1~\text{\textbf{to}}~N_U$} %\Comment{Enumerating all the locations}
 \State Randomly pick a unitary matrix $U = \bigotimes_{i=1}^{n} U_i$, where $U_i$ on each qubit forms a unitary $2$-design. %Construct the following unitary on the $2n$-qubits: $U\otimes U^*$. 
 Operate $U\otimes U^*$ on $\rho$ to get $(U\otimes U^*)\rho (U^\dagger\otimes U^T)$. 
 \For{$j= 1~\text{\textbf{to}}~N_M$} 
  \State  Measure $(U\otimes U^*)\rho (U^\dagger\otimes U^T)$ in the computational basis $\{|\vec{s}_A,\vec{s}_B\rangle\}$.
  \State Record the measurement results.
  \EndFor
 \State Estimate the probabilities $\pr(\vec{s}_A,\vec{s}_B|U)$.
  
\EndFor
\State Do the data postprocessing according to Proposition \ref{prop:fidelity}.
\end{algorithmic}
\end{algorithm}

In Appendix \ref{sec:concurrence}, we also extend the randomized measurement method to estimate the concurrence \cite{wootters2001entanglement,Beacom2004Spectroscopy} of an $n$-qubit quantum state. This shows the broad application scenarios of the randomized measurements.

\section{Discussion}\label{sec:conclusion}
In this work, we introduce an operationally meaningful quantifier of the total correlation within a multipartite quantum system motivated by experimental accessibility. Based on this definition, we design a protocol to estimate the total correlation of a candidate state using only classical postprocessing of data collected from randomized single-qubit measurements, and show that the number of measurements required is significantly lower than that of the state tomography. Taken together, the results provide an accessible tool for characterizing multipartite correlations in NISQ devices.

%Such a quantifier can show the detailed correlation information of the given state and reveal whether the NISQ devices produce the expected state with a high quality \red{[MG: What are we saying here? How does measuring total correlations let us know is we prepared the expected state?]}. 

There are a number of interesting future directions. One direction involves observing that shadow estimation offers an alternative way to postprocess the measurement data under random unitary evolution \cite{huang2020predicting}. Recently, there are enhancements of the error scaling of the shadow protocol by using prior knowledge of the observable \cite{hadfield2020measurements,huang2021efficient,wu2021overlapped}, or the intrinsic tensor-product structure of the underlying state for the nonlinear function estimation \cite{garcia2021quantum}. It would be interesting to ascertain if these methodologies could provide further enhancement to estimating 
the total correlation measurement here, in situations where one has additional knowledge of the NISQ device \cite{rath2021importance}.

The total correlation has many proposed applications. A recent framework for characterizing fine-grained structure or genuine multipartite correlation, for example, involves measuring how correlation changes depending on how one partitions the whole system \cite{Girolami_2017,Bennett_2011}. Meanwhile, such correlation measure could be used as the cost function in the near-term variational algorithms to decouple the quantum system \cite{yuan2019theory,cerezo2020variational,Khatri2019quantumassisted,zhang2020mutual}. Both scenarios would require many costly repeated calls to estimate the correlation. Thus, a natural direction then is to investigate if our techniques provide the reduction to this cost. Meanwhile, many occasions invoke interest in specific types of correlations, such as those that are classical, or purely quantum mechanical, which is also interesting to further investigate with randomized measurements.

%Indeed, App.~\ref{app:T2EW}, we already have promising preliminary result, showing how our techniques can be modified to detect bipartite entanglement.
%Our techniques in randomize measurement couple also aid in estimation of these quantities, especially in directions. 

%\zy{add xiao's paper, discuss structure}
\section{acknowledgments}
We thank Arthur Jaffe and Xiongfeng Ma for the useful discussion. This research is supported by the Quantum
Engineering Program QEP-SF3, National Research Foundation of Singapore under its NRF-ANR joint program
(NRF2017-NRF-ANR004 VanQuTe), the Singapore Ministry of Education Tier 1 grant RG162/19, FQXi-RFP-IPW-1903 from the foundational Questions Institute and Fetzer Franklin Fund, a donor advised fund of Silicon Valley Community Foundation, the National Natural Science Foundation of China Grants No.~11875173 and No.~1217040781, and the National Key Research and Development Program of China Grants No.~2019QY0702 and No.~2017YFA0303903. Any opinions, findings and conclusions or recommendations expressed in this material are those of the author(s) and do not reflect the views of the National Research Foundation, Singapore.

\bibliographystyle{apsrev4-2}
%\bibliography{BibCorr}

%apsrev4-2.bst 2019-01-14 (MD) hand-edited version of apsrev4-1.bst
%Control: key (0)
%Control: author (8) initials jnrlst
%Control: editor formatted (1) identically to author
%Control: production of article title (0) allowed
%Control: page (0) single
%Control: year (1) truncated
%Control: production of eprint (0) enabled
%

\appendix

\onecolumngrid
\newpage

\section{Preliminaries}\label{sec:pre}

\subsection{Integral of random unitary matrix}\label{subsec:random unitary}

A random unitary matrix is a random variable in the space of a unitary matrix~\cite{zyczkowski1994random}; Haar measure means that the probability distribution is uniform. %There already exist some experimental ways to implement HRU in the real physical platforms~\cite{russell2017direct,elben2020cross}.   
Based on the definition of Haar-measured random unitary matrix, here we introduce a $t$-fold twirling channel,
\begin{equation}
\begin{aligned}
\Phi^t(O)=\int_{\mathrm{Haar}}dU U^{\otimes t}O U^{\dagger \otimes t},
\end{aligned}
\end{equation}
where $U\in\mathcal{H}_d$ and $O$ is a linear operator acting on $\mathcal{H}_d^{\otimes t}$. According to Schur-Weyl duality~\cite{goodman2000representations,Kliesch2021Certification}, such twirling channel is equivalent to the operation that projects $O$ into the symmetric subspace. So we have 
\begin{equation} \label{eq:twirling channel}
\begin{aligned}
\Phi^t(O)=\int_{\mathrm{Haar}}dU U^{\otimes t}O U^{\dagger \otimes t}=\sum_{\pi,\sigma\in\mathcal{S}_t}C_{\pi,\sigma}\tr(W_\pi O)W_\sigma.
\end{aligned}
\end{equation}
Where $\mathcal{S}_t$ is the $t$-th order permutation group, $C_{\pi,\sigma}$ is the element of the Weingarten matrix \cite{Gu2013Moments}, and $W_\pi$ is the permutation operator corresponding to $\pi$. By adjusting $O$, one can generate any permutation operators; this is the core idea of the estimation protocol proposed in Sec.~\ref{sec:total correlation}.
\comments{Without HRU, permutation operators are usually produced by the interaction among some identical copies of the real physical system \cite{Ekert_2002,Islam2015Measuring}. For a practical physical platform, such method will be extremely hard with the increasing of the qubit number. However, Eq.~\eqref{eq:twirling channel} enlightens us that permutation operator could be produced alternatively.}

Generally speaking, it is impractical to randomly pick an element in unitary space. Fortunately, it has been proved that a $t$-fold twirling channel can be realized by averaging over a unitary ensemble $\mathcal{E}$ within which the elements are all fixed unitary matrices, which we call $\mathcal{E}$ unitary $t$-design \cite{lindner2017design}.
\begin{equation}
\begin{aligned}
\Phi^t_{\mathcal{E}}(O)=\frac{1}{|\mathcal{E}|}\sum_{U\in\mathcal{E}}U^{\otimes t}O U^{\dagger\otimes t}=\sub{\mbb{E}}{U\in\mathcal{E}}\left(U^{\otimes t}O U^{\dagger\otimes t}\right)=\Phi^t(O),
\end{aligned}
\end{equation}
where $|\mathcal{E}|$ denotes the size of $\mathcal{E}$. This fact further reduces the difficulty of implementing twirling channel in real experiments. It is worth mentioning that the Clifford group is a unitary $3$-design and a unitary $t$-design is also an unitary $m$-design with $m<t$.

\subsection{Tensor network basics}\label{subsec:tensor}
%In quantum information theory, most of the objects we are interested in are tensors and most of the calculations we do are tensor calculations. Tensor network is a graphical way helping us to deal with tensor calculation \cite{wood2011tensor}. It transforms boring tensor equations to vivid tensor graphs, making many calculations easier to be understood. In this subsection, we will introduce some of the basic notations in tensor network, which will be used in the following text.

Tensor network is one of the graphical methods helping us to deal with tensor calculation \cite{wood2011tensor,shapourian2021entanglement}. In tensor networks, a tensor is represented as a box with open legs, which are indices of this tensor. For example, $\rho_{ABC}=\sum_{i,j,k,i',j',k'}\rho_{ijk,i'j'k'}|ijk\rangle\langle i'j'k'|$ is represented as a box with six legs, three of which are left,  representing row indices $i,j,k$, and the other three are right, representing column indices $i',j',k'$. Open legs represent the noncontracting indices, so the connection of legs is the contraction of indices. For example, tensor $AB$ is graphically represented by connecting the right legs of box $A$ and the left legs of box $B$. $\tr(A)=\sum_i A_{i,i}$ is the contraction of the row and column index of $A$, which can be represented by connecting the left and right legs of box $A$. In addition, tensor product is the operation that does not contract indices. So, to represent $A\otimes B$, we just put boxes $A$ and $B$ together. 

\begin{figure}[htbp]
    \centering
    \includegraphics[scale=0.32]{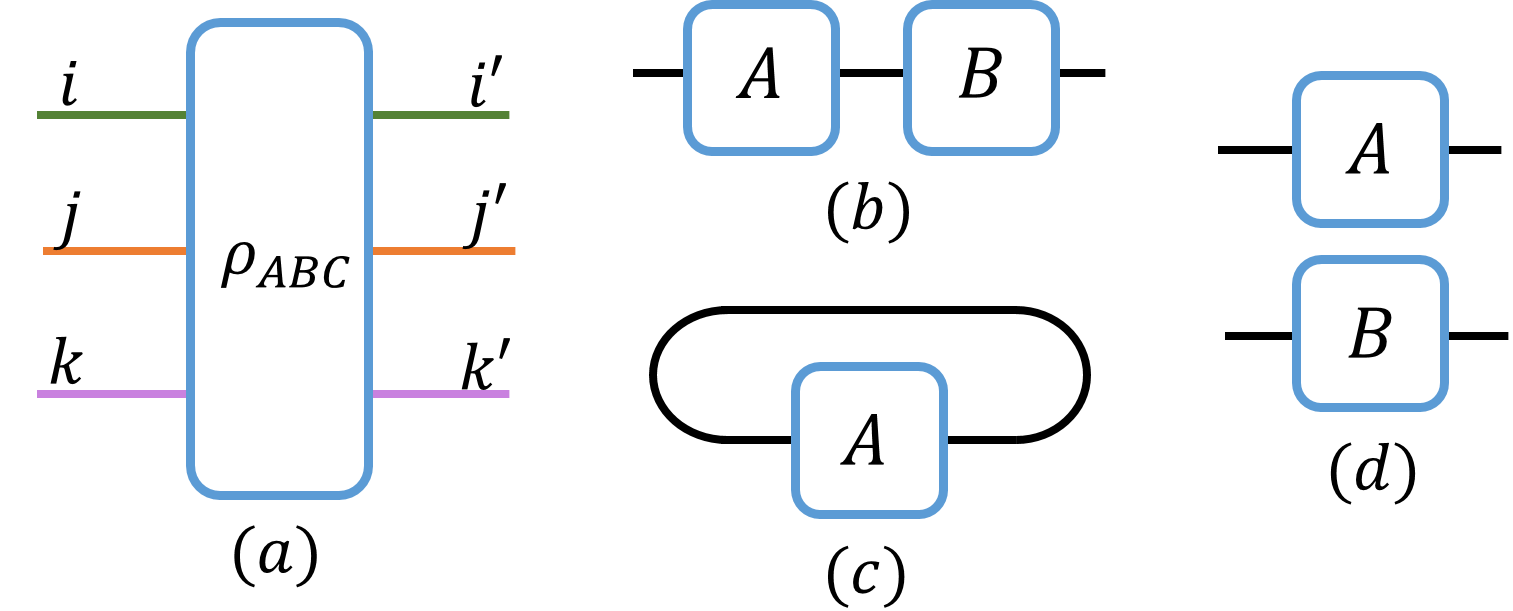}
    \caption{(a) $\rho_{ABC}=\sum_{i,j,k,i',j',k'}\rho_{ijk,i'j'k'}|ijk\rangle\langle i'j'k'|$. (b) $AB$. (c) $\tr(A)$. (d) $A\otimes B$}.
    \label{fig:tensor1}
\end{figure}

The unnormalized maximally entangled state (UMES) and {\footnotesize SWAP} operator are two commonly used operators. UMES $\sqrt{d}|\Psi^+\rangle=\sum_{i=1}|ii\rangle$ is represented as a semicircle with both ends to the left, where $d$ is the dimension of Hilbert space. Many operations can be represented using UMES. Take $A=\sum_{i,j}A_{i,j}|i\rangle\langle j|$ as an example:
\begin{equation}
\begin{aligned}
d\langle \Psi^+|A|\Psi^+\rangle=\sum_k\langle kk|\sum_{i,j}A_{i,j}|i\rangle\langle j|\sum_l|ll\rangle=\sum_{i,j}A_{i,j}\sum_{k,l}\langle k|i\rangle\langle j|l\rangle|k\rangle\langle l|=\sum_{i,j}A_{i,j}|j\rangle\langle i|=A^T
\end{aligned}
\end{equation}
and 
\begin{equation}
\begin{aligned}
d\langle \Psi^+|A\otimes I|\Psi^+\rangle=\sum_k\langle kk|\sum_{i,j}A_{i,j}|i\rangle\langle j|\otimes I\sum_l|ll\rangle=\sum_{i,j}A_{i,j}\sum_{k,l}\langle k|i\rangle\langle j|l\rangle\langle k|l\rangle=\sum_iA_{i,i}=\tr(A).
\end{aligned}
\end{equation}
$\tr(A)$ has been shown in Fig.~\ref{fig:tensor1}(c). {\footnotesize SWAP} $S$, exchanging two indices, can be graphically represented with two crossed curved lines. Using this representation, one can easily prove that $\tr[S(A\otimes B)]=\tr(AB)$, as shown in Fig.~\ref{fig:tensor2}.

\begin{figure}[htbp]
    \centering
    \includegraphics[scale=0.32]{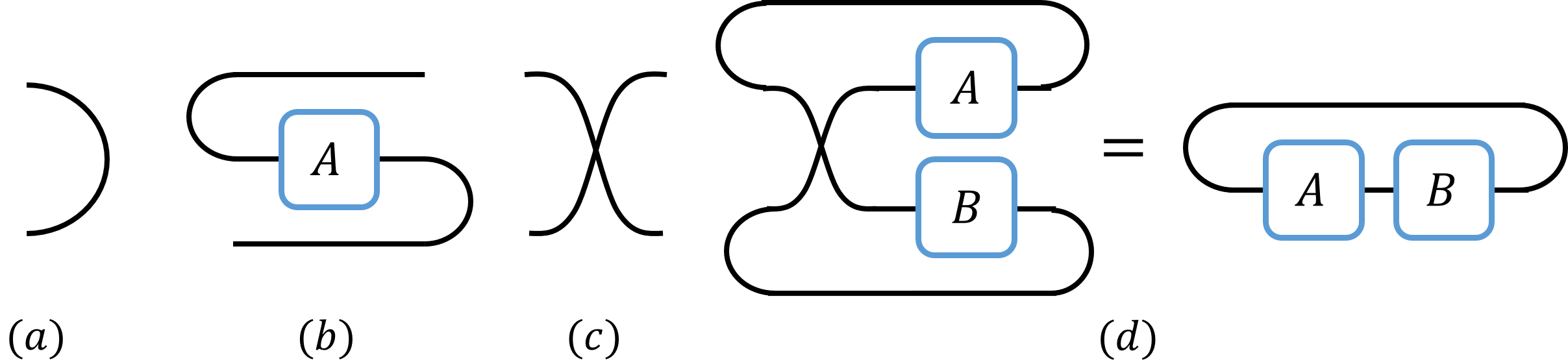}
    \caption{(a) $\sqrt{d}|\Psi^+\rangle=\sum_{i=1}|ii\rangle$. (b) $A^T$. (c) $S$. (d) $\tr[S(A\otimes B)]=\tr(AB)$.}
    \label{fig:tensor2}
\end{figure}

\section{The property of total correlation measure defined in Eq.~\eqref{eq:correlation def}}\label{app:discussion property}

Recall that we take $C(\rho)=-\log_2\mathcal{F}\left(\rho,\bigotimes_{i=1}^k\rho_i\right)$, and the fidelity measure could be 
\begin{equation}\label{eq:F2correlation1}
\begin{aligned}
\mathcal{F}_{\max}\left(\rho,\sigma\right)=\frac{\tr\left(\rho \sigma\right)}{\max \{ \tr(\rho^2),\tr(\sigma^2)\}}
\end{aligned}
\end{equation}
and 
\begin{equation}\label{eq:appF2correlation2}
\begin{aligned}
\mathcal{F}_{\mathrm{GM}}\left(\rho,\sigma\right)=\frac{\tr\left(\rho \sigma\right)}{\sqrt{\tr(\rho^2)\tr(\sigma^2)}},
\end{aligned}
\end{equation}
which is the one used in the main text. Note that the difference is that the denominator is either maximization or geometric mean of the purity, and one can also choose other fidelity measures \cite{Liang_2019}. We take  $C(\rho)=-\log_2\mathcal{F}_{\text{GM}}\left(\rho,\bigotimes_{i=1}^k\rho_i\right)$ as an example to discuss its property. The discussion of $C(\rho)=-\log_2\mathcal{F}_{\text{max}}\left(\rho,\bigotimes_{i=1}^k\rho_i\right)$ is quite similar.

\begin{enumerate}[(1)]
    \item $C(\rho)$ is a faithful total correlation measure, $C(\rho)=0$ iff $\rho = \bigotimes_{i=1}^k\rho_i$, and non-negative for any $\rho$. This property directly follows from the property of fidelity: $\mathcal{F}_{\text{GM}}(\rho,\sigma)<1$, $\forall \rho,\sigma$ and $\mathcal{F}_{\text{GM}}(\rho,\sigma)=1$ iff $\rho = \sigma$.
    
    \item Adding a new party $\rho_{k+1}$ to a $k$-partite state $\rho$ will not cause the increase of the total correlation in the resulting $(k+1)$-partite tensor state $\rho\otimes\rho_{k+1}$, i.e. $C(\rho\otimes\rho_{k+1})=C(\rho)$. 

    \begin{proof}
    \begin{equation}
    \begin{aligned}
    \mathcal{F}_{\text{GM}}(\rho\otimes\rho_{k+1},\bigotimes_{i=1}^k\rho_i\otimes\rho_{k+1})
    &=\frac{\tr\left[(\rho\otimes\rho_{k+1})(\bigotimes_{i=1}^k\rho_i\otimes\rho_{k+1})\right]}{\sqrt{\tr\left[(\rho\otimes\rho_{k+1})^2\right]\tr\left[(\bigotimes_{i=1}^k\rho_i\otimes\rho_{k+1})^2\right]}}\\
    &=\frac{\tr\left(\rho\bigotimes_{i=1}^k\rho_i\right)}{\sqrt{\tr\rho^2\prod_{i=1}^k\tr\rho_i^2}}\frac{\tr\rho_{k+1}^2}{\tr\rho_{k+1}^2}\\
    &=\mathcal{F}_{\text{GM}}(\rho,\bigotimes_{i=1}^k\rho_i).
    \end{aligned}
    \end{equation}
    Hence
    \begin{equation}
    \begin{aligned}
    C(\rho\otimes\rho_{k+1})
    =-\log_2\mathcal{F}_{\text{GM}}(\rho\otimes\rho_{k+1},\bigotimes_{i=1}^k\rho_i\otimes\rho_{k+1})
    =-\log_2 \mathcal{F}_{\text{GM}}(\rho,\bigotimes_{i=1}^k\rho_i)
    =C(\rho).
    \end{aligned}
    \end{equation}
    \end{proof}
    
    \item $C(\rho)$ is not changed under local unitary transformation, i.e. $C(\bigotimes_{i=1}^k u_i\rho\bigotimes_{i=1}^k u_i^\dagger)=C(\rho)$.
    
    \item Local quantum channel, i.e., $\bigotimes_{i=1}^k\Lambda_i$, cannot create total correlation in uncorrelated $k$-partite state $\rho=\bigotimes_{i=1}^k\rho_i$.
    \begin{proof}
    Acting the local channel on the uncorrelated state, the resulting state is still uncorrelated:
    \begin{equation}
    \begin{aligned}
    \bigotimes_{i=1}^k\Lambda_i\left(\bigotimes_{i=1}^k\rho_i\right)=\bigotimes_{i=1}^k\Lambda_i\left(\rho_i\right).
    \end{aligned}
    \end{equation}
    Therefore, the faithfulness of $\mathcal{F}_{\text{GM}}(\cdot)$ directly leads to the invariance of total correlation:
    \begin{equation}
    \begin{aligned}
    C\left[\bigotimes_{i=1}^k\Lambda_i\left(\bigotimes_{i=1}^k\rho_i\right)\right]=C\left(\bigotimes_{i=1}^k\Lambda_i\left(\rho_i\right)\right)=C\left(\bigotimes_{i=1}^k\rho_i\right)=0.
    \end{aligned}
    \end{equation}
    
    \end{proof}
    
    \item $C(\rho)$ is additive under the tensor product. The $(k+l)$-partite total correlation $C(\rho\otimes\sigma)$ equals the sum over of the $k$-partite total correlation $C(\rho)$ and l-partite total correlation $C(\sigma)$.
    \begin{proof}
    \begin{equation}
    \begin{aligned}
    \mathcal{F}_{\text{GM}}(\rho\otimes\sigma,\bigotimes_{i=1}^k\rho_i\otimes\bigotimes_{i=1}^l\sigma_i)
    &=\frac{\tr\left[(\rho\otimes\sigma)(\bigotimes_{i=1}^k\rho_i\otimes\bigotimes_{i=1}^l\sigma_i)\right]}{\sqrt{\tr\left[(\rho\otimes\sigma)^2\right]\tr\left[(\bigotimes_{i=1}^k\rho_i\otimes\bigotimes_{i=1}^l\sigma_i)^2\right]}}\\
    &=\frac{\tr\left(\rho\bigotimes_{i=1}^k\rho_i\right)}{\sqrt{\tr\rho^2\prod_{i=1}^k\tr\rho_i^2}}\frac{\tr\left(\sigma\bigotimes_{i=1}^l\sigma_i\right)}{\sqrt{\tr\sigma^2\prod_{i=1}^l\tr\sigma_i^2}}\\
    &=\mathcal{F}_{\text{GM}}(\rho,\bigotimes_{i=1}^k\rho_i)\mathcal{F}_{\text{GM}}(\sigma,\bigotimes_{i=1}^l\sigma_i).
    \end{aligned}
    \end{equation}
    Hence, 
    \begin{equation}
    \begin{aligned}
    C(\rho\otimes\sigma)&=-\log_2\mathcal{F}_{\text{GM}}(\rho\otimes\sigma,\bigotimes_{i=1}^k\rho_i\otimes\bigotimes_{i=1}^l\sigma_i)\\
    &=-\log_2\mathcal{F}_{\text{GM}}\left(\rho,\bigotimes_{i=1}^k\rho_i\right)-\log\mathcal{F}_{\text{GM}}\left(\sigma,\bigotimes_{i=1}^l\sigma_i\right)\\
    &=C(\rho)+C(\sigma).
    \end{aligned}
    \end{equation}
    \end{proof}
    
\end{enumerate}

It is easy to prove that the other definition of total correlation, $C(\rho)=-\log_2\mathcal{F}_{\text{max}}\left(\rho,\bigotimes_{i=1}^k\rho_i\right)$ satisfies properties (1), (2), (3) and (4), while it fails to meet the additivity condition (5).

Now we discuss how to generalize such total correlation  measure to genuine multipartite correlation measure. In \cite{Bennett_2011}, the authors proposed three postulates that every multipartite genuine correlation measure and indicator should satisfy. They also gave a definition of multipartite genuine correlation based on those postulates: $k$-partite state $\rho$ has genuine $k$-partite correlation if it is nonproduct for any bipartition. Following this definition, we define the $k$-partite genuine correlation measure
\begin{equation}
\begin{aligned}
C_{\text{g.c.}}(\rho)=\min_{A\subset [k]}\left\{-\log\mathcal{F}(\rho,\rho_A\otimes\rho_{\bar{A}})\right\},
\end{aligned}
\end{equation}
where $\mathcal{F}(\rho,\sigma)$ can also be $\mathcal{F}_{\max}(\rho,\sigma)$ or $\mathcal{F}_{\text{GM}}(\rho,\sigma)$. Because of the faithfulness of these two fidelities, $C_{\text{g.c.}}(\rho)=0$ iff $\rho$ is the product in some bipartition $\{A,\bar{A}\}$ of the $k$-partite system. Hence, $C_{\text{g.c}}(\rho)$ satisfies those three postulates proposed in \cite{Bennett_2011}.

\section{Bipartite entanglement criterion using $T_2$}\label{app:T2EW}
In \cite{zhang2008entanglement}, the authors proposed an entanglement criterion which is strictly stronger than the well-known computable cross norm criterion \cite{2002AKai} and dV criterion (the one based on correlation tensor of states) \cite{2007Further},
\begin{equation}\label{eq:enhanced EW}
\begin{aligned}
\norm{\mathcal{R}(\rho_{AB}-\rho_A\otimes\rho_B)}>\sqrt{(1-\tr\rho_A^2)(1-\tr\rho_B^2)},
\end{aligned}
\end{equation}
where $\norm{\cdot}$ denotes the trace norm, and $\mathcal{R}(\cdot)$ represents the realignment operation, $\mathcal{R}(O)_{ij,kl}=O_{ik,jl}$. However, $\norm{\mathcal{R}(\rho_{AB}-\rho_A\otimes\rho_B)}$ is hard to estimate by direct measurements. Hence, based on this criterion, we further construct a new measurable entanglement criterion as follows.
\begin{proposition}\label{prop:EW}
For any separable state $\rho_{AB}$, it should satisfy
\begin{equation}
\begin{aligned}
\tr\rho_{AB}^2 + \tr\rho_A^2 +\tr\rho_B^2 - 2\tr[\rho_{AB}(\rho_A\otimes\rho_B)]-1\le 0,
\end{aligned}
\end{equation}
and the violation indicates the presence of entanglement. 
\end{proposition}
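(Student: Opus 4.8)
The plan is to relax the trace-norm criterion of Eq.~\eqref{eq:enhanced EW} into a Hilbert–Schmidt-norm statement, which can be written purely in terms of the purities $\tr\rho_A^2$, $\tr\rho_B^2$, $\tr\rho_{AB}^2$ and the bipartite CRO $T_2=\tr[\rho_{AB}(\rho_A\otimes\rho_B)]$. The starting point is that any \emph{separable} state obeys the reverse of Eq.~\eqref{eq:enhanced EW}, i.e.
\begin{equation}
\norm{\mathcal{R}(\rho_{AB}-\rho_A\otimes\rho_B)}_{\mathrm{tr}}\le\sqrt{(1-\tr\rho_A^2)(1-\tr\rho_B^2)} .
\end{equation}

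First I would invoke two elementary norm facts. (i) The realignment map merely permutes the matrix entries of its argument, $\mathcal{R}(O)_{ij,kl}=O_{ik,jl}$, so it preserves the Hilbert–Schmidt norm, $\norm{\mathcal{R}(O)}_{\mathrm{HS}}=\norm{O}_{\mathrm{HS}}$. (ii) For any operator the trace norm dominates the Hilbert–Schmidt norm, $\norm{O}_{\mathrm{tr}}\ge\norm{O}_{\mathrm{HS}}$, since $(\sum_i\sigma_i)^2\ge\sum_i\sigma_i^2$ for the singular values $\sigma_i$. Chaining these with the separability bound gives, for any separable $\rho_{AB}$,
\begin{equation}
\norm{\rho_{AB}-\rho_A\otimes\rho_B}_{\mathrm{HS}}=\norm{\mathcal{R}(\rho_{AB}-\rho_A\otimes\rho_B)}_{\mathrm{HS}}\le\norm{\mathcal{R}(\rho_{AB}-\rho_A\otimes\rho_B)}_{\mathrm{tr}}\le\sqrt{(1-\tr\rho_A^2)(1-\tr\rho_B^2)} .
\end{equation}

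The remaining step is a routine expansion. Squaring both sides and writing $\norm{\rho_{AB}-\rho_A\otimes\rho_B}_{\mathrm{HS}}^2=\tr\rho_{AB}^2-2T_2+\tr\rho_A^2\tr\rho_B^2$, while the right-hand side expands to $1-\tr\rho_A^2-\tr\rho_B^2+\tr\rho_A^2\tr\rho_B^2$, the common $\tr\rho_A^2\tr\rho_B^2$ term cancels and the inequality collapses exactly to the claimed bound $\tr\rho_{AB}^2+\tr\rho_A^2+\tr\rho_B^2-2T_2-1\le 0$. A violation therefore certifies entanglement.

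The computation itself is short; the conceptual point — and the only place requiring care — is recognizing that relaxing the trace norm to the Hilbert–Schmidt norm is precisely what turns an unmeasurable criterion into a measurable one, at the acceptable price of yielding a strictly weaker test that may fail to flag some entangled states detected by Eq.~\eqref{eq:enhanced EW}. All four ingredients of the final inequality are directly accessible through the randomized-measurement protocols of the main text, which is the entire motivation for this reformulation.
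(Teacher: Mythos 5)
Your proposal is correct and follows essentially the same route as the paper: the paper's lemma that $\tr(RR^\dagger)=\tr(\rho_{AB}-\rho_A\otimes\rho_B)^2$ is exactly your observation that realignment preserves the Hilbert--Schmidt norm, and its bound $\sqrt{p_2}\le\sum_i\lambda_i$ is exactly your trace-norm-dominates-HS-norm step. The subsequent expansion and cancellation of the $\tr\rho_A^2\,\tr\rho_B^2$ term matches the paper's derivation.
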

The proof is left to Appendix \ref{proof:EW}. We name this criterion as \emph{$T_2$ separability criterion}. Although weaker than Eq.~\eqref{eq:enhanced EW}, the $T_2$ criterion is equivalent to the well-known \renyi entropy criterion (i.e., $\tr\rho_{AB}^2 \leq \tr\rho_A^2,\tr\rho_B^2$ for the separable $\rho_{AB}$) on pure states and Bell-diagonal states, and shows stronger detection power for some asymmetric states. 

\begin{equation}
\begin{aligned}
\rho_{AB}=(1-p)|\Psi^+\rangle\langle\Psi^+|+p|0+\rangle\langle 0+|
\end{aligned}
\end{equation}
is the mixture of the Bell state $|\Psi^+\rangle$ and the product state $\ket{0+}$. For such state, $T_2$ criterion indicates entanglement for $p<1$, which is same as positive partial transposition (PPT) criterion, the necessary and sufficient condition for $(2\times 2)$-dimensional quantum states. However, the entropy and $p_3$-PPT criterion \cite{elben2020mixedstate} only detect entanglement as $p < 0.5$ and $p < 0.59$, respectively.
A detailed comparison and discussion are left to Appendix \ref{proof:detection power}. It is worth mentioning that this criterion can be generalized to non-full-separability criterion in the multipartite system \cite{zhang2008entanglement}, and we leave it for future study.

\subsection{Proof of Proposition  \ref{prop:EW}}\label{proof:EW}
For simplicity, denote $\mathcal{R}(\rho_{AB}-\rho_A\otimes\rho_B)$ by $R$, and assume the dimension of $\mathcal{H}_A$ is less than the dimension of $\mathcal{H}_B$ $d_A\le d_B$. Then 
\begin{equation}
\begin{aligned}
\norm{R}=\sum_{i=1}^{d_A^2}\lambda_i,
\end{aligned}
\end{equation}
where $\lambda_i\ge 0$ are the singular values of $R$. Although $\norm{R}$ is hard to directly measure, we find that 
\begin{equation}
\begin{aligned}
\tr(RR^\dagger)=\sum_{i=1}^{d_A^2}\lambda_i^2
\end{aligned}
\end{equation}
can be directly measured, and the value of $\norm{R}$ may be bounded by $\tr(RR^\dagger)$.

\begin{lemma}
$\sum_{i=1}^{d_A^2}\lambda_i^2$ can be represented using the purities of $\rho_A$, $\rho_B$, and $\rho_{AB}$ and $T_2=\tr[\rho_{AB}(\rho_A\otimes\rho_B)]$:
\begin{equation}
\begin{aligned}
\tr(RR^\dagger)=\tr(\rho_{AB}-\rho_A\otimes\rho_B)^2=\tr\rho_{AB}^2+\tr\rho_A^2\tr\rho_B^2-2\tr[\rho_{AB}(\rho_A\otimes\rho_B)].
\end{aligned}
\end{equation}
\end{lemma}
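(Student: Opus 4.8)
The plan is to exploit the single essential fact that realignment is a bijective relabeling of the matrix entries, hence it leaves the Hilbert--Schmidt (Frobenius) norm invariant; the identity then collapses to a one-line expansion of a Hermitian square.

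First I would set $M := \rho_{AB}-\rho_A\otimes\rho_B$ and note that $M$ is Hermitian, so that $R=\mathcal{R}(M)$ and $\sum_{i=1}^{d_A^2}\lambda_i^2=\tr(RR^\dagger)$ is exactly the squared Frobenius norm $\|R\|_F^2$. By the definition $\mathcal{R}(O)_{ij,kl}=O_{ik,jl}$ we have $R_{ij,kl}=M_{ik,jl}$, and the reindexing $(i,j,k,l)\mapsto(i,k,j,l)$ is a bijection of the four-index set. Therefore $\sum_{i,j,k,l}|R_{ij,kl}|^2=\sum_{i,j,k,l}|M_{ik,jl}|^2=\sum_{i,j,k,l}|M_{ij,kl}|^2=\tr(MM^\dagger)$, i.e. realignment preserves the Frobenius norm. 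Since $M$ is Hermitian, $\tr(MM^\dagger)=\tr(M^2)$, giving $\sum_i\lambda_i^2=\tr(RR^\dagger)=\tr(M^2)$.

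It then remains to expand $\tr(M^2)=\tr[(\rho_{AB}-\rho_A\otimes\rho_B)^2]$ by linearity of the trace into $\tr(\rho_{AB}^2)-2\tr[\rho_{AB}(\rho_A\otimes\rho_B)]+\tr[(\rho_A\otimes\rho_B)^2]$, and to invoke $\tr[(\rho_A\otimes\rho_B)^2]=\tr(\rho_A^2)\tr(\rho_B^2)$, which follows from $(\rho_A\otimes\rho_B)^2=\rho_A^2\otimes\rho_B^2$ together with the multiplicativity of the trace over tensor factors. Substituting $T_2=\tr[\rho_{AB}(\rho_A\otimes\rho_B)]$ yields the claimed expression. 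As an alternative to the index bookkeeping, the norm-preservation step can be checked diagrammatically with the tensor-network conventions of Appendix~\ref{sec:pre}, where realignment merely reconnects the open legs and manifestly does not alter the sum of squared entries.

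The only nontrivial point is the norm-invariance of realignment, $\|\mathcal{R}(M)\|_F=\|M\|_F$; once this is in place the remainder is routine trace algebra. I would therefore spend the care on that step (either the bijection-of-indices argument or a clean diagrammatic one) and keep the expansion brief.
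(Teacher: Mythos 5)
Your proposal is correct and follows essentially the same route as the paper: both reduce the claim to the fact that realignment merely permutes the matrix entries, so that $\tr(RR^\dagger)=\tr\bigl[(\rho_{AB}-\rho_A\otimes\rho_B)^2\bigr]$, and then expand the square using $\tr[(\rho_A\otimes\rho_B)^2]=\tr(\rho_A^2)\tr(\rho_B^2)$. The paper carries out the index contraction $\sum_{i,j,k,l}R_{ij,kl}R^\dagger_{kl,ij}$ explicitly, while you phrase the same bookkeeping as Frobenius-norm invariance combined with Hermiticity of $\rho_{AB}-\rho_A\otimes\rho_B$; the two arguments are interchangeable.
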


\begin{proof}
For simplicity, denote $\rho_{AB}-\rho_A\otimes\rho_B$ by $O_{AB}$. $O_{AB}$ is Hermitian but generally not positive. Hence, we have
\begin{equation}
\begin{aligned}
R^\dagger=[\mathcal{R}(O_{AB})^*]^T=\mathcal{R}^T(O_{AB}^T),
\end{aligned}
\end{equation}
so that the elements of $R^\dagger$ are the elements of $O_{AB}$:
\begin{equation}
\begin{aligned}
R^\dagger_{ij,kl}=[\mathcal{R}(O_{AB})^T]_{kl,ij}=(O_{AB}^T)_{ki,lj}=(O_{AB})_{lj,ki}.
\end{aligned}
\end{equation}
Now we can represent $\tr(RR^\dagger)$ by the index contraction of $\rho_{AB}-\rho_A\otimes\rho_B$:
\begin{equation}
\begin{aligned}
\tr(RR^\dagger)&=\sum_{i,j,k,l}R_{ij,kl}R^\dagger_{kl,ij}=\sum_{i,j,k,l}(O_{AB})_{ik,jl}(O_{AB})_{jl,ik}\\
&=\tr(O_{AB})^2=\tr(\rho_{AB}-\rho_A\otimes\rho_B)^2=\tr\rho_{AB}^2+\tr\rho_A^2\tr\rho_B^2-2\tr[\rho_{AB}(\rho_A\otimes\rho_B)].
\end{aligned}
\end{equation}

\end{proof}

Suppose we have measured the value of $p_2=\sum_{i=1}^{d_A^2}\lambda_i^2=\tr\rho_{AB}^2+\tr\rho_A^2\tr\rho_B^2-2\tr[\rho_{AB}(\rho_A\otimes\rho_B)]$. Then it can be easily proved that 
\begin{equation}\label{eq:moment measure}
\begin{aligned}
\sqrt{p_2}\leq\sum_{i=1}^{d_A^2}\lambda_i\leq d_A\sqrt{p_2}.
\end{aligned}
\end{equation}
The minimum is achieved when $\lambda_1=\sqrt{p_2}$ and $\lambda_i = 0$ for $2\leq i\leq d_A^2$ and the maximum is achieved when $\lambda_i=\sqrt{p_2}/d_A$ for $1\le i\le d_A^2$. Eq.~\eqref{eq:enhanced EW} tells us that separable $\rho_{AB}$ satisfy $\sum_{i=1}^{d_A^2}\lambda_i^2\le\sqrt{(1-\tr\rho_A^2)(1-\tr\rho_B^2)}$. According to the above equation, separable $\rho_{AB}$ satisfy
\begin{equation}
\begin{aligned}
\sqrt{p_2} \le \sqrt{(1-\tr\rho_A^2)(1-\tr\rho_B^2)} \to \tr\rho_{AB}^2+\tr\rho_A^2+\tr\rho_B^2-2\tr[\rho_{AB}(\rho_A\otimes\rho_B)]-1\le 0.
\end{aligned}
\end{equation}

\subsection{Discussion of the detection power of $T_2$ criterion}\label{proof:detection power}
Here we first prove the equivalence of the $T_2$ criterion and the entropy criterion for pure state $|\varphi_{AB}\rangle$. As is known to all, for a pure state, the entropy criterion is a necessary and sufficient condition for separability. So we only need to prove Proposition \ref{prop:EW} is also necessary and sufficient. Taking the Schmidt decomposition of $|\varphi_{AB}\rangle$,
\begin{equation}
\begin{aligned}
|\varphi_{AB}\rangle=\sum_{i=1}^{d_A}\sqrt{\lambda_i}|ii\rangle,
\end{aligned}
\end{equation}
where $0\le\sqrt{\lambda_i}\le1$ are the Schmidt co-efficients. The purities can be easily calculated as
\begin{equation}
\begin{aligned}
\tr\rho_{AB}^2=1 \ , \ \tr\rho_A^2=\tr\rho_B^2=\sum_{i=1}^{d_A}\lambda_i^2,
\end{aligned}
\end{equation}
and
\begin{equation}
\begin{aligned}
\tr[\rho_{AB}(\rho_A\otimes\rho_B)]&=\langle\varphi_{AB}|\rho_A\otimes\rho_B|\varphi_{AB}\rangle\\
&=\sum_{k,l=1}^{d_A}\sqrt{\lambda_k\lambda_l}\langle kk|\left(\sum_{i,j=1}^{d_A}\lambda_i\lambda_j|ij\rangle\langle ij|\right)|ll\rangle\\
&=\sum_{i,j,k,l=1}^{d_A}\sqrt{\lambda_k\lambda_l}\lambda_i\lambda_j\delta_{ik}\delta_{jk}\delta_{il}\delta_{jl}\\
&=\sum_{i=1}^{d_A}\lambda_i^3.
\end{aligned}
\end{equation}
Then the $T_2$ criterion reduces to
\begin{equation}
\begin{aligned}
\sum_{i=1}^{d_A}\lambda_i^2\le\sum_{i=1}^{d_A}\lambda_i^3,
\end{aligned}
\end{equation}
where $\sqrt{\lambda_i}$ are the Schmidt coefficients of $|\varphi_{AB}\rangle$. The normalization and singular value decomposition requires $\sum_{i=1}^{d_A}\lambda_i=1$ and $0\le\lambda_i\le 1$. So the only solution of the above inequality is $\lambda_1=1$ and $\lambda_i=1$ for $2\le i\le d_A$, which means $|\varphi_{AB}\rangle = |\varphi_A\rangle|\varphi_B\rangle$.

For a Bell-diagonal state with the form
\begin{equation}
\begin{aligned}
\rho_{AB} = \frac{1}{4}\left(I_4+r_x\sigma_x\otimes\sigma_x+r_y\sigma_y\otimes\sigma_y+r_z\sigma_z\otimes\sigma_z\right),
\end{aligned}
\end{equation}
where $I_4$ denotes a $4\times 4$ identity matrix, the reduced density matrix can be easily calculated as
\begin{equation}
\begin{aligned}
\rho_A = \rho_B = \frac{1}{2}I_2.
\end{aligned}
\end{equation}
Thus we have
\begin{equation}
\begin{aligned}
\tr\rho_{AB}^2=\frac{1}{4}(1+r_x^2+r_y^2+r_z^2), \ \ \ \tr\rho_A^2=\tr\rho_B^2=\frac{1}{2}, \ \ \ \tr[\rho_{AB}(\rho_A\otimes\rho_B)]=\frac{1}{4}.
\end{aligned}
\end{equation}
Then both the entropy criterion and the $T_2$ criterion indicate entanglement for 
\begin{equation}
\begin{aligned}
r_x^2+r_y^2+r_z^2\ge 1.
\end{aligned}
\end{equation}

Then we take 
\begin{equation}\label{eq:stateBellPlus}
\begin{aligned}
\rho_{AB}=(1-p)|\Psi^+\rangle\langle\Psi^+|+p|0+\rangle\langle 0+|,
\end{aligned}
\end{equation}
a mixture of the two-qubit maximally entangled state and the tensor product of $|0\rangle\langle 0|$ and $|+\rangle\langle +|$, as an example to demonstrate the detection power of the $T_2$ separability criterion. For comparison, we pick three commonly used separability criteria:
\begin{enumerate}
    \item PPT criterion, $W(p)=-min\{\lambda(\rho_{AB}^{T_B})\}$
    \item entropy criterion, $W(p)=\tr\rho_{AB}^2-\tr\rho_A^2$
    \item $p_3$-PPT criterion, $W(p)=(\tr\rho_{AB}^2)^2-\tr(\rho_{AB}^{T_B})^3$
\end{enumerate}
and $T_2$ criterion, $W(p)=\tr\rho_{AB}^2+\tr\rho_A^2+\tr\rho_B^2-2\tr[\rho_{AB}(\rho_A\otimes\rho_B)]-1$, show them in one diagram. For these four criteria, $W(p)>0$ indicates entanglement and the absolute value of $W(p)$ makes no sense. As shown in Fig.~\ref{fig:EWpower}, for states in Eq.~\eqref{eq:stateBellPlus}, the $T_2$ criterion shows the same detection power as the PPT criterion, the necessary and sufficient separability condition for $2\times 2$ states, and better than the $p_3$-PPT criterion and entropy criterion.
 
\begin{figure}[htbp]
    \centering
    \includegraphics[scale=0.5]{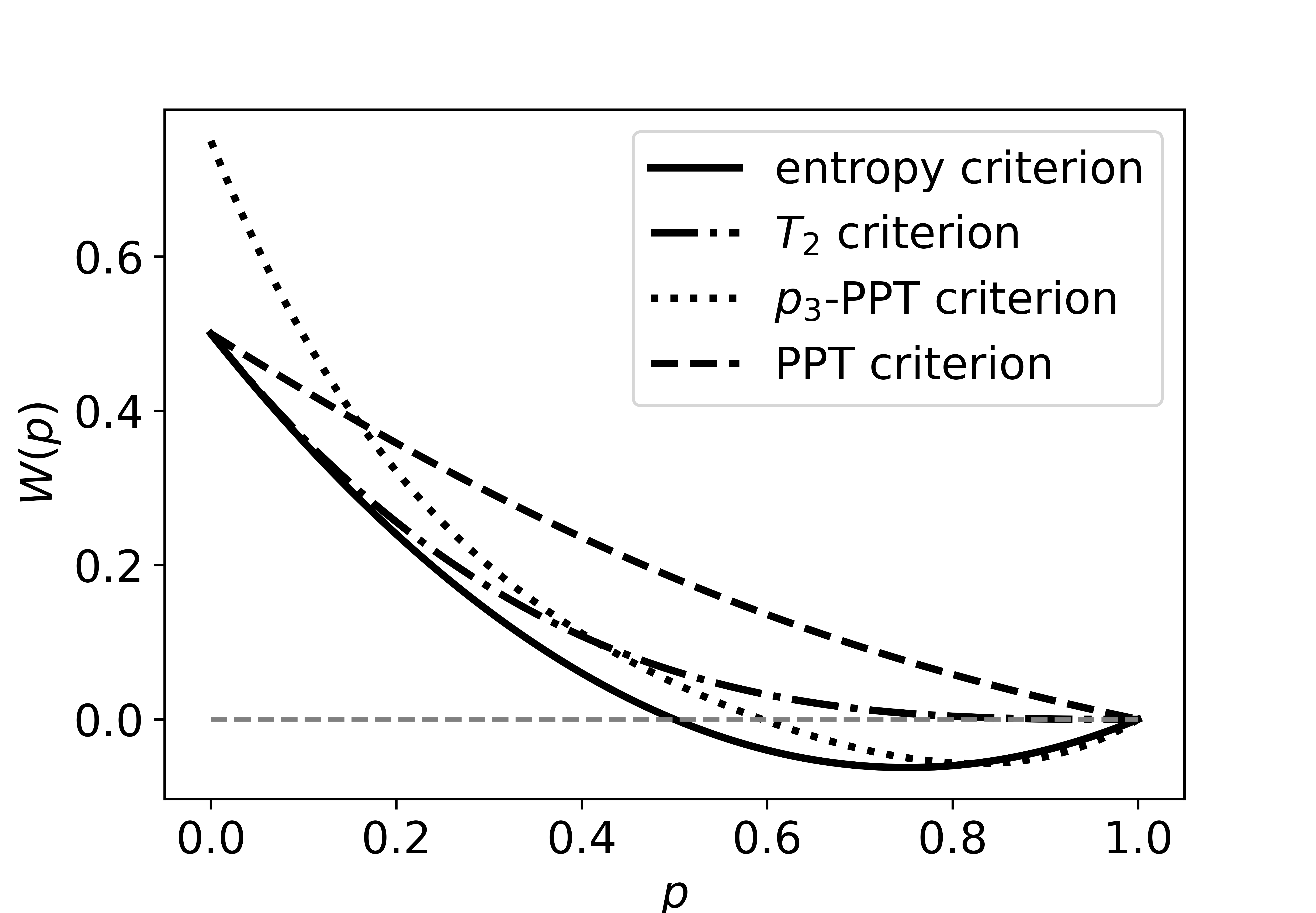}
    \caption{Comparison of the detection power of four separability criteria for states in Eq.~\eqref{eq:stateBellPlus}.}
    \label{fig:EWpower}
\end{figure}

\section{A few Proofs}
\subsection{Proof of Eq.~\eqref{eq:SWAP construction}}
The result in Eq.~\eqref{eq:SWAP construction} was derived in Refs.~\cite{Elben2018Random,Elben2019toolbox}. Here we prove it for completeness. According to the Weingarten integral introduced in Appendix \ref{subsec:random unitary},
\begin{equation}
\begin{aligned}
\Phi_2(X)&=\sum_{\pi,\sigma\in\mathcal{S}_2}C_{\pi,\sigma}\tr(W_\pi X)W_\sigma\\
&=\sum_{\pi,\sigma\in\mathcal{S}_2}\sum_{s,s'}C_{\pi,\sigma}-(-d)^{\delta_{s,s'}}\tr(W_\pi |s,s'\rangle\langle s,s'|)W_\sigma\\
&=\sum_{s,s'}-(-d)^{\delta_{s,s'}}\left\{\left[C_{0,0}\tr(I |s,s'\rangle\langle s,s'|)+C_{1,0}\tr(S |s,s'\rangle\langle s,s'|)\right]I +\left[C_{0,1}\tr(I |s,s'\rangle\langle s,s'|)+C_{1,1}\tr(S |s,s'\rangle\langle s,s'|)\right]S\right\}\\
&=\sum_{s,s'}d(-d)^{\delta_{s,s'}-1}\left\{\left[\frac{1}{d^2-1}-\frac{1}{d(d^2-1)}\delta_{s,s'}\right]I+\left[-\frac{1}{d(d^2-1)}+\frac{1}{d^2-1}\delta_{s,s'}\right]S\right\}\\
&=\sum_{s,s'}\left\{\left[-\frac{1}{d^2-1}+\frac{d}{d^2-1}\delta_{s,s'}\right]I+\left[\frac{1}{d(d^2-1)}+\frac{d^2-d-1}{d(d^2-1)}\delta_{s,s'}\right]S\right\}\\
&=\left\{\left[-\frac{d^2}{d^2-1}+\frac{d^2}{d^2-1}\right]I+\left[\frac{d^2}{d(d^2-1)}+\frac{d^3-d^2-d}{d(d^2-1)}\right]S\right\}\\
&=S
\end{aligned}
\end{equation}
Here in the third line, for simplicity, we use $0$ and $1$ to represent identity and exchange in the subscript of Weingarten matrix element $C_{\pi,\sigma}$. The fifth equal sign is because
\begin{equation}
\begin{aligned}
-(-d)^{\delta_{s,s'}}=-1+(d+1)\delta_{s,s'},
\end{aligned}
\end{equation}
and the sixth equal sign is because
\begin{equation}
\begin{aligned}
\sum_{s,s'}1=d^2 \ \ , \ \ \sum_{s,s'}\delta_{s,s'}=d
\end{aligned}
\end{equation}

\subsection{Proof of Proposition  \ref{prop:Tkglobal}}\label{proof:Tkglobal}

By Born's rule,
\begin{equation}
\begin{aligned}
\mathrm{Pr}(s|U)&=\tr\left[\ket{s_{g_1},\dots,s_{g_k}}\bra{s_{g_1},\dots,s_{g_k}}\left(\bigotimes_{i=1}^kU_{g_i}\right)\rho\left(\bigotimes_{i=1}^kU_{g_i}\right)^\dagger\right]\\
\mathrm{Pr}(s_{g_i}|U_{g_i})&=\tr\left(\ket{s_{g_i}}\bra{s_{g_i}}U_{g_i}\rho_{g_i}U_{g_i}^\dagger\right)
\end{aligned}
\end{equation}
The right-hand side of Eq.~\eqref{eq:core eq correlation} can be written as
\begin{equation}
\begin{aligned}
\mathrm{RHS}&=\tr\left\{\left(\bigotimes_{i=1}^kX_{g_i}\right)\sub{\mbb{E}}{U}\left[\left(\bigotimes_{i=1}^kU_{g_i}\right)^{\otimes 2}\left[\rho\otimes(\bigotimes_{i=1}^k\rho_{g_i})\right]\left(\bigotimes_{i=1}^kU_{g_i}\right) ^{\dagger\otimes 2}\right]\right\}\\
&=\tr\left\{\sub{\mbb{E}}{U}\left[\left(\bigotimes_{i=1}^kU_{g_i}\right)^{\dagger\otimes 2}\left(\bigotimes_{i=1}^kX_{g_i}\right)\left(\bigotimes_{i=1}^kU_{g_i}\right)^{\otimes 2}\right]\left[\rho\otimes(\bigotimes_{i=1}^k\rho_{g_i})\right]\right\}\\
&=\tr\left\{\left[\bigotimes_{i=1}^k\Phi_{g_i}^{\dagger 2}(X_{g_i})\right]\left[\rho\otimes(\bigotimes_{i=1}^k\rho_{g_i})\right]\right\},
\end{aligned}
\end{equation}
where $X_{g_i}$ follows the same definition as the $X$ in Eq.~\eqref{eq:SWAP construction}, and $\Tilde{\Phi}^2(\cdot)=\sub{\mbb{E}}{U}[U^\dagger\cdot U]$. For arbitrary linear operator $A$ and $B$, we have:
\begin{equation}
\begin{aligned}
\tr\left[\Phi^{\dagger k}(A)B\right]=\tr\left[A\Phi^k(B)\right]=\sum_{\pi,\sigma\in\mathcal{S}_k}C_{\pi,\sigma}\tr(W_\pi B)\tr(W_\sigma A)
\end{aligned}
\end{equation}
So, when taking trace, it is always true that
\begin{equation}
\begin{aligned}
\Phi^{\dagger k}(A)=\sum_{\pi,\sigma\in\mathcal{S}_k}C_{\pi,\sigma}\tr(W_\sigma A)W_\pi.
\end{aligned}
\end{equation}
Since $C_{\pi,\sigma}=C_{\sigma,\pi}$ \cite{Gu2013Moments},
\begin{equation}
\begin{aligned}
\Phi^{\dagger k}(A)=\sum_{\pi,\sigma\in\mathcal{S}_k}C_{\pi,\sigma}\tr(W_\pi A)W_\sigma=\Phi^k(A),
\end{aligned}
\end{equation}
so that
\begin{equation}
\begin{aligned}
\mathrm{RHS}&=\tr\left\{\left[\bigotimes_{i=1}^k\Phi_{g_i}^{\dagger 2}(X_{g_i})\right]\left[\rho\otimes(\bigotimes_{i=1}^k\rho_{g_i})\right]\right\}\\
&=\tr\left\{\left(\bigotimes_{i=1}^k S_{g_i}\right)\left[\rho\otimes(\bigotimes_{i=1}^k\rho_{g_i})\right]\right\}\\
&=\tr\left[\rho(\bigotimes_{i=1}^k\rho_{g_i})\right]=T_k.
\end{aligned}
\end{equation}

\subsection{Proof of Proposition  \ref{prop:Tklocal} }\label{proof:Tklocal}
The proof of local protocol is quite similar with global protocol. According to Eq.~\eqref{eq:Xglocal}, define the data processing operator:
\begin{equation}
\begin{aligned}
\tilde{X}_{g_i}=\sum_{\vec{s}_{g_i},\vec{s}_{g_i}'}\tilde{X}_{g_i}(\vec{s}_{g_i},\vec{s}_{g_i}')\ket{\vec{s}_{g_i},\vec{s}_{g_i}'}\bra{\vec{s}_{g_i},\vec{s}_{g_i}'}=\bigotimes_{l\in g_i}X_l.
\end{aligned}
\end{equation}
Substituting this operator into Eq.~\eqref{eq:Tklocal} , the right hand side can be written as
\begin{equation}
\begin{aligned}
\mathrm{RHS} &= \tr\left\{\left(\bigotimes_{i=1}^k\tilde{X}_{g_i}\right)\mathbb{E}_U\left[\left(\bigotimes_{l=1}^nU_l\right)^{\otimes 2}\left[\rho\otimes\left(\bigotimes_{i=1}^k\rho_{g_i}\right)\right]\left(\bigotimes_{l=1}^nU_l\right)^{\dagger\otimes 2}\right]\right\}\\
&=\tr\left\{\mathbb{E}_U\left[\left(\bigotimes_{l=1}^nU_l\right)^{\dagger\otimes 2}\left(\bigotimes_{l=1}^nX_{l}\right)\left(\bigotimes_{l=1}^nU_l\right)^{\otimes 2}\right]\left[\rho\otimes\left(\bigotimes_{i=1}^k\rho_{g_i}\right)\right]\right\}\\
&= \tr\left\{\left(\bigotimes_{l=1}^n\Phi^{\dagger 2}(X_{l})\right)\left[\rho\otimes\left(\bigotimes_{i=1}^k\rho_{g_i}\right)\right]\right\}\\
&=\tr\left\{\left(\bigotimes_{l=1}^nS_l\right)\left[\rho\otimes\left(\bigotimes_{i=1}^k\rho_{g_i}\right)\right]\right\}\\
&=\tr\left[\rho(\bigotimes_{i=1}^k\rho_{g_i})\right]=T_k.
\end{aligned}
\end{equation}

\section{The variance of $\hat{M}$ and $\hat{M}_L$}\label{app:variance1}
\subsection{The variance of $\hat{M}$}
In the following, we figure out the variance $\delta^2 \doteq \mb{Var}\left[\hat{M}(t)\right]$ of the estimator $\hat{M}(t)$ for the $t$-th unitary sampling. The variance of the overall estimator $\hat{M}$ is just $\delta^2/N_U$. Hereafter, for the convenience of our analysis, we take tripartite CRO, $T_3$, as an example to analyze the error scaling and assume that the unitary ensembles are 4-design. The results for $T_3$ can be easily generalized to $T_k$ and the 4-design assumption would not lead to an order of magnitude gap of the leading term. 

With the total variance formula, we have 
\begin{equation} % \label{}
\begin{aligned}
\mb{Var}\left[\hat{M}(t)\right] = \sub{\mbb{E}}{U}\left[ \sub{\mbb{E}}{\bm{s}}\left[\hat{M}^2(t)|U\right] \right] - \left[\sub{\mbb{E}}{U}\sub{\mbb{E}}{\bm{s}}\left[\hat{M}(t)|U\right] \right]^2,
\end{aligned}
\end{equation}
where the second term is just $T_3^2$. The first term can be expanded explicitly as
\begin{equation} %\label{}
\sub{\mbb{E}}{U}\left[ \sub{\mbb{E}}{\bm{s}}\left[\hat{M}^2(t)|U\right] \right] = \binom{N_M}{4}^{-2} \sum_{\substack{i<j<k<l\\ i'<j'<k'<l'}}  \sub{\mbb{E}}{\bm{s},U}\tr[Q \hat{r}_U(i)\otimes \hat{r}_U(j)\otimes \hat{r}_U(k)\otimes \hat{r}_U(l)] \tr[Q \hat{r}_U(i')\otimes \hat{r}_U(j')\otimes \hat{r}_U(k')\otimes \hat{r}_U(l')].
\end{equation}
To evaluate the above equation, we calculate the terms in the summation depending on the coincidence of the indices, as they label random variables. In Appendix \ref{app:variance2}, we will show that when $D\gg N_M\gg 1$, which is the case of interest, the dominant term of the variance is determined by the case when all the eight indices are coincidental to four indices. This kind of phenomenon is also manifested by the previous theoretical and numerical analyses \cite{elben2020mixedstate,singlezhou,garcia2021quantum} for other quantities based on randomized measurements.
\begin{equation} \label{eq:Var4collision}
\begin{aligned}
\Gamma_4:= & \binom{N_M}{4}^{-2} \sum_{\substack{i=i'<j=j'\\ <k=k'<l=l'}}  \sub{\mbb{E}}{\bm{s},U} \tr[Q \hat{r}_U(i) \otimes \hat{r}_U(j)\otimes \hat{r}_U(k) \otimes \hat{r}_U(l)] \tr[Q \hat{r}_U(i') \otimes \hat{r}_U(j')\otimes \hat{r}_U(k') \otimes \hat{r}_U(l')] \\
&= \binom{N_M}{4}^{-1}  \sub{\mbb{E}}{\bm{s},U}\tr[Q^2 \hat{r}_U(i)\otimes \hat{r}_U(j)\otimes \hat{r}_U(k)\otimes \hat{r}_U(l)] \\
&= \binom{N_M}{4}^{-1}  \tr[Q^2 \Phi_A^4\otimes \Phi_B^4 \otimes \Phi_C^4(\rho_{ABC}^{\otimes 4})] \\
&= \binom{N_M}{4}^{-1}  \tr[\Phi_A^4\otimes \Phi_B^4 \otimes \Phi_C^4(Q^2) \rho_{ABC}^{\otimes 4}] \\
&= \binom{N_M}{4}^{-1}  \tr[\Phi_A^{(1,2)}\otimes \Phi_B^{(1,3)} \otimes \Phi_C^{(1,4)}(Q^2) \rho_{ABC}^{\otimes 4}].
\end{aligned}
\end{equation}
Here, $\Phi^{(i,j)}_A(\cdot):= \sub{\mbb{E}}{U_A}( U_A^{(i)}\otimes U_A^{(j)} ) \cdot ( U_A^{(i)}\otimes U_A^{(j)} )^\dag $ denotes the twofold twirling on subsystems $A$ of the $i$-th and $j$-th copies, similar for $\Phi^{(i,j)}_B(\cdot)$ and $\Phi^{(i,j)}_C(\cdot)$. The last equation of Eq.~\eqref{eq:Var4collision} holds because the observable $Q$ only has nontrivial definitions on the systems $A_1$, $A_2$, $B_1$, $B_3$, $C_1$ and $C_4$ as follows
\begin{equation}
    Q = \left( X_A^{(1,2)}\otimes I_A^{(3,4)} \right) \otimes \left( X_B^{(1,3)}\otimes I_B^{(2,4)} \right) \otimes \left( X_C^{(1,4)}\otimes I_C^{(2,3)} \right).
\end{equation}
Using the Weingarten integral~\cite{Gu2013Moments}, we can get 
\begin{equation}
\begin{aligned}
    \Phi_A^{(1,2)}[(X_A^{(1,2)})^2]=d_A I_A^{(1,2)} + (d_A - 1) S_A^{(1,2)},
\end{aligned}
\end{equation}
thus $\Gamma_4$ shows
\begin{equation}\label{eq:gamma4G}
\begin{aligned}
    \Gamma_4 :&= \binom{N_M}{4}^{-1}  \tr\left[ \left(d_A I_A + (d_A - 1) S^{(1,2)}_A \right)\left(d_B I_B + (d_B - 1) S^{(1,3)}_B \right)\left(d_C I_C + (d_C - 1) S^{(1,3)}_C \right) \rho_{ABC}^{\otimes 4} \right] \\
\end{aligned}
\end{equation}

It is clear that $\Gamma_4$ depends on the input state $\rho_{ABC}$. By expanding Eq.~\eqref{eq:gamma4G}, one gets a few functions of $\rho_{ABC}$, with the coefficients almost $D$. For example, 
\begin{equation}
\begin{aligned}
    &\tr\left[(d_A I_Ad_B I_B d_C I_C) \rho_{ABC}^{\otimes 4} \right]=d_Ad_Bd_C=D, \\
    &\tr\left[ \left((d_A - 1) S^{(1,2)}_A \otimes (d_B - 1) S^{(1,3)}_B \otimes (d_C - 1) S^{(1,3)}_C \right) \rho_{ABC}^{\otimes 4} \right]=(d_A - 1)(d_B - 1)(d_C-1)T_3\leq D.
\end{aligned}
\end{equation}
The term $\Gamma_4$ scales up linearly with $D$. As a result, the variance $\delta^2 \sim \Theta(D)$. 

Here, we take $T_3$ as an example to demonstrate our results. In fact, this conclusion can easily be generalized to $T_k$ measurement for any value of $k$. Following the similar thought in Appendix.~\ref{app:variance2}, we believe that the leading term is also the one which has $(k+1)$ pairs of the same indices, like $\Gamma_4$ in $T_3$ measurement. So the dominant term of variance when measuring $T_k$ is 
\begin{equation}
\begin{aligned}
\Gamma_{k+1}=\binom{N_M}{k+1}^{-1}\tr\left[\bigotimes_{i=1}^k\left(d_iI_i+(d_i-1)S_i^{(1,i+1)}\right)\rho^{\otimes k+1}\right]\sim \Theta(D/N_M^{k+1})
\end{aligned}
\end{equation}

\subsection{The unbiased estimator $\hat{M}_L$ and its variance}
In the above derivation, we consider the variance calculation when the random unitaries $U = U_A\otimes U_B\otimes U_C$ are chosen such that $U_A$, $U_B$, and $U_C$ are elements of unitary 2-design on the corresponding Hilbert spaces. Now we consider the variance estimation in the local strategy mentioned in Sec.~\ref{sec:total correlation}, when the  subsystems $A$, $B$, and $C$ are composed of qubits. In this case, the random unitary twirling is performed locally on each qubit. From Eq.~\eqref{eq:Tklocal}, we can express the tripartite correlation as follows:
\begin{equation} \label{eq:qubit measurement2}
\begin{aligned}
    T_3 = \sum_{ \vec{\bm{a}}, \vec{\bm{b}}, \vec{\bm{c}} }  \tilde{X}^{(1,2)}_A(\vec{a}^1,\vec{a}^2) \tilde{X}^{(1,3)}_B(\vec{b}^1,\vec{b}^3) \tilde{X}^{(1,4)}_C(\vec{c}^1,\vec{c}^4) \prod_{i=1}^4\sub{\mbb{E}}{U}\mathrm{Pr}(\vec{a}^i, \vec{b}^i, \vec{c}^i|U_A, U_B, U_C),
\end{aligned}
\end{equation}
where $\vec{\bm{a}}=(\vec{a}^1,\vec{a}^2,\vec{a}^3,\vec{a}^4)$ denotes the measurement result, being a string whose elements are $n_A$-bit vectors, similar for $\vec{\bm{b}}$ and $\vec{\bm{c}}$. $\tilde{X}_A^{(1,2)}(\vec{a}^1,\vec{a}^2)$ is a function on  $\vec{a}^1$ and $\vec{a}^2$,
\begin{equation}
    \tilde{X}_A^{(1,2)}(\vec{a}^1,\vec{a}^2):=\prod_{i=1}^{n_A} X_{A_i}^{(1,2)}(a^1_i,a^2_i)= 2^{n_A} (-2)^{-D[\vec{a}^1,\vec{a}^2]}.
\end{equation}
We also denote $\tilde{X}_A^{(1,2)}$ as the observable on $\mc{H}^1_A\otimes\mc{H}^2_A$,
\begin{equation}
    \tilde{X}_A^{1,2} = \sum_{ \vec{a}^1,\vec{a}^2 } \tilde{X}_A^{(1,2)}(\vec{a}^1,\vec{a}^2) \ket{\vec{a}^1,\vec{a}^2}\bra{\vec{a}^1,\vec{a}^2}.
\end{equation}
similar for $\tilde{X}_B^{(1,3)}$ and $\tilde{X}_C^{(1,4)}$. Similar to Eq.~\eqref{eq:Mt}, we can define the unbiased estimator based on the local random unitary scheme, 
\begin{equation} \label{eq:MLt}
\hat{M}_L(t) = \binom{N_M}{4}^{-1} \sum_{1\leq i<j<k<l \leq N_M} \tr\left[ \tilde{Q} \left[\hat{r}_U(i) \otimes \hat{r}_U(j) \otimes \hat{r}_U(k) \otimes \hat{r}_U(l)\right] \right],
\end{equation}
where 
\begin{equation}
    \tilde{Q} := \left( \tilde{X}_A^{(1,2)}\otimes I_A^{(3,4)} \right) \otimes \left( \tilde{X}_B^{(1,3)}\otimes I_B^{(2,4)} \right) \otimes \left( \tilde{X}_C^{(1,4)}\otimes I_C^{(2,3)} \right),
\end{equation}
is an observable on $\mc{H}^{\otimes 4}$.

Following the same deduction, to calculate the variance of $\hat{M}_L(t)$, we evaluate the leading term with four coincidences
\begin{equation} \label{eq:Var4collisionLocal}
\begin{aligned}
\tilde{\Gamma}_4:= & \binom{N_M}{4}^{-2} \sum_{\substack{i=i'<j=j'\\ <k=k'<l=l'}}  \sub{\mbb{E}}{\bm{s},U} \tr[\tilde{Q} \hat{r}_U(i) \otimes \hat{r}_U(j)\otimes \hat{r}_U(k) \otimes \hat{r}_U(l)] \tr[\tilde{Q} \hat{r}_U(i') \otimes \hat{r}_U(j')\otimes \hat{r}_U(k') \otimes \hat{r}_U(l')] \\
&= \binom{N_M}{4}^{-1}  \sub{\mbb{E}}{\bm{s},U}\tr[\tilde{Q}^2 \hat{r}_U(i)\otimes \hat{r}_U(j)\otimes \hat{r}_U(k)\otimes \hat{r}_U(l)] \\
&= \binom{N_M}{4}^{-1}  \tr[\tilde{\Phi}_A^4\otimes \tilde{\Phi}_B^4 \otimes \tilde{\Phi}_C^4(\tilde{Q}^2) \rho_{ABC}^{\otimes 4}] \\
&= \binom{N_M}{4}^{-1}  \tr[\tilde{\Phi}_A^{(1,2)}\otimes \tilde{\Phi}_B^{(1,3)} \otimes \tilde{\Phi}_C^{(1,4)}(\tilde{Q}^2) \rho_{ABC}^{\otimes 4}].
\end{aligned}
\end{equation}
The final line is because $\tilde{Q}$ acts nontrivially on the systems $A_1$, $A_2$, $B_1$, $B_3$, $C_1$ and $C_4$. The only difference compared to Eq.~\eqref{eq:Var4collision} is that both the twirling channels $\tilde{\Phi}_A^{(1,2)}$ and $\tilde{Q}_A^{(1,2)}$ have the tensor-product structure on qubits,
\begin{equation}
\begin{aligned}
    \tilde{\Phi}_A^{(1,2)}[(\tilde{Q}_A^{(1,2)})^2]=\bigotimes_{i\in A}\Phi_i^{(1,2)}[(X_i^{(1,2)})^2]=
    \bigotimes_{i\in A}(2 I_i +S^{(1,2)}_i),
\end{aligned}
\end{equation}
similar for operators on $B$ and $C$, and thus $\tilde{\Gamma}_4$ shows
\begin{equation}\label{eq:gamma4L}
\begin{aligned}
    \tilde{\Gamma}_4 :&= \binom{N_M}{4}^{-1}  \tr\left[ \bigotimes_{i\in A}(2 I_i +S^{(1,2)}_i)\bigotimes_{j\in B}(2 I_j +S^{(1,3)}_j)\bigotimes_{k\in C}(2 I_k +S^{(1,4)}_k) \rho_{ABC}^{\otimes 4} \right] \\
    &=\binom{N_M}{4}^{-1} \sum_{A'\subseteq A, B'\subseteq B, C'\subseteq C} 2^{|A|-|A'|}2^{|B|-|B'|}2^{|C|-|C'|}\tr\left[\rho_{A'B'C'}(\rho_{A'}\otimes\rho_{B'}\otimes\rho_{C'})\right]\\
    &\leq \binom{N_M}{4}^{-1} \sum_{A'\subseteq A, B'\subseteq B, C'\subseteq C} 2^n2^{-|A'|}2^{-|B'|}2^{-|C'|}\\
    &= \binom{N_M}{4}^{-1} \sum_{|A'|,|B'|,| C'|}\binom{|A|}{|A'|}\binom{|B|}{|B'|}\binom{|C|}{|C'|}   2^n2^{-|A'|}2^{-|B'|}2^{-|C'|}= \binom{N_M}{4}^{-1}2^n(1+\frac{1}{2})^{n_A+n_B+n_C}=\binom{N_M}{4}^{-1}3^n.\\
    \end{aligned}
\end{equation}
Here in the second line, we expand the terms and the summation of $A'$ runs for all subsets of $A$ including the null set. For example, if $A=\emptyset$, $\tr\left[\rho_{A'B'C'}(\rho_{A'}\otimes\rho_{B'}\otimes\rho_{C'})\right]=\tr\left[\rho_{B'C'}(\rho_{B'}\otimes\rho_{C'})\right]$. The inequality is due to the overlap is less than 1. As a result, the term $\tilde{\Gamma}_4$ is upper bounded by $\binom{N_M}{4}^{-1}3^n=\binom{N_M}{4}^{-1}D^{\log_23}\approx \binom{N_M}{4}^{-1}D^{1.585}$,  and the the variance $\delta^2 \sim O(D^{1.585})$. 

Similarly, the leading term of variance when measuring $T_k$ scales like
\begin{equation}
\begin{aligned}
\tilde{\Gamma}_{k+1}\le \binom{N_M}{k+1}^{-1} 3^n\sim \frac{3^n}{N_M^{k+1}}.
\end{aligned}
\end{equation}

\section{Detailed statistical analysis} \label{app:variance2}

Here, we provide a detailed statistical analysis of the estimation of the tripartite total correlation $\tr[\rho_{ABC}(\rho_A\otimes \rho_B \otimes \rho_C)]$. For simplicity, we will consider the case when $d_A=d_B=d_C=d$. Then $D:=d_A d_B d_C = d^3$.

Recall that we construct an estimator of $T_3$ using these variables in Eq.~\eqref{eq:Mt},
\begin{equation} \label{eq:MtApp}
\hat{M}(t) = \binom{N_M}{4}^{-1} \sum_{1\leq i<j<k<l \leq N_M} \tr\left\{ Q \left[\hat{r}_U(i) \otimes \hat{r}_U(j) \otimes \hat{r}_U(k) \otimes \hat{r}_U(l)\right] \right\}.
\end{equation}

Now, we need to calculate the variance $\delta^2$ of it. In the main text, we show that the core issue is to calculate the term,
\begin{equation} %\label{}
\sub{\mbb{E}}{U}\left[ \sub{\mbb{E}}{\bm{s}}(\hat{M}_+^2(t)|U) \right] = \binom{N_M}{4}^{-2} \sum_{\substack{i<j<k<l\\ i'<j'<k'<l'}}  \sub{\mbb{E}}{s,U}\tr[Q \hat{r}_U(i)\otimes \hat{r}_U(j)\otimes \hat{r}_U(k)\otimes \hat{r}_U(l)] \tr[Q \hat{r}_U(i')\otimes \hat{r}_U(j')\otimes \hat{r}_U(k')\otimes \hat{r}_U(l')].
\end{equation}

Based on the coincident number of the sample indices $i,i';j,j';k,k';l,l'$, we may classify the terms as follows,
\begin{enumerate}
\item No coincidence, i.e., the eight sample indices are all different. The number of their terms is $$N_8 = \binom{N_M}{8}\binom{8}{0}\binom{8}{4}.$$
We denote the sum of these terms as $\Gamma_8$.
\item One coincidence. In this case, we further classify the terms based on the coincident index:
\begin{enumerate}
	\item The coincident indices are $i$ and $i'$. The number of their terms is $N_7^{(2)} = \binom{N_M}{7}\binom{6}{3}$. We denote the sum of these terms as $\Gamma_7^{(2)}$.
	\item One of the coincident indices is $i$ or $i'$. The number of their terms is
	$$N_7^{(1)} =\binom{N_M}{7}\left[ \binom{6}{3} + 2\binom{2}{2}\binom{4}{1} + 2\binom{3}{3}\binom{3}{0} \right] = \binom{N_M}{7}\left[ \binom{6}{3} + 10 \right].$$ We denote the sum of these terms as $\Gamma_7^{(1)}$.
	\item None of the coincident indices is $i$ or $i'$. The number of their terms is $$N_7^{(0)} =\binom{N_M}{7}\left\{ \left[ \binom{6}{3} - 2\binom{2}{2}\binom{4}{1} \right] + \left[ \binom{6}{3} - 2\binom{3}{3}\binom{3}{0} \right] + 3\binom{6}{3} \right\} = \binom{N_M}{7}\left( 5\binom{6}{3} - 10 \right). $$ We denote the sum of these terms as $\Gamma_7^{(0)}$.
\end{enumerate}
\item Two coincidences. In this case, we also further classify the terms based on the coincident index:
\begin{enumerate}
	\item The coincident indices contain both $i$ and $i'$. The number of their terms is $N_6^{(2)} =\binom{N_M}{6}\binom{5}{1}\binom{4}{2}.$ We denote the sum of these terms as $\Gamma_6^{(2)}$.
	\item The coincident indices contain either $i$ or $i'$. The number of their terms is $$N_6^{(1)} = \binom{N_M}{6}\left[ \binom{4}{1}\binom{4}{2} + 2\binom{3}{1}\binom{2}{2}\binom{2}{0} \right] = \binom{N_M}{6}\left[ \binom{4}{1}\binom{4}{2} + 6 \right].$$ We denote the sum of these terms as $\Gamma_6^{(1)}$.
	\item The coincident indices do not contain $i$ or $i'$. The number of their terms is $$N_6^{(0)} = \binom{N_M}{6}\left\{ \binom{3}{1}\left[\binom{4}{2} - 2\binom{2}{2} \right] + \binom{2}{1}\binom{4}{2} + \binom{1}{1}\binom{4}{2} \right\} = \binom{N_M}{6}\left\{\left[\binom{3}{1} + \binom{2}{1} + \binom{1}{1} \right]\binom{4}{2} -6 \right\}.$$ We denote the sum of these terms as $\Gamma_6^{(0)}$.
\end{enumerate}
\item Three coincidences. In this case, we also further classify the terms based on the coincident index:
\begin{enumerate}
	\item The coincident indices contain both $i$ and $i'$. The number of their terms is $N_5^{(2)} = \binom{N_M}{5}\binom{4}{2}\binom{2}{1}$. We denote the sum of these terms as $\Gamma_5^{(2)}$.
	\item The coincident indices contain either $i$ or $i'$. The number of their terms is $N_5^{(1)} =\binom{N_M}{5}\binom{3}{2}\binom{2}{1}$. We denote the sum of these terms as $\Gamma_5^{(1)}$.
	\item The coincident indices do not contain $i$ or $i'$. The number of their terms is $N_5^{(0)} = \binom{N_M}{5}\binom{2}{2}\binom{2}{1}$. We denote the sum of these terms as $\Gamma_5^{(0)}$.
\end{enumerate}
\item Four coincidences, i.e., the eight sample indices collapse to four degenerated indices. The number of their terms is $N_4 = \binom{N_M}{4}\binom{4}{4}\binom{0}{0}$. We denote the sum of these terms as $\Gamma_4$.
\end{enumerate}

We can then expand the variance term as follows,
\begin{equation} %\label{}
\sub{\mbb{E}}{U}\left[ \sub{\mbb{E}}{s}\left[\hat{M}^2(t)|U\right] \right] = \Gamma_8 + \left(\Gamma_7^{(2)}+ \Gamma_7^{(1)} + \Gamma_7^{(0)}\right) + \left(\Gamma_6^{(2)} + \Gamma_6^{(1)} + \Gamma_6^{(0)}\right) + \left(\Gamma_5^{(2)} + \Gamma_5^{(1)} + \Gamma_5^{(0)}\right) + \Gamma_4.
\end{equation}

In what follows, we focus on the case when $D\gg N_M\gg 1$, which is the case of interest. We want to show that the term $\Gamma_4$ owns the highest dependence of the scaling of $D$.

\begin{proposition} \label{prop:Gamma}
When $D\gg N_M\gg 1$, in the tripartite total correlation estimation task, the different variance terms have the following dependence on the dimension $D$:
\begin{equation}
\begin{aligned}
&\Gamma_8 = O(1), \\
&\Gamma_7^{(2)} = O(1),\quad \Gamma_7^{(1)} = \Gamma_7^{(0)} = O(1), \\
&\Gamma_6^{(2)} = O(d) = O(D^{1/3}), \quad \Gamma_6^{(1)} = \Gamma_6^{(0)} = O(1), \\
&\Gamma_5^{(2)} = O(d^2) = O(D^{2/3}), \quad \Gamma_5^{(1)} = \Gamma_5^{(0)} = O(1), \\
&\Gamma_4 = \Theta(d^3) = \Theta(D).
\end{aligned}
\end{equation}
\end{proposition}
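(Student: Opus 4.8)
The plan is to handle all coincidence classes by a single reduction and then read off the power of $D=d^3$ from the twirl coefficients alone. First I would observe that for a pattern with $c$ coincidences between the two index groups $\{i,j,k,l\}$ and $\{i',j',k',l'\}$, the $m=8-c$ distinct sample indices label genuinely independent measurement outcomes for a fixed $U$. Since each $X$ is diagonal in the computational basis, identifying the eight tensor slots according to the pattern turns $Q\otimes Q$ into a single diagonal operator $\mathcal{O}_{\text{pat}}$ on $m$ copies, the measurement average $\mathbb{E}_{\bm s}$ is absorbed into it, and $\mathbb{E}_U$ becomes an $m$-fold subsystem twirl. Every term then collapses to
\begin{equation}
\Gamma=\binom{N_M}{4}^{-2}N_{\text{pat}}\,\tr\!\left[\left(\Phi_A^{(m)}\otimes\Phi_B^{(m)}\otimes\Phi_C^{(m)}\right)(\mathcal{O}_{\text{pat}})\,\rho_{ABC}^{\otimes m}\right],
\end{equation}
the direct generalization of the $\Gamma_4$ manipulation in \eqref{eq:Var4collision}. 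The combinatorial prefactor $\binom{N_M}{4}^{-2}N_{\text{pat}}$ is independent of $D$, so the whole $D$-dependence sits in the twirled trace.

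Second, I would evaluate each subsystem factor $\Phi_G^{(m)}(\mathcal{O}_G)$ with the Weingarten expansion \eqref{eq:twirling channel}, writing it as a combination $\sum_\sigma c_\sigma(d)\,W_\sigma$ of permutation operators. The only building blocks that occur are $X$'s placed on the $m$ copies by the two $Q$'s, so each $\mathcal{O}_G$ is a product of at most two $X$'s. The controlling identities are the single-$X$ twirl $\Phi^2(X)=S$ of \eqref{eq:SWAP construction}, which carries no factor of $d$, and the doubled twirl $\Phi^2[X^2]=d\,I+(d-1)S$ used in \eqref{eq:gamma4G}, which carries exactly one factor of $d$ on its identity part; the partially overlapping cases require the three- and four-fold analogues such as $\Phi^3(X^{(1,2)}X^{(1,3)})$. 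After the twirl is resolved, every surviving trace has the factorized form $\tr[W_\sigma\,\rho_{ABC}^{\otimes m}]=\prod_{\text{cycles}}\tr(\rho^{\,|\text{cycle}|})=O(1)$, so the power of $D$ is determined entirely by the coefficients $c_\sigma(d)$.

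Third, I would count these powers of $d$ per subsystem. A factor of $d$ is generated on subsystem $G$ exactly when the two $X$'s contributed by the two $Q$'s sit on the \emph{same} merged copy-pair, producing an $X^2$ and hence the $d\,I$ term; a lone $X$, or two $X$'s sharing only one copy, twirls to a SWAP with $O(1)$ coefficient and no extra $d$. The pivotal role of copy $1$ (the indices $i,i'$) is that it is the hub shared by $X_A^{(1,2)},X_B^{(1,3)},X_C^{(1,4)}$; only when the two copies of $1$ coincide can the remaining aligned coincidences $j=j',k=k',l=l'$ turn into genuine $X^2$ doublings. This is why only the $(2)$ classes acquire $d$-factors, one per aligned doubling: $\Gamma_4$ doubles all three subsystems ($d^3$), $\Gamma_5^{(2)}$ two ($d^2$), $\Gamma_6^{(2)}$ one ($d$), and $\Gamma_7^{(2)}$ none ($O(1)$), while every $(1)$, $(0)$, and the fully disjoint $\Gamma_8$ fail to share the hub and stay $O(1)$. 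For the $\Theta(D)$ claim on $\Gamma_4$ I would finally exhibit the matching lower bound: the all-identity component of $\Phi(Q^2)$ has coefficient $d_Ad_Bd_C$ multiplying $\tr(\rho_{ABC}^{\otimes 4})=1$, and being manifestly positive it cannot be cancelled by the remaining $O(D)$ terms.

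The main obstacle will be the partial-coincidence bookkeeping behind $\Gamma_5,\Gamma_6,\Gamma_7$, where the two $X$'s on a subsystem genuinely share one copy without coinciding, forcing three- and four-fold twirls whose Weingarten coefficients must be checked to confirm that their leading permutation is a SWAP with an $O(1)$ (not $O(d)$) weight. Since the proposition asks only for upper bounds when $m>4$, I can discharge this crudely: expand each twirl into permutations, bound $|\tr(W_\sigma\rho^{\otimes m})|\le 1$, and retain only the largest power of $d$ appearing among the coefficients. This converts the difficulty into a finite enumeration of the $(2)/(1)/(0)$ subcases rather than an exact evaluation of every term.
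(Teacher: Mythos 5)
Your proposal is correct and follows the same overall strategy as the paper: classify the eight sample indices by coincidence pattern, collapse each pattern to a twirled trace of a diagonal postprocessing operator on $m=8-c$ copies, and read off the $d$-dependence per subsystem from the dichotomy between a genuine $X^2$ doubling (which produces the $d\,I$ term of $\Phi^2[X^2]=d\,I+(d-1)S$ and hence one factor of $d$) and a lone, partially overlapping, or disjoint pair of $X$'s (which contributes $O(1)$). Where you differ is in how the $O(1)$ cases are certified: the paper evaluates each class \emph{exactly} for a pure tensor-product state, using the Weingarten row-sum identity $\sum_{\pi}C_{\pi,\sigma}=(d-1)!/(d+k-1)!$ to reduce everything to the combinatorial sums $\sum_{\sigma\in S_t}\tr(W_\sigma\,\cdot\,)$ computed in its Proposition~\ref{prop:SumPermu} via partition-class tables, and then argues that the pure-product case dominates; you instead bound $\abs{\tr(W_{\sigma_A}\otimes W_{\sigma_B}\otimes W_{\sigma_C}\,\rho^{\otimes m})}\le 1$ and take the largest power of $d$ among the Weingarten coefficients. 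Your route is cruder (it loses the explicit leading constants the paper obtains, e.g.\ $3d^2(d+1)$ and $d(d+1)(d^2+9d+2)$) but is arguably more robust, since the paper's claim that each trace being at most $1$ in absolute value implies domination by the pure-product case is delicate given that the $C_{\pi,\sigma}$ have mixed signs, whereas your operator-norm bound needs no such monotonicity. Two small points to tighten: your statement that a doubling requires ``the two copies of $1$ to coincide'' should also admit the crossed identification (e.g.\ $i=j'$, $j=i'$), which by the symmetry of $X$ likewise produces $X_A^2$ and still falls in the paper's $(2)$ class, so the conclusion is unchanged; and your $m$-fold twirl should be reduced to an at-most-fourfold twirl per subsystem (the two $X$'s occupy at most four copies, the rest being identities), which is what makes the paper's $4$-design assumption sufficient. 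Your positivity argument for the $\Theta(D)$ lower bound on $\Gamma_4$ is sound, since every term in the expansion of $\prod_G\bigl(d_G I_G+(d_G-1)S_G\bigr)$ paired with $\rho_{ABC}^{\otimes 4}$ is a trace of a product of positive semidefinite marginals and hence non-negative.
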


\begin{proof}

We will study the variance terms one by one. 
\begin{equation}
\begin{aligned}
&\binom{N_M}{4}^{2} \binom{N_M}{8}^{-1}\binom{8}{4}^{-1}\Gamma_8 =\sub{\mbb{E}}{U} \tr[Q (\rho_U)^{\otimes 4}]^2 =  \sub{\mbb{E}}{U} \tr[Q^{\otimes 2} (\rho_U)^{\otimes 8}] \\
&= \tr\left[\rho^{\otimes 8} \Phi_A^{(1,2,5,6)}\left(X_A^{(1,2)}\otimes X_A^{(5,6)}\right) \otimes \Phi_B^{(1,3,5,7)}\left(X_B^{(1,3)}\otimes X_B^{(5,7)}\right) \otimes \Phi_C^{(1,4,5,8)}\left(X_C^{(1,4)}\otimes X_C^{(5,8)}\right) \right] \\
&= \sum_{\substack{\pi_1,\pi_2,\pi_3,\\ \sigma_1, \sigma_2, \sigma_3 \in S_4}} C_{\pi_1,\sigma_1} C_{\pi_2,\sigma_2} C_{\pi_3,\sigma_3} \tr\left[\rho^{\otimes 8} \left( W^A_{\pi_1} \otimes W^B_{\pi_2} \otimes W^C_{\pi_3} \right) \right] \tr\left[ \left(X_A^{(1,2)}\otimes X_A^{(5,6)}\right) W^A_{\sigma_1} \right] \\
& \times\qquad\qquad\qquad \tr\left[ \left(X_B^{(1,3)}\otimes X_B^{(5,7)}\right) W^B_{\sigma_2} \right] \tr\left[ \left(X_C^{(1,4)}\otimes X_C^{(5,8)}\right) W^C_{\sigma_3} \right] \\
\end{aligned}
\end{equation}
Here, $\rho_U := U \rho U^\dag$. In the third equality, we assume the random unitaries form a unitary $4$-design. In the fourth equality, we use the Weingarten integral formula~\cite{Gu2013Moments}.

The value of $\Gamma_8$ is obviously state dependent. However, when we consider the asymptotic case when $d\gg N_M \gg 1$, to analyze the scaling of $\Gamma_8$ with $d$, we always consider a \emph{pure tensor state} $\rho= \ket{\psi}_A\bra{\psi}\otimes \ket{\psi}_B\bra{\psi}\otimes\ket{\psi}_C\bra{\psi}$. In this case, the values 
\begin{equation}
\tr\left[\rho^{\otimes 8} \left( W^A_{\pi_1} \otimes W^B_{\pi_2} \otimes W^C_{\pi_3} \right) \right]
\end{equation}
are always $1$. We remark that, if $\rho$ is not a pure tensor state, the absolute value of this term is always smaller than $1$. From this perspective, the pure-tensor-state case will always provide an upper bound of the variance term dependence.

If we set the state $\rho$ to be a pure tensor state, then we have
\begin{equation}
\begin{aligned}
&\binom{N_M}{4}^{2} \binom{N_M}{8}^{-1}\binom{8}{4}^{-1}\Gamma_8\\
&= \sum_{\substack{\pi_1,\pi_2,\pi_3,\\ \sigma_1, \sigma_2, \sigma_3 \in S_4}} C_{\pi_1,\sigma_1} C_{\pi_2,\sigma_2} C_{\pi_3,\sigma_3} \tr\left[ \left(X_A^{(1,2)}\otimes X_A^{(5,6)}\right) W^A_{\sigma_1} \right] \tr\left[ \left(X_B^{(1,3)}\otimes X_B^{(5,7)}\right) W^B_{\sigma_2} \right] \tr\left[ \left(X_C^{(1,4)}\otimes X_C^{(5,8)}\right) W^C_{\sigma_3} \right] \\
&= \sum_{\sigma_1, \sigma_2, \sigma_3 \in S_4} \tr\left[ \left(X_A^{(1,2)}\otimes X_A^{(5,6)}\right) W^A_{\sigma_1} \right] \tr\left[ \left(X_B^{(1,3)}\otimes X_B^{(5,7)}\right) W^B_{\sigma_2} \right] \tr\left[ \left(X_C^{(1,4)}\otimes X_C^{(5,8)}\right) W^C_{\sigma_3} \right] \\
&\times \qquad\qquad\qquad \sum_{\pi_1, \pi_2, \pi_3 \in S_4} C_{\pi_1,\sigma_1} C_{\pi_2,\sigma_2} C_{\pi_3,\sigma_3} \\
&= \left\{  \sum_{\sigma_1 \in S_4} \tr\left[ \left(X_A^{(1,2)}\otimes X_A^{(5,6)}\right) W^A_{\sigma_1} \right] \sum_{\pi_1 \in S_4} C_{\pi_1,\sigma_1}  \right\}^3 \\
&= \left\{ \frac{(d-1)!}{(d+3)!} \sum_{\sigma \in S_4} \tr\left[ \left(X_A^{\otimes 2}\right) W^A_{\sigma} \right]  \right\}^3  \\
&= \left\{ \frac{(d-1)!}{(d+3)!} d(d+1)(d^2+9d+2)  \right\}^3.
\end{aligned}
\end{equation}
In the last equation, we have used Proposition  \ref{prop:SumPermu}. Therefore, $\Gamma_8\sim O(1)$.

Following similar methods, if we assume the state to be a pure tensor state, we can prove that
\begin{equation}
\begin{aligned}
\frac{1}{N_7^{(2)}}\binom{N_M}{4}^{2} \Gamma_7^{(2)}
&= \left\{ \frac{(d-1)!}{(d+2)!} \sum_{\sigma \in S_3} \tr\left[ \left(X_A^{(12,13)}\right) W^A_{\sigma} \right]  \right\}^3  \\
&= \left\{ \frac{(d-1)!}{(d+2)!}\; 3d^2(d+1)  \right\}^3 \sim O(1) \\
\end{aligned}
\end{equation}

\begin{equation}
\begin{aligned}
& \frac{1}{N_7^{(1)}}\binom{N_M}{4}^{2} \Gamma_7^{(1)} = \frac{1}{N_7^{(0)}}\binom{N_M}{4}^{2} \Gamma_7^{(0)}\\
&= \left\{ \frac{(d-1)!}{(d+2)!} \sum_{\sigma \in S_3} \tr\left[ \left(X_A^{(12,13)}\right) W^A_{\sigma} \right]  \right\} \left\{ \frac{(d-1)!}{(d+3)!} \sum_{\sigma \in S_4} \tr\left[ \left(X_A^{\otimes 2}\right) W^A_{\sigma} \right]  \right\}^2  \\
&= \left\{ \frac{(d-1)!}{(d+2)!}\; 3d^2(d+1)  \right\} \left\{ \frac{(d-1)!}{(d+3)!} d(d+1)(d^2+9d+2)  \right\}^2 \sim O(1) \\
\end{aligned}
\end{equation}

\begin{equation}
\begin{aligned}
& \frac{1}{N_6^{(2)}}\binom{N_M}{4}^2 \Gamma_6^{(2)}\\
&= \left\{ \frac{(d-1)!}{(d+2)!} \sum_{\sigma \in S_3} \tr\left[ \left(X_A^{(12,13)}\right) W^A_{\sigma} \right]  \right\}^2 \left\{ \frac{(d-1)!}{(d+1)!} \sum_{\sigma \in S_2} \tr\left[ X_A^2 W^A_{\sigma} \right]  \right\}  \\
&= \left\{ \frac{(d-1)!}{(d+2)!}\; 3d^2(d+1)  \right\}^2 \left\{ \frac{(d-1)!}{(d+1)!} d(2d-1)(d+1)  \right\} \sim O(d) =O(D^{1/3})\\
\end{aligned}
\end{equation}

\begin{equation}
\begin{aligned}
& \frac{1}{N_6^{(1)}}\binom{N_M}{4}^{2} \Gamma_6^{(1)} = \frac{1}{N_6^{(0)}}\binom{N_M}{4}^{2} \Gamma_6^{(0)} \\
&= \left\{ \frac{(d-1)!}{(d+2)!} \sum_{\sigma \in S_3} \tr\left[ \left(X_A^{(12,13)}\right) W^A_{\sigma} \right]  \right\}^3  \\
&= \left\{ \frac{(d-1)!}{(d+2)!}\; 3d^2(d+1)  \right\}^3 \sim O(1) \\
\end{aligned}
\end{equation}

\begin{equation}
\begin{aligned}
& \frac{1}{N_5^{(2)}}\binom{N_M}{4}^{2} \Gamma_5^{(2)} \\
&= \left\{ \frac{(d-1)!}{(d+2)!} \sum_{\sigma \in S_3} \tr\left[ \left(X_A^{(12,13)}\right) W^A_{\sigma} \right]  \right\} \left\{ \frac{(d-1)!}{(d+1)!} \sum_{\sigma \in S_2} \tr\left[ X_A^2 W^A_{\sigma} \right]  \right\}^2  \\
&= \left\{ \frac{(d-1)!}{(d+2)!}\; 3d^2(d+1)  \right\} \left\{ \frac{(d-1)!}{(d+1)!} d(2d-1)(d+1)  \right\}^2 \sim O(d^2) = O(D^{2/3})\\
\end{aligned}
\end{equation}

\begin{equation}
\begin{aligned}
& \frac{1}{N_5^{(1)}}\binom{N_M}{4}^{2} \Gamma_5^{(1)}= \frac{1}{N_5^{(0)}}\binom{N_M}{4}^{2} \Gamma_5^{(0)} \\
&= \left\{ \frac{(d-1)!}{(d+2)!} \sum_{\sigma \in S_3} \tr\left[ \left(X_A^{(12,13)}\right) W^A_{\sigma} \right]  \right\}^3  \\
&= \left\{ \frac{(d-1)!}{(d+2)!}\; 3d^2(d+1)  \right\}^3 \sim O(1) \\
\end{aligned}
\end{equation}

\begin{comment}
\begin{equation}
\begin{aligned}
\binom{N_M}{4}\Gamma_4 = \left\{ \frac{(d-1)!}{(d+1)!} \sum_{\sigma \in S_2} \tr\left[ X_A^2 W^A_{\sigma} \right]  \right\}^3 = \left\{ \frac{(d-1)!}{(d+1)!} d(2d-1)(d+1)  \right\}^3 \sim O(d^3) = O(D) \\
\end{aligned}
\end{equation}
\end{comment}

The term $\Gamma_4$ has already been calculated in Eq.~\eqref{eq:gamma4G}, which is
\begin{equation}
\begin{aligned}
    \Gamma_4 = \binom{N_M}{4}^{-1}  \tr\left[ \left(d I_A + (d - 1) S^{(1,2)}_A \right)\left(d I_B + (d - 1) S^{(1,3)}_B \right)\left(d I_C + (d - 1) S^{(1,3)}_C \right) \rho_{ABC}^{\otimes 4} \right] \sim \Theta(D).
\end{aligned}
\end{equation}

\end{proof}

In the above proofs, we have used the following results.

\begin{proposition} \label{prop:SumPermu}
For the observables defined on several copies of $\mc{H}_A$,
\begin{equation}
\begin{aligned}
& X^2 = \sum_{ \bm{a}\in \mbb{Z}^2_d } X^2(a^1,a^2) \ket{a^1,a^2}\bra{a^1,a^2}, \\
& X^{(12,13)} = \sum_{ \bm{a}\in \mbb{Z}^3_d } X(a^1,a^2)X(a^1,a^3) \ket{a^1,a^2,a^3}\bra{a^1,a^2,a^3}, \\
& X^{\otimes 2} = \sum_{ \bm{a}\in \mbb{Z}^4_d } X(a^1,a^2)X(a^3,a^4) \ket{a^1,a^2,a^3,a^4}\bra{a^1,a^2,a^3,a^4}, \\
\end{aligned}
\end{equation}
where $X(a^1,a^2) = -(-d)^{\delta_{a^1,a^2}}$. When the dimension of $\mc{H}_A$ is $d$, we have
\begin{equation}
\begin{aligned}
& \sum_{\sigma\in S_2} \tr( W_\sigma X^2 ) = d(2d-1)(d+1), \\
& \sum_{\sigma\in S_3} \tr( W_\sigma X^{(12,13)} ) = 3d^2(d+1), \\
& \sum_{\sigma\in S_4} \tr( W_\sigma X^{\otimes 2} ) = d(d+1)(d^2+9d+2).
\end{aligned}
\end{equation}
\end{proposition}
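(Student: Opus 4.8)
The plan is to prove Proposition~\ref{prop:SumPermu} by reducing each sum $\sum_\sigma \tr(W_\sigma O)$, for $O \in \{X^2, X^{(12,13)}, X^{\otimes 2}\}$, to an elementary combinatorial sum over index strings, exploiting that every such $O$ is diagonal in the computational basis. The single structural fact on which everything rests is the following: for any diagonal operator $O = \sum_{\vec a} O(\vec a)\,\ket{\vec a}\bra{\vec a}$ on $\mc H_A^{\otimes t}$ and any $\sigma \in \mc S_t$,
\[
\tr(W_\sigma O) = \sum_{\vec a} \langle \vec a| W_\sigma|\vec a\rangle\, O(\vec a) = \sum_{\substack{\vec a:\ a^i = a^{\sigma(i)}\ \forall i}} O(\vec a),
\]
because $\langle \vec a|W_\sigma|\vec a\rangle = \prod_i \delta_{a^i, a^{\sigma(i)}}$ is nonzero precisely when $\vec a$ is constant on every cycle of $\sigma$. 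Thus each trace becomes the restricted sum of $O(\vec a)$ over strings that are constant on the cycles of $\sigma$.

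Next I would exploit the two-valued structure $X(a,b) = (d+1)\delta_{a,b} - 1$, which takes the value $d$ when $a=b$ and $-1$ otherwise, together with the elementary identities $\sum_b X(a,b) = 1$, $\sum_{a,b} X(a,b) = d$, $\sum_a X(a,a) = d^2$, and $\sum_{a,b} X(a,b)^2 = d^3 + d^2 - d$. Since $X$ takes only two values, once the cycle partition of $\sigma$ is fixed the restricted sum depends only on how the coupled index pairs appearing in $O$ (namely $(1,2)$; $(1,2),(1,3)$; or $(1,2),(3,4)$) intersect the cycles: a pair whose two indices lie in the same cycle contributes the constant factor $d$, while cross-cycle pairs are contracted through the matrix $X$ using identities such as $\sum_b X(a,b)=1$ and $\sum_{a,b} X(a,b)^2 = d^3+d^2-d$.

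I would then carry out the finite enumeration over $\mc S_2$, $\mc S_3$, and $\mc S_4$ (respectively $2$, $6$, and $24$ permutations), grouping permutations by cycle type and by how their cycles meet the edge set of $O$, and sum the contributions. As an illustration of the bookkeeping, for $X^{(12,13)}$ the identity contributes $d$, the three transpositions contribute $d^2$, $d^2$, and $d^3+d^2-d$, and the two $3$-cycles contribute $d^3$ and $d^3$, totalling $3d^3 + 3d^2 = 3d^2(d+1)$, which matches the claimed value. Collecting and factoring the analogous totals yields $d(2d-1)(d+1)$ and $d(d+1)(d^2+9d+2)$ for the other two cases.

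The main obstacle is the $\mc S_4$ computation for $X^{\otimes 2}$: with $24$ permutations and two disjoint coupling edges $(1,2)$ and $(3,4)$, one must track for each permutation how its cycle partition intersects \emph{both} edges, and the resulting cross-cycle contractions are the most error-prone part, even though each individual term is a product of the elementary sums above. Organizing the enumeration by the joint pattern of how $\{1,2\}$ and $\{3,4\}$ are distributed among the cycles of $\sigma$ keeps this tractable and delivers $d(d+1)(d^2+9d+2)$, completing the proof.
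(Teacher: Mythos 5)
Your proposal is correct, and the $X^{(12,13)}$ bookkeeping you display (identity $\to d$; transpositions $\to d^2, d^2, d^3+d^2-d$; $3$-cycles $\to d^3$ each) checks out exactly; carrying the same scheme through $\mathcal{S}_4$ does give $d^2 + (2d^3+4d) + (d^4 + 2(d^3+d^2-d)) + 8d^2 + 6d^3 = d^4+10d^3+11d^2+2d = d(d+1)(d^2+9d+2)$, so the one case you only sketch does close. The route is the dual of the paper's: you fix $\sigma$ and collapse the string sum using the cycle structure of $\sigma$ together with the elementary contractions $\sum_b X(a,b)=1$, $\sum_a X(a,a)=d^2$, $\sum_{a,b}X(a,b)^2=d^3+d^2-d$, then enumerate the $2$, $6$, or $24$ permutations grouped by how their cycles meet the coupling edges. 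The paper instead exchanges the order of the double sum: it fixes the string $\bm{a}$, uses $\sum_{\pi}\bra{\bm a}W_\pi\ket{\bm a}=\prod_i \lambda_i(\bm a)!$ where $\lambda(\bm a)$ is the partition recording repeated values, and then classifies strings by partition class and subtype in a table. The two organizations are equally exact; yours factorizes each term over cycles and lets the identity $\sum_b X(a,b)=1$ kill many contributions early, while the paper's has the advantage of a single uniform weight $T(\bm a)=\prod_i\lambda_i!$ and a tabulation that scales more gracefully when the permutation group is large but the number of string types is small. The only thing to flag is that your $\mathcal{S}_4$ total is asserted rather than exhibited; since that is the step you yourself identify as most error-prone, writing out the five cycle-type classes (as done above) is worth the half page.
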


\begin{proof}
First we note that,
\begin{equation}
\begin{aligned}
\sum_{\sigma\in S_2} &\tr( W_\sigma X^2 ) = \tr[(I+S)X^2] \\
&= \sum_{a^1,a^2} d^2 (d^2)^{\delta[a^1,a^2]-1} \bra{a^1,a^2} (I+S) \ket{a^1,a^2} \\
&= 2d^3 + d^2 - d \sim O(d^3). 
\end{aligned}
\end{equation}

Then we consider the term for $X^{(12,13)}$. Following the analysis in Ref.~\cite{singlezhou}, we denote the cycle structures (conjugate classes) of the elements $\sigma\in S_t$ using the partition numbers $[\xi_1,\xi_2,\dots, \xi_k]$ where $\xi_1\geq \xi_2 \geq \cdots \geq \xi_k\geq 0$. Also, we can classify $t$-dit strings $\bm{a}$ by the partition numbers $\lambda{\bm{a}}$. For example, the partition number $\lambda(\bm{a})$ of $\bm{a} = (1,2,1)$ is $[2,1]$.

After classifying the cycle structure of the elements in $S_t$, for a diagonal observable $Q$ in the Hilbert space $\mc{H}_A^{\otimes t}$, we have
\begin{equation}
\begin{aligned}
\sum_{\pi\in S_t} \tr(W_\pi Q) = \sum_{\bm{a}\in \mbb{Z}_d^t} Q(\bm{a}) \sum_{\pi\in S_t} \bra{\bm{a}} W_\pi \ket{\bm{a}} = \sum_{\bm{a}\in \mbb{Z}^t_d} Q(\bm{a}) T(\bm{a}),
\end{aligned}
\end{equation}
where
\begin{equation}
T(\bm{a}) = \sum_{\pi \in S_t} \tr(W_\pi \ket{\bm{a}}\bra{\bm{a}}) = \prod_{i=1}^k (\lambda_i(\bm{a}))!.
\end{equation}
The value of $T(\bm{a})$ only depends on the cycle structure of $\bm{a}$, i.e., how many values in $\bm{a}$ are the same.

Furthermore, to calculate $\sum_{\bm{a}\in \mbb{Z}_d^t} Q(\bm{a}) T(\bm{a})$, we first classify all the $t$-dit strings $\bm{a}\in \mbb{Z}_d^t$ by their partitions $\lambda{\bm{a}}$, and then futher divide them by the weight of the subsystems. By counting the weight of the subsystems, we define the ``subtypes'' $\{j_\lambda\}$ of a given partition class $\lambda$ of $\bm{a}$. 
The partition $\lambda$ and subtype $j_\lambda$ determine the value of $T(\bm{a})=T_{\lambda(\bm{a})}$ and $Q(\bm{a}) = Q(j_\lambda)$, respectively. We then count the number of elements $\bm{a}$ in all partition classes and subtypes, and finally figure out the results.

To be more specific,
\begin{equation} \label{eq:sumQaTa}
\begin{aligned}
\sum_{\bm{a}\in \mbb{Z}_d^{t}} Q(\bm{a}) T(\bm{a})&= \sum_{\lambda} T_\lambda \sum_{\bm{a}\in \lambda} Q(\bm{a})\\
&= \sum_{\lambda} T_\lambda \sum_{(j_\lambda)\in\lambda}\#\{j_\lambda\} Q(j_\lambda).
\end{aligned}
\end{equation}

For the $X^{(12,13)}$ case, we need to estimate $\sum_{\bm{a}\in \mbb{Z}_d^3} X^{(12,13)}(\bm{a}) T(\bm{a}) $. When $t=3$, the partition class of $\mbb{Z}_d^3$ determines the subsystem weight in $\mbb{Z}_d^3$. We classify the elements by $\lambda$ and list the values of $T_\lambda$ and $X^{(12,13)}(j_\lambda)$ in Table~\ref{tab:classX1213}. 

\begin{table}[htbp]
\begin{tabular}{c|cc||c|ccc}
\hline\hline
Partition classes $\lambda$ & $\#\{\lambda\}$ & $T_\lambda$ & Subtype $j_\lambda: a^1 | a^2 | a^3$ & $\#\{j_\lambda\}$ & $(wt(a^1, a^2), wt(a^1, a^3))$ & $X^{(12,13)}(j_\lambda)$ \\ \hline
$[111]$  & $1\times A_d^3$ & $1$ & $a|b|c$ & $1\times A_d^3$ & $(1,1)$ & $1$ \\ \hline
$[21]$  & $3\times A_d^2$ & $2$ & $a|a|b$ & $2\times A_d^2$ & $(2,1)$ & $-d$ \\ 
 &  &  & $b|a|a$ & $1\times A_d^2$ & $(1,1)$ & $1$ \\ 
\hline
$[3]$  & $1\times A_d^1$ & $6$ & $a|a|a$ & $1\times A_d^1$ & $(2,2)$ & $d^2$ \\ \hline
\hline
\end{tabular}
\caption{The classes and elements number of $\bm{a}$ for $T_\lambda$ and $X^{(12,13)}(j_\lambda)$.} \label{tab:classX1213}
\end{table}
Therefore,
\begin{equation}
\begin{aligned}
\sum_{\pi\in S_3} \tr(W_\pi X^{(12,13)}) &= \sum_{\bm{a}\in \mbb{Z}^3_d} X^{(12,13)} T(\bm{a}) \\
&= 1\times 1\times A_d^3 + (-d)\times 2\times 2A_d^2 + 1\times 2 \times A_d^2 + d^2\times 6\times A_d^1 \\
&= 3d^2(d+1).
\end{aligned}
\end{equation}

For the $X^{\otimes 2}$ case, we need to estimate $\sum_{\bm{a}\in \mbb{Z}_d^4} X^{\otimes 2}(\bm{a}) T(\bm{a}) $. When $t=4$, the partition class of $\mbb{Z}_d^4$ determines the subsystem weight in $\mbb{Z}_d^4$. We classify the elements by $\lambda$ and list the values of $T_\lambda$ and $X^{\otimes 2}(j_\lambda)$ in Table~\ref{tab:classXX}. 

\begin{table}[htbp]
\begin{tabular}{c|cc||c|ccc}
\hline\hline
Partition classes $\lambda$ & $\#\{\lambda\}$ & $T_\lambda$ & Subtype $j_\lambda: a^1 a^2 | a^3 a^4$ & $\#\{j_\lambda\}$ & $(wt(a^1, a^2), wt(a^3, a^4))$ & $X^{\otimes 2}(j_\lambda)$ \\ \hline
$[1111]$  & $1\times A_d^4$ & $1$ & $ab|cd$ & $1\times A_d^4$ & $(1,1)$ & $1$ \\ \hline
$[211]$  & $6\times A_d^3$ & $2$ & $aa|bc$ & $2 \times A_d^3$ & $(2,1)$ & $-d$ \\ 
& & & $ab|ac$ & $4 \times A_d^3$ & $(1,1)$ & $1$ \\\hline
$[22]$  & $3\times A_d^2$ & $4$ & $aa|bb$ & $1 \times A_d^2$ & $(2,2)$ & $d^2$ \\ 
& & & $ab|ab$ & $2 \times A_d^2$ & $(1,1)$ & $1$ \\  \hline
$[31]$  & $4\times A_d^2$ & $6$ & $aa|ab$ & $4\times A_d^2$ & $(2,1)$ & $-d$ \\\hline
$[4]$  & $1\times A_d^1$ & $24$ & $aa|aa$ & $1\times A_d^1$ & $(2,2)$ & $d^2$ \\ \hline
\hline
\end{tabular}
\caption{The classes and elements number of $\bm{a}$ for $T_\lambda$ and $X^{\otimes 2}(j_\lambda)$. The sub-type $j_\lambda$ is determined by the weight of subsystems $a^1a^2$ and $a^3a^4$. $\#\{j_\lambda\}$ denote the number of elements contained in the sub-type $j_\lambda$. } \label{tab:classXX}
\end{table}

Therefore,
\begin{equation}
\begin{aligned}
\sum_{\pi\in S_4} \tr(W_\pi X^{\otimes 2}) &= \sum_{\bm{a}\in \mbb{Z}^4_d} X^{\otimes 2}(\bm{a}) T(\bm{a}) \\
&= 1\times 1\times A_d^4 + (-d)\times 2\times 2A_d^3 + 1\times 2\times 4A_d^3 + d^2\times 4\times A_d^2 + 1\times 4\times 2A_d^2 + (-d)\times 6\times 4A_d^2 + d^2\times 24 \times A_d^1 \\
&= d(d+1)(d^2+9d+2).
\end{aligned}
\end{equation}

\end{proof}

\section{Concurrence estimation}\label{sec:concurrence}

Concurrence was first proposed as a byproduct of entanglement of formation (EF) \cite{wootters2001entanglement}, and it was proved that for a bi-qubit system, quantum  concurrence gives a lower bound for EF. After the proposal of concurrence of bi-qubit systems, many works about how to generalize it to multipartite systems were proposed.   In \cite{Beacom2004Spectroscopy}, the author defined the quantum concurrence of $n$-qubit pure state $\Psi\in \mathcal{H}_2^{\otimes n}$ as:
\begin{equation}
\begin{aligned}
C_n(\Psi)=2^{1-n/2}\sqrt{(2^n-2)-\sum_i\tr\rho_i^2},
\end{aligned}
\end{equation}
which is a natural generalization of two-qubit concurrence,
where $i$ labels $(2^n-2)$ nontrivial subsystems and $\rho_i$ is the corresponding density matrix of it. Then the quantum concurrence of multipartite mixed state can be defined as $C_n(\rho)=\mathrm{inf}\sum_ip_iC_n(\Psi_i)$, where the infimum is taken over all pure-state decomposition of $\rho$, just like the definition of EF. In \cite{Aolita_2006}, the author proved that $C_n(\Psi)$ can be measured using just one factorizable observable acting on two identical copies of $\Psi$:
\begin{equation}\label{eq:con 1}
\begin{aligned}
C_N(\Psi)=\sqrt{\langle \Psi|\otimes\langle \Psi|A|\Psi\rangle\otimes|\Psi\rangle}, \ \ \ A=4(1-P^+_1\otimes\cdots\otimes P^+_N).
\end{aligned}
\end{equation}
where $P^+_i=(I_i+S_i)/2$ is the projector that can project states in $\mathcal{H}_i\otimes\mathcal{H}_i$, to symmetric subsystem $\mathcal{H}_i\odot\mathcal{H}_i$. Following this equation, if one wants to estimate $C_N(\Psi)$, he just needs to prepare two identical copies of $\Psi$ and measure observable $P^+_1\otimes\cdots\otimes P^+_N$ on $\Psi^{\otimes 2}$. However, with the help of 
randomized measurements, $C_N(\Psi)$ can be measured with single copies of $\Psi$.

Referring to the $k$-fold twirling channel acting on $X\in\mathcal{H}_2^{\otimes k}$:
\begin{equation}
\begin{aligned}
\Phi^k(X)=\sum_{\pi,\sigma\in\mathcal{S}_k}C_{\pi,\sigma}\tr(XW_\pi)W_\sigma.
\end{aligned}
\end{equation}
Suppose $X=|\psi\rangle\langle\psi|^{\otimes k}$, $|\psi\rangle\in\mathcal{H}_2$, one can easily prove that $\tr(|\psi\rangle\langle\psi|^{\otimes k}W_\pi)=1,\forall\pi\in\mathcal{S}_k$, so that
\begin{equation}\label{eq:special twirling}
\begin{aligned}
\Phi^k(|\psi\rangle\langle\psi|^{\otimes k})=\sum_{\pi,\sigma\in\mathcal{S}_k}C_{\pi,\sigma}W_\sigma=\frac{(d-1)!}{(d+k-1)!}\sum_{\pi\in\mathcal{S}_k}W_\pi.
\end{aligned}
\end{equation}
The second equal sign is because the sum of one row or one column of Weingarten matrix is constant: 
\begin{equation}
\begin{aligned}
\sum_{\alpha\in\mathcal{S}_k}Wg_{k}^{\mathcal{U}_d}(\alpha,\beta)=(d-1)!/(d+k-1)!
\end{aligned}
\end{equation}
where $d$ is the dimension of random unitary and $|\psi\rangle$ in Eq.~\eqref{eq:special twirling}. According to Eq.~\eqref{eq:special twirling}, one can generate the projector $P^+=(I+S)/2$ by two-fold twirling channel
\begin{equation}\label{eq:con 2}
\begin{aligned}
\Phi^2\left[(|\psi\rangle\langle\psi|)^{\otimes 2}\right]=\sum_{\pi,\sigma\in\mathcal{S}_2}C_{\pi,\sigma}W_\sigma=\frac{1}{6}(S+I)=\frac{1}{3}P^+.
\end{aligned}
\end{equation}
Recall that virtual operations can be constructed via random evolution and data post processing. According to Eqs.~\eqref{eq:con 1} and \eqref{eq:con 2},  we can design an experimental protocol to measure the quantum concurrence:

\begin{algorithm}[H]
\caption{Concurrence Measurement Protocol}
\label{concurrence measurement protocol}
\begin{algorithmic}[1]
\Require
Prepare $|\Psi\rangle$ sequentially for $N_U\times N_M$ times.
\Ensure
Probability distribution of measurement outcomes conditioned on evolution unitary $P(\vec{s}|U)$.
\For{$i= 1~\text{\textbf{to}}~N_U$} %\Comment{Enumerating all the locations}
 \State Randomly pick a unitary matrix in every unitary ensembles to construct the evolution matrix $U=\bigotimes_{i=1}^nU_i$. 
 \State Operate $U$ on $\Psi$ to get $U|\Psi\rangle\langle\Psi|U^{\dagger}$. 
 \For{$j= 1~\text{\textbf{to}}~N_M$} 
  \State  Measure $U|\Psi\rangle\langle\Psi|U^{\dagger}$ in the computational basis $\{|\vec{s}\rangle\}$.
  \State Record the measurement results.
  \EndFor
 \State Estimate the probabilities $\mathrm{Pr}(\vec{s},U)$.
  
\EndFor
\State Do the data postprocessing given in Eq.~\eqref{eq:concurrence res} for $C_n(\Psi)$.
\end{algorithmic}
\end{algorithm}

 Then we have
\begin{equation}\label{eq:concurrence res}
\begin{aligned}
C_n(\Psi)=2\sqrt{1-3^n\sub{\mbb{E}}{U}P(\vec{s},U)^2}
\end{aligned}
\end{equation}

\begin{proof}
Substituting Born's rule and Eq.~\eqref{eq:con 2}, one can prove
\begin{equation}
\begin{aligned}
3^n\sub{\mbb{E}}{U}P(\vec{s},U)^2&=3^n\tr\left[(|\vec{s}\rangle\langle\vec{s}|)^{\otimes 2}\Phi^{2\otimes n}(\Psi^{\otimes 2})\right]\\
&=3^n\tr\left\{\left[\Phi^2(|s_1\rangle\langle s_1|^{\otimes 2})\otimes\cdots\otimes\Phi^2(|s_n\rangle\langle s_n|^{\otimes 2})\right]\Psi^{\otimes 2}\right\}\\
&=\tr\left[(P_+^1\otimes\cdots\otimes P_+^n)\Psi^{\otimes 2}\right]\\
&=\langle \Psi|\langle\Psi|P_+^1\otimes\cdots\otimes P_+^n|\Psi\rangle|\Psi\rangle,
\end{aligned}
\end{equation}
so that 
\begin{equation}
\begin{aligned}
2\sqrt{1-3^n\sub{\mbb{E}}{U}P(\vec{s},U)^2}&=2\sqrt{1-\langle \Psi|\langle\Psi|P_+^1\otimes\cdots\otimes P_+^n|\Psi\rangle|\Psi\rangle}\\&=\sqrt{\langle\Psi|\langle\Psi|A|\Psi\rangle|\Psi\rangle}\\&=C_n(\Psi).
\end{aligned}
\end{equation}

\end{proof}


\begin{thebibliography}{76}%
	\makeatletter
	\providecommand \@ifxundefined [1]{%
		\@ifx{#1\undefined}
	}%
	\providecommand \@ifnum [1]{%
		\ifnum #1\expandafter \@firstoftwo
		\else \expandafter \@secondoftwo
		\fi
	}%
	\providecommand \@ifx [1]{%
		\ifx #1\expandafter \@firstoftwo
		\else \expandafter \@secondoftwo
		\fi
	}%
	\providecommand \natexlab [1]{#1}%
	\providecommand \enquote  [1]{``#1''}%
	\providecommand \bibnamefont  [1]{#1}%
	\providecommand \bibfnamefont [1]{#1}%
	\providecommand \citenamefont [1]{#1}%
	\providecommand \href@noop [0]{\@secondoftwo}%
	\providecommand \href [0]{\begingroup \@sanitize@url \@href}%
	\providecommand \@href[1]{\@@startlink{#1}\@@href}%
	\providecommand \@@href[1]{\endgroup#1\@@endlink}%
	\providecommand \@sanitize@url [0]{\catcode `\\12\catcode `\$12\catcode
		`\&12\catcode `\#12\catcode `\^12\catcode `\_12\catcode `\%12\relax}%
	\providecommand \@@startlink[1]{}%
	\providecommand \@@endlink[0]{}%
	\providecommand \url  [0]{\begingroup\@sanitize@url \@url }%
	\providecommand \@url [1]{\endgroup\@href {#1}{\urlprefix }}%
	\providecommand \urlprefix  [0]{URL }%
	\providecommand \Eprint [0]{\href }%
	\providecommand \doibase [0]{https://doi.org/}%
	\providecommand \selectlanguage [0]{\@gobble}%
	\providecommand \bibinfo  [0]{\@secondoftwo}%
	\providecommand \bibfield  [0]{\@secondoftwo}%
	\providecommand \translation [1]{[#1]}%
	\providecommand \BibitemOpen [0]{}%
	\providecommand \bibitemStop [0]{}%
	\providecommand \bibitemNoStop [0]{.\EOS\space}%
	\providecommand \EOS [0]{\spacefactor3000\relax}%
	\providecommand \BibitemShut  [1]{\csname bibitem#1\endcsname}%
	\let\auto@bib@innerbib\@empty
	%</preamble>
	\bibitem [{\citenamefont {Preskill}(2018)}]{preskill2018quantum}%
	\BibitemOpen
	\bibfield  {author} {\bibinfo {author} {\bibfnamefont {J.}~\bibnamefont
			{Preskill}},\ }\bibfield  {title} {\bibinfo {title} {Quantum {C}omputing in
			the {NISQ} era and beyond},\ }\href
	{https://doi.org/10.22331/q-2018-08-06-79} {\bibfield  {journal} {\bibinfo
			{journal} {{Quantum}}\ }\textbf {\bibinfo {volume} {2}},\ \bibinfo {pages}
		{79} (\bibinfo {year} {2018})}\BibitemShut {NoStop}%
	\bibitem [{\citenamefont {Horodecki}\ \emph {et~al.}(2009)\citenamefont
		{Horodecki}, \citenamefont {Horodecki}, \citenamefont {Horodecki},\ and\
		\citenamefont {Horodecki}}]{horodecki2009quantum}%
	\BibitemOpen
	\bibfield  {author} {\bibinfo {author} {\bibfnamefont {R.}~\bibnamefont
			{Horodecki}}, \bibinfo {author} {\bibfnamefont {P.}~\bibnamefont
			{Horodecki}}, \bibinfo {author} {\bibfnamefont {M.}~\bibnamefont
			{Horodecki}},\ and\ \bibinfo {author} {\bibfnamefont {K.}~\bibnamefont
			{Horodecki}},\ }\bibfield  {title} {\bibinfo {title} {Quantum entanglement},\
	}\href {https://doi.org/10.1103/RevModPhys.81.865} {\bibfield  {journal}
		{\bibinfo  {journal} {Rev. Mod. Phys.}\ }\textbf {\bibinfo {volume} {81}},\
		\bibinfo {pages} {865} (\bibinfo {year} {2009})}\BibitemShut {NoStop}%
	\bibitem [{\citenamefont {Modi}\ \emph {et~al.}(2012)\citenamefont {Modi},
		\citenamefont {Brodutch}, \citenamefont {Cable}, \citenamefont {Paterek},\
		and\ \citenamefont {Vedral}}]{modi2012classical}%
	\BibitemOpen
	\bibfield  {author} {\bibinfo {author} {\bibfnamefont {K.}~\bibnamefont
			{Modi}}, \bibinfo {author} {\bibfnamefont {A.}~\bibnamefont {Brodutch}},
		\bibinfo {author} {\bibfnamefont {H.}~\bibnamefont {Cable}}, \bibinfo
		{author} {\bibfnamefont {T.}~\bibnamefont {Paterek}},\ and\ \bibinfo {author}
		{\bibfnamefont {V.}~\bibnamefont {Vedral}},\ }\bibfield  {title} {\bibinfo
		{title} {The classical-quantum boundary for correlations: Discord and related
			measures},\ }\href {https://doi.org/10.1103/RevModPhys.84.1655} {\bibfield
		{journal} {\bibinfo  {journal} {Rev. Mod. Phys.}\ }\textbf {\bibinfo {volume}
			{84}},\ \bibinfo {pages} {1655} (\bibinfo {year} {2012})}\BibitemShut
	{NoStop}%
	\bibitem [{\citenamefont {Eisert}\ \emph {et~al.}(2020)\citenamefont {Eisert},
		\citenamefont {Hangleiter}, \citenamefont {Walk}, \citenamefont {Roth},
		\citenamefont {Markham}, \citenamefont {Parekh}, \citenamefont {Chabaud},\
		and\ \citenamefont {Kashefi}}]{Eisert2020certification}%
	\BibitemOpen
	\bibfield  {author} {\bibinfo {author} {\bibfnamefont {J.}~\bibnamefont
			{Eisert}}, \bibinfo {author} {\bibfnamefont {D.}~\bibnamefont {Hangleiter}},
		\bibinfo {author} {\bibfnamefont {N.}~\bibnamefont {Walk}}, \bibinfo {author}
		{\bibfnamefont {I.}~\bibnamefont {Roth}}, \bibinfo {author} {\bibfnamefont
			{D.}~\bibnamefont {Markham}}, \bibinfo {author} {\bibfnamefont
			{R.}~\bibnamefont {Parekh}}, \bibinfo {author} {\bibfnamefont
			{U.}~\bibnamefont {Chabaud}},\ and\ \bibinfo {author} {\bibfnamefont
			{E.}~\bibnamefont {Kashefi}},\ }\bibfield  {title} {\bibinfo {title} {Quantum
			certification and benchmarking},\ }\href
	{https://doi.org/10.1038/s42254-020-0186-4} {\bibfield  {journal} {\bibinfo
			{journal} {Nature Reviews Physics}\ }\textbf {\bibinfo {volume} {2}},\
		\bibinfo {pages} {382} (\bibinfo {year} {2020})}\BibitemShut {NoStop}%
	\bibitem [{\citenamefont {Kliesch}\ and\ \citenamefont
		{Roth}(2021)}]{Kliesch2021Certification}%
	\BibitemOpen
	\bibfield  {author} {\bibinfo {author} {\bibfnamefont {M.}~\bibnamefont
			{Kliesch}}\ and\ \bibinfo {author} {\bibfnamefont {I.}~\bibnamefont {Roth}},\
	}\bibfield  {title} {\bibinfo {title} {Theory of quantum system
			certification},\ }\href {https://doi.org/10.1103/PRXQuantum.2.010201}
	{\bibfield  {journal} {\bibinfo  {journal} {PRX Quantum}\ }\textbf {\bibinfo
			{volume} {2}},\ \bibinfo {pages} {010201} (\bibinfo {year}
		{2021})}\BibitemShut {NoStop}%
	\bibitem [{\citenamefont {Modi}\ \emph {et~al.}(2010)\citenamefont {Modi},
		\citenamefont {Paterek}, \citenamefont {Son}, \citenamefont {Vedral},\ and\
		\citenamefont {Williamson}}]{modi2010unified}%
	\BibitemOpen
	\bibfield  {author} {\bibinfo {author} {\bibfnamefont {K.}~\bibnamefont
			{Modi}}, \bibinfo {author} {\bibfnamefont {T.}~\bibnamefont {Paterek}},
		\bibinfo {author} {\bibfnamefont {W.}~\bibnamefont {Son}}, \bibinfo {author}
		{\bibfnamefont {V.}~\bibnamefont {Vedral}},\ and\ \bibinfo {author}
		{\bibfnamefont {M.}~\bibnamefont {Williamson}},\ }\bibfield  {title}
	{\bibinfo {title} {Unified view of quantum and classical correlations},\
	}\href {https://doi.org/10.1103/PhysRevLett.104.080501} {\bibfield  {journal}
		{\bibinfo  {journal} {Phys. Rev. Lett.}\ }\textbf {\bibinfo {volume} {104}},\
		\bibinfo {pages} {080501} (\bibinfo {year} {2010})}\BibitemShut {NoStop}%
	\bibitem [{\citenamefont {Goold}\ \emph {et~al.}(2016)\citenamefont {Goold},
		\citenamefont {Huber}, \citenamefont {Riera}, \citenamefont {del Rio},\ and\
		\citenamefont {Skrzypczyk}}]{Goold2016thermoreview}%
	\BibitemOpen
	\bibfield  {author} {\bibinfo {author} {\bibfnamefont {J.}~\bibnamefont
			{Goold}}, \bibinfo {author} {\bibfnamefont {M.}~\bibnamefont {Huber}},
		\bibinfo {author} {\bibfnamefont {A.}~\bibnamefont {Riera}}, \bibinfo
		{author} {\bibfnamefont {L.}~\bibnamefont {del Rio}},\ and\ \bibinfo {author}
		{\bibfnamefont {P.}~\bibnamefont {Skrzypczyk}},\ }\bibfield  {title}
	{\bibinfo {title} {The role of quantum information in
			thermodynamics{\textemdash}a topical review},\ }\href
	{https://doi.org/10.1088/1751-8113/49/14/143001} {\bibfield  {journal}
		{\bibinfo  {journal} {Journal of Physics A: Mathematical and Theoretical}\
		}\textbf {\bibinfo {volume} {49}},\ \bibinfo {pages} {143001} (\bibinfo
		{year} {2016})}\BibitemShut {NoStop}%
	\bibitem [{\citenamefont {Chiara}\ and\ \citenamefont
		{Sanpera}(2018)}]{de2018genuine}%
	\BibitemOpen
	\bibfield  {author} {\bibinfo {author} {\bibfnamefont {G.~D.}\ \bibnamefont
			{Chiara}}\ and\ \bibinfo {author} {\bibfnamefont {A.}~\bibnamefont
			{Sanpera}},\ }\bibfield  {title} {\bibinfo {title} {Genuine quantum
			correlations in quantum many-body systems: a review of recent progress},\
	}\href {https://doi.org/10.1088/1361-6633/aabf61} {\bibfield  {journal}
		{\bibinfo  {journal} {Reports on Progress in Physics}\ }\textbf {\bibinfo
			{volume} {81}},\ \bibinfo {pages} {074002} (\bibinfo {year}
		{2018})}\BibitemShut {NoStop}%
	\bibitem [{\citenamefont {Goold}\ \emph {et~al.}(2015)\citenamefont {Goold},
		\citenamefont {Gogolin}, \citenamefont {Clark}, \citenamefont {Eisert},
		\citenamefont {Scardicchio},\ and\ \citenamefont {Silva}}]{Goold2015MBL}%
	\BibitemOpen
	\bibfield  {author} {\bibinfo {author} {\bibfnamefont {J.}~\bibnamefont
			{Goold}}, \bibinfo {author} {\bibfnamefont {C.}~\bibnamefont {Gogolin}},
		\bibinfo {author} {\bibfnamefont {S.~R.}\ \bibnamefont {Clark}}, \bibinfo
		{author} {\bibfnamefont {J.}~\bibnamefont {Eisert}}, \bibinfo {author}
		{\bibfnamefont {A.}~\bibnamefont {Scardicchio}},\ and\ \bibinfo {author}
		{\bibfnamefont {A.}~\bibnamefont {Silva}},\ }\bibfield  {title} {\bibinfo
		{title} {Total correlations of the diagonal ensemble herald the many-body
			localization transition},\ }\href
	{https://doi.org/10.1103/PhysRevB.92.180202} {\bibfield  {journal} {\bibinfo
			{journal} {Phys. Rev. B}\ }\textbf {\bibinfo {volume} {92}},\ \bibinfo
		{pages} {180202} (\bibinfo {year} {2015})}\BibitemShut {NoStop}%
	\bibitem [{\citenamefont {Bennett}\ \emph {et~al.}(2011)\citenamefont
		{Bennett}, \citenamefont {Grudka}, \citenamefont {Horodecki}, \citenamefont
		{Horodecki},\ and\ \citenamefont {Horodecki}}]{Bennett_2011}%
	\BibitemOpen
	\bibfield  {author} {\bibinfo {author} {\bibfnamefont {C.~H.}\ \bibnamefont
			{Bennett}}, \bibinfo {author} {\bibfnamefont {A.}~\bibnamefont {Grudka}},
		\bibinfo {author} {\bibfnamefont {M.}~\bibnamefont {Horodecki}}, \bibinfo
		{author} {\bibfnamefont {P.}~\bibnamefont {Horodecki}},\ and\ \bibinfo
		{author} {\bibfnamefont {R.}~\bibnamefont {Horodecki}},\ }\bibfield  {title}
	{\bibinfo {title} {Postulates for measures of genuine multipartite
			correlations},\ }\href {https://doi.org/10.1103/PhysRevA.83.012312}
	{\bibfield  {journal} {\bibinfo  {journal} {Phys. Rev. A}\ }\textbf {\bibinfo
			{volume} {83}},\ \bibinfo {pages} {012312} (\bibinfo {year}
		{2011})}\BibitemShut {NoStop}%
	\bibitem [{\citenamefont {Giorgi}\ \emph {et~al.}(2011)\citenamefont {Giorgi},
		\citenamefont {Bellomo}, \citenamefont {Galve},\ and\ \citenamefont
		{Zambrini}}]{giorgi2011genuine}%
	\BibitemOpen
	\bibfield  {author} {\bibinfo {author} {\bibfnamefont {G.~L.}\ \bibnamefont
			{Giorgi}}, \bibinfo {author} {\bibfnamefont {B.}~\bibnamefont {Bellomo}},
		\bibinfo {author} {\bibfnamefont {F.}~\bibnamefont {Galve}},\ and\ \bibinfo
		{author} {\bibfnamefont {R.}~\bibnamefont {Zambrini}},\ }\bibfield  {title}
	{\bibinfo {title} {Genuine quantum and classical correlations in multipartite
			systems},\ }\href {https://doi.org/10.1103/PhysRevLett.107.190501} {\bibfield
		{journal} {\bibinfo  {journal} {Phys. Rev. Lett.}\ }\textbf {\bibinfo
			{volume} {107}},\ \bibinfo {pages} {190501} (\bibinfo {year}
		{2011})}\BibitemShut {NoStop}%
	\bibitem [{\citenamefont {Girolami}\ \emph {et~al.}(2017)\citenamefont
		{Girolami}, \citenamefont {Tufarelli},\ and\ \citenamefont
		{Susa}}]{Girolami_2017}%
	\BibitemOpen
	\bibfield  {author} {\bibinfo {author} {\bibfnamefont {D.}~\bibnamefont
			{Girolami}}, \bibinfo {author} {\bibfnamefont {T.}~\bibnamefont
			{Tufarelli}},\ and\ \bibinfo {author} {\bibfnamefont {C.~E.}\ \bibnamefont
			{Susa}},\ }\bibfield  {title} {\bibinfo {title} {Quantifying genuine
			multipartite correlations and their pattern complexity},\ }\href
	{https://doi.org/10.1103/PhysRevLett.119.140505} {\bibfield  {journal}
		{\bibinfo  {journal} {Phys. Rev. Lett.}\ }\textbf {\bibinfo {volume} {119}},\
		\bibinfo {pages} {140505} (\bibinfo {year} {2017})}\BibitemShut {NoStop}%
	\bibitem [{\citenamefont {Mandal}\ \emph {et~al.}(2020)\citenamefont {Mandal},
		\citenamefont {Narozniak}, \citenamefont {Radhakrishnan}, \citenamefont
		{Jiao}, \citenamefont {Jin},\ and\ \citenamefont
		{Byrnes}}]{mandal2020coherence}%
	\BibitemOpen
	\bibfield  {author} {\bibinfo {author} {\bibfnamefont {S.}~\bibnamefont
			{Mandal}}, \bibinfo {author} {\bibfnamefont {M.}~\bibnamefont {Narozniak}},
		\bibinfo {author} {\bibfnamefont {C.}~\bibnamefont {Radhakrishnan}}, \bibinfo
		{author} {\bibfnamefont {Z.-Q.}\ \bibnamefont {Jiao}}, \bibinfo {author}
		{\bibfnamefont {X.-M.}\ \bibnamefont {Jin}},\ and\ \bibinfo {author}
		{\bibfnamefont {T.}~\bibnamefont {Byrnes}},\ }\bibfield  {title} {\bibinfo
		{title} {Characterizing coherence with quantum observables},\ }\href
	{https://doi.org/10.1103/PhysRevResearch.2.013157} {\bibfield  {journal}
		{\bibinfo  {journal} {Phys. Rev. Research}\ }\textbf {\bibinfo {volume}
			{2}},\ \bibinfo {pages} {013157} (\bibinfo {year} {2020})}\BibitemShut
	{NoStop}%
	\bibitem [{\citenamefont {Islam}\ \emph {et~al.}(2015)\citenamefont {Islam},
		\citenamefont {Ma}, \citenamefont {Preiss}, \citenamefont {Eric~Tai},
		\citenamefont {Lukin}, \citenamefont {Rispoli},\ and\ \citenamefont
		{Greiner}}]{Islam2015Measuring}%
	\BibitemOpen
	\bibfield  {author} {\bibinfo {author} {\bibfnamefont {R.}~\bibnamefont
			{Islam}}, \bibinfo {author} {\bibfnamefont {R.}~\bibnamefont {Ma}}, \bibinfo
		{author} {\bibfnamefont {P.~M.}\ \bibnamefont {Preiss}}, \bibinfo {author}
		{\bibfnamefont {M.}~\bibnamefont {Eric~Tai}}, \bibinfo {author}
		{\bibfnamefont {A.}~\bibnamefont {Lukin}}, \bibinfo {author} {\bibfnamefont
			{M.}~\bibnamefont {Rispoli}},\ and\ \bibinfo {author} {\bibfnamefont
			{M.}~\bibnamefont {Greiner}},\ }\bibfield  {title} {\bibinfo {title}
		{Measuring entanglement entropy in a quantum many-body system},\ }\href
	{https://doi.org/10.1038/nature15750} {\bibfield  {journal} {\bibinfo
			{journal} {Nature}\ }\textbf {\bibinfo {volume} {528}},\ \bibinfo {pages}
		{77} (\bibinfo {year} {2015})}\BibitemShut {NoStop}%
	\bibitem [{\citenamefont {Kaufman}\ \emph {et~al.}(2016)\citenamefont
		{Kaufman}, \citenamefont {Tai}, \citenamefont {Lukin}, \citenamefont
		{Rispoli}, \citenamefont {Schittko}, \citenamefont {Preiss},\ and\
		\citenamefont {Greiner}}]{Kaufmanen2016tanglement}%
	\BibitemOpen
	\bibfield  {author} {\bibinfo {author} {\bibfnamefont {A.~M.}\ \bibnamefont
			{Kaufman}}, \bibinfo {author} {\bibfnamefont {M.~E.}\ \bibnamefont {Tai}},
		\bibinfo {author} {\bibfnamefont {A.}~\bibnamefont {Lukin}}, \bibinfo
		{author} {\bibfnamefont {M.}~\bibnamefont {Rispoli}}, \bibinfo {author}
		{\bibfnamefont {R.}~\bibnamefont {Schittko}}, \bibinfo {author}
		{\bibfnamefont {P.~M.}\ \bibnamefont {Preiss}},\ and\ \bibinfo {author}
		{\bibfnamefont {M.}~\bibnamefont {Greiner}},\ }\bibfield  {title} {\bibinfo
		{title} {Quantum thermalization through entanglement in an isolated many-body
			system},\ }\href {https://doi.org/10.1126/science.aaf6725} {\bibfield
		{journal} {\bibinfo  {journal} {Science}\ }\textbf {\bibinfo {volume}
			{353}},\ \bibinfo {pages} {794} (\bibinfo {year} {2016})}\BibitemShut
	{NoStop}%
	\bibitem [{\citenamefont {Brydges}\ \emph {et~al.}(2019)\citenamefont
		{Brydges}, \citenamefont {Elben}, \citenamefont {Jurcevic}, \citenamefont
		{Vermersch}, \citenamefont {Maier}, \citenamefont {Lanyon}, \citenamefont
		{Zoller}, \citenamefont {Blatt},\ and\ \citenamefont
		{Roos}}]{Brydges2019Probing}%
	\BibitemOpen
	\bibfield  {author} {\bibinfo {author} {\bibfnamefont {T.}~\bibnamefont
			{Brydges}}, \bibinfo {author} {\bibfnamefont {A.}~\bibnamefont {Elben}},
		\bibinfo {author} {\bibfnamefont {P.}~\bibnamefont {Jurcevic}}, \bibinfo
		{author} {\bibfnamefont {B.}~\bibnamefont {Vermersch}}, \bibinfo {author}
		{\bibfnamefont {C.}~\bibnamefont {Maier}}, \bibinfo {author} {\bibfnamefont
			{B.~P.}\ \bibnamefont {Lanyon}}, \bibinfo {author} {\bibfnamefont
			{P.}~\bibnamefont {Zoller}}, \bibinfo {author} {\bibfnamefont
			{R.}~\bibnamefont {Blatt}},\ and\ \bibinfo {author} {\bibfnamefont {C.~F.}\
			\bibnamefont {Roos}},\ }\bibfield  {title} {\bibinfo {title} {Probing
			r{\'e}nyi entanglement entropy via randomized measurements},\ }\href
	{https://doi.org/10.1126/science.aau4963} {\bibfield  {journal} {\bibinfo
			{journal} {Science}\ }\textbf {\bibinfo {volume} {364}},\ \bibinfo {pages}
		{260} (\bibinfo {year} {2019})}\BibitemShut {NoStop}%
	\bibitem [{\citenamefont {Linden}\ \emph {et~al.}(2013)\citenamefont {Linden},
		\citenamefont {Mosonyi},\ and\ \citenamefont {Winter}}]{linden2013structure}%
	\BibitemOpen
	\bibfield  {author} {\bibinfo {author} {\bibfnamefont {N.}~\bibnamefont
			{Linden}}, \bibinfo {author} {\bibfnamefont {M.}~\bibnamefont {Mosonyi}},\
		and\ \bibinfo {author} {\bibfnamefont {A.}~\bibnamefont {Winter}},\
	}\bibfield  {title} {\bibinfo {title} {The structure of r{\'e}nyi entropic
			inequalities},\ }\href
	{https://royalsocietypublishing.org/doi/pdf/10.1098/rspa.2012.0737}
	{\bibfield  {journal} {\bibinfo  {journal} {Proceedings of the Royal Society
				A: Mathematical, Physical and Engineering Sciences}\ }\textbf {\bibinfo
			{volume} {469}},\ \bibinfo {pages} {20120737} (\bibinfo {year}
		{2013})}\BibitemShut {NoStop}%
	\bibitem [{\citenamefont {Guhne}\ and\ \citenamefont
		{Toth}(2009)}]{GUHNE2009detection}%
	\BibitemOpen
	\bibfield  {author} {\bibinfo {author} {\bibfnamefont {O.}~\bibnamefont
			{Guhne}}\ and\ \bibinfo {author} {\bibfnamefont {G.}~\bibnamefont {Toth}},\
	}\bibfield  {title} {\bibinfo {title} {Entanglement detection},\ }\href
	{https://doi.org/https://doi.org/10.1016/j.physrep.2009.02.004} {\bibfield
		{journal} {\bibinfo  {journal} {Physics Reports}\ }\textbf {\bibinfo {volume}
			{474}},\ \bibinfo {pages} {1 } (\bibinfo {year} {2009})}\BibitemShut
	{NoStop}%
	\bibitem [{\citenamefont {Friis}\ \emph {et~al.}(2019)\citenamefont {Friis},
		\citenamefont {Vitagliano}, \citenamefont {Malik},\ and\ \citenamefont
		{Huber}}]{Friis2019Reviews}%
	\BibitemOpen
	\bibfield  {author} {\bibinfo {author} {\bibfnamefont {N.}~\bibnamefont
			{Friis}}, \bibinfo {author} {\bibfnamefont {G.}~\bibnamefont {Vitagliano}},
		\bibinfo {author} {\bibfnamefont {M.}~\bibnamefont {Malik}},\ and\ \bibinfo
		{author} {\bibfnamefont {M.}~\bibnamefont {Huber}},\ }\bibfield  {title}
	{\bibinfo {title} {Entanglement certification from theory to experiment},\
	}\href {https://doi.org/10.1038/s42254-018-0003-5} {\bibfield  {journal}
		{\bibinfo  {journal} {Nature Reviews Physics}\ }\textbf {\bibinfo {volume}
			{1}},\ \bibinfo {pages} {72} (\bibinfo {year} {2019})}\BibitemShut {NoStop}%
	\bibitem [{\citenamefont {Huber}\ and\ \citenamefont
		{de~Vicente}(2013)}]{Huber2013Structure}%
	\BibitemOpen
	\bibfield  {author} {\bibinfo {author} {\bibfnamefont {M.}~\bibnamefont
			{Huber}}\ and\ \bibinfo {author} {\bibfnamefont {J.~I.}\ \bibnamefont
			{de~Vicente}},\ }\bibfield  {title} {\bibinfo {title} {Structure of
			multidimensional entanglement in multipartite systems},\ }\href
	{https://doi.org/10.1103/PhysRevLett.110.030501} {\bibfield  {journal}
		{\bibinfo  {journal} {Phys. Rev. Lett.}\ }\textbf {\bibinfo {volume} {110}},\
		\bibinfo {pages} {030501} (\bibinfo {year} {2013})}\BibitemShut {NoStop}%
	\bibitem [{\citenamefont {Shahandeh}\ \emph {et~al.}(2014)\citenamefont
		{Shahandeh}, \citenamefont {Sperling},\ and\ \citenamefont
		{Vogel}}]{Shahandeh2014Structural}%
	\BibitemOpen
	\bibfield  {author} {\bibinfo {author} {\bibfnamefont {F.}~\bibnamefont
			{Shahandeh}}, \bibinfo {author} {\bibfnamefont {J.}~\bibnamefont
			{Sperling}},\ and\ \bibinfo {author} {\bibfnamefont {W.}~\bibnamefont
			{Vogel}},\ }\bibfield  {title} {\bibinfo {title} {Structural quantification
			of entanglement},\ }\href {https://doi.org/10.1103/PhysRevLett.113.260502}
	{\bibfield  {journal} {\bibinfo  {journal} {Phys. Rev. Lett.}\ }\textbf
		{\bibinfo {volume} {113}},\ \bibinfo {pages} {260502} (\bibinfo {year}
		{2014})}\BibitemShut {NoStop}%
	\bibitem [{\citenamefont {Lu}\ \emph {et~al.}(2018)\citenamefont {Lu},
		\citenamefont {Zhao}, \citenamefont {Li}, \citenamefont {Yin}, \citenamefont
		{Yuan}, \citenamefont {Hung}, \citenamefont {Chen}, \citenamefont {Li},
		\citenamefont {Liu}, \citenamefont {Peng}, \citenamefont {Liang},
		\citenamefont {Ma}, \citenamefont {Chen},\ and\ \citenamefont
		{Pan}}]{Lu2018Structure}%
	\BibitemOpen
	\bibfield  {author} {\bibinfo {author} {\bibfnamefont {H.}~\bibnamefont
			{Lu}}, \bibinfo {author} {\bibfnamefont {Q.}~\bibnamefont {Zhao}}, \bibinfo
		{author} {\bibfnamefont {Z.-D.}\ \bibnamefont {Li}}, \bibinfo {author}
		{\bibfnamefont {X.-F.}\ \bibnamefont {Yin}}, \bibinfo {author} {\bibfnamefont
			{X.}~\bibnamefont {Yuan}}, \bibinfo {author} {\bibfnamefont {J.-C.}\
			\bibnamefont {Hung}}, \bibinfo {author} {\bibfnamefont {L.-K.}\ \bibnamefont
			{Chen}}, \bibinfo {author} {\bibfnamefont {L.}~\bibnamefont {Li}}, \bibinfo
		{author} {\bibfnamefont {N.-L.}\ \bibnamefont {Liu}}, \bibinfo {author}
		{\bibfnamefont {C.-Z.}\ \bibnamefont {Peng}}, \bibinfo {author}
		{\bibfnamefont {Y.-C.}\ \bibnamefont {Liang}}, \bibinfo {author}
		{\bibfnamefont {X.}~\bibnamefont {Ma}}, \bibinfo {author} {\bibfnamefont
			{Y.-A.}\ \bibnamefont {Chen}},\ and\ \bibinfo {author} {\bibfnamefont
			{J.-W.}\ \bibnamefont {Pan}},\ }\bibfield  {title} {\bibinfo {title}
		{Entanglement structure: Entanglement partitioning in multipartite systems
			and its experimental detection using optimizable witnesses},\ }\href
	{https://doi.org/10.1103/PhysRevX.8.021072} {\bibfield  {journal} {\bibinfo
			{journal} {Phys. Rev. X}\ }\textbf {\bibinfo {volume} {8}},\ \bibinfo {pages}
		{021072} (\bibinfo {year} {2018})}\BibitemShut {NoStop}%
	\bibitem [{\citenamefont {Zhou}\ \emph {et~al.}(2019)\citenamefont {Zhou},
		\citenamefont {Zhao}, \citenamefont {Yuan},\ and\ \citenamefont
		{Ma}}]{zhou2019detecting}%
	\BibitemOpen
	\bibfield  {author} {\bibinfo {author} {\bibfnamefont {Y.}~\bibnamefont
			{Zhou}}, \bibinfo {author} {\bibfnamefont {Q.}~\bibnamefont {Zhao}}, \bibinfo
		{author} {\bibfnamefont {X.}~\bibnamefont {Yuan}},\ and\ \bibinfo {author}
		{\bibfnamefont {X.}~\bibnamefont {Ma}},\ }\bibfield  {title} {\bibinfo
		{title} {Detecting multipartite entanglement structure with minimal
			resources},\ }\href {https://doi.org/10.1038/s41534-019-0200-9} {\bibfield
		{journal} {\bibinfo  {journal} {npj Quantum Information}\ }\textbf {\bibinfo
			{volume} {5}},\ \bibinfo {pages} {1} (\bibinfo {year} {2019})}\BibitemShut
	{NoStop}%
	\bibitem [{\citenamefont {T\'oth}\ and\ \citenamefont
		{G\"uhne}(2005)}]{toth2005detecting}%
	\BibitemOpen
	\bibfield  {author} {\bibinfo {author} {\bibfnamefont {G.}~\bibnamefont
			{T\'oth}}\ and\ \bibinfo {author} {\bibfnamefont {O.}~\bibnamefont
			{G\"uhne}},\ }\bibfield  {title} {\bibinfo {title} {Detecting genuine
			multipartite entanglement with two local measurements},\ }\href
	{https://doi.org/10.1103/PhysRevLett.94.060501} {\bibfield  {journal}
		{\bibinfo  {journal} {Phys. Rev. Lett.}\ }\textbf {\bibinfo {volume} {94}},\
		\bibinfo {pages} {060501} (\bibinfo {year} {2005})}\BibitemShut {NoStop}%
	\bibitem [{\citenamefont {Zhu}\ \emph {et~al.}(2010)\citenamefont {Zhu},
		\citenamefont {Teo},\ and\ \citenamefont {Englert}}]{Zhu2010Minimal}%
	\BibitemOpen
	\bibfield  {author} {\bibinfo {author} {\bibfnamefont {H.}~\bibnamefont
			{Zhu}}, \bibinfo {author} {\bibfnamefont {Y.~S.}\ \bibnamefont {Teo}},\ and\
		\bibinfo {author} {\bibfnamefont {B.-G.}\ \bibnamefont {Englert}},\
	}\bibfield  {title} {\bibinfo {title} {Minimal tomography with entanglement
			witnesses},\ }\href {https://doi.org/10.1103/PhysRevA.81.052339} {\bibfield
		{journal} {\bibinfo  {journal} {Phys. Rev. A}\ }\textbf {\bibinfo {volume}
			{81}},\ \bibinfo {pages} {052339} (\bibinfo {year} {2010})}\BibitemShut
	{NoStop}%
	\bibitem [{\citenamefont {Dai}\ \emph {et~al.}(2014)\citenamefont {Dai},
		\citenamefont {Len}, \citenamefont {Teo}, \citenamefont {Englert},\ and\
		\citenamefont {Krivitsky}}]{Dai2014Witness}%
	\BibitemOpen
	\bibfield  {author} {\bibinfo {author} {\bibfnamefont {J.}~\bibnamefont
			{Dai}}, \bibinfo {author} {\bibfnamefont {Y.~L.}\ \bibnamefont {Len}},
		\bibinfo {author} {\bibfnamefont {Y.~S.}\ \bibnamefont {Teo}}, \bibinfo
		{author} {\bibfnamefont {B.-G.}\ \bibnamefont {Englert}},\ and\ \bibinfo
		{author} {\bibfnamefont {L.~A.}\ \bibnamefont {Krivitsky}},\ }\bibfield
	{title} {\bibinfo {title} {Experimental detection of entanglement with
			optimal-witness families},\ }\href
	{https://doi.org/10.1103/PhysRevLett.113.170402} {\bibfield  {journal}
		{\bibinfo  {journal} {Phys. Rev. Lett.}\ }\textbf {\bibinfo {volume} {113}},\
		\bibinfo {pages} {170402} (\bibinfo {year} {2014})}\BibitemShut {NoStop}%
	\bibitem [{\citenamefont {Zhou}(2020)}]{You2020coherent}%
	\BibitemOpen
	\bibfield  {author} {\bibinfo {author} {\bibfnamefont {Y.}~\bibnamefont
			{Zhou}},\ }\bibfield  {title} {\bibinfo {title} {Entanglement detection under
			coherent noise: Greenberger-horne-zeilinger-like states},\ }\href
	{https://doi.org/10.1103/PhysRevA.101.012301} {\bibfield  {journal} {\bibinfo
			{journal} {Phys. Rev. A}\ }\textbf {\bibinfo {volume} {101}},\ \bibinfo
		{pages} {012301} (\bibinfo {year} {2020})}\BibitemShut {NoStop}%
	\bibitem [{\citenamefont {Liang}\ \emph {et~al.}(2019)\citenamefont {Liang},
		\citenamefont {Yeh}, \citenamefont {Mendonça}, \citenamefont {Teh},
		\citenamefont {Reid},\ and\ \citenamefont {Drummond}}]{Liang_2019}%
	\BibitemOpen
	\bibfield  {author} {\bibinfo {author} {\bibfnamefont {Y.-C.}\ \bibnamefont
			{Liang}}, \bibinfo {author} {\bibfnamefont {Y.-H.}\ \bibnamefont {Yeh}},
		\bibinfo {author} {\bibfnamefont {P.~E. M.~F.}\ \bibnamefont {Mendonça}},
		\bibinfo {author} {\bibfnamefont {R.~Y.}\ \bibnamefont {Teh}}, \bibinfo
		{author} {\bibfnamefont {M.~D.}\ \bibnamefont {Reid}},\ and\ \bibinfo
		{author} {\bibfnamefont {P.~D.}\ \bibnamefont {Drummond}},\ }\bibfield
	{title} {\bibinfo {title} {Quantum fidelity measures for mixed states},\
	}\href {https://doi.org/10.1088/1361-6633/ab1ca4} {\bibfield  {journal}
		{\bibinfo  {journal} {Reports on Progress in Physics}\ }\textbf {\bibinfo
			{volume} {82}},\ \bibinfo {pages} {076001} (\bibinfo {year}
		{2019})}\BibitemShut {NoStop}%
	\bibitem [{\citenamefont {Ekert}\ \emph {et~al.}(2002)\citenamefont {Ekert},
		\citenamefont {Alves}, \citenamefont {Oi}, \citenamefont {Horodecki},
		\citenamefont {Horodecki},\ and\ \citenamefont {Kwek}}]{Ekert_2002}%
	\BibitemOpen
	\bibfield  {author} {\bibinfo {author} {\bibfnamefont {A.~K.}\ \bibnamefont
			{Ekert}}, \bibinfo {author} {\bibfnamefont {C.~M.}\ \bibnamefont {Alves}},
		\bibinfo {author} {\bibfnamefont {D.~K.~L.}\ \bibnamefont {Oi}}, \bibinfo
		{author} {\bibfnamefont {M.}~\bibnamefont {Horodecki}}, \bibinfo {author}
		{\bibfnamefont {P.}~\bibnamefont {Horodecki}},\ and\ \bibinfo {author}
		{\bibfnamefont {L.~C.}\ \bibnamefont {Kwek}},\ }\bibfield  {title} {\bibinfo
		{title} {Direct estimations of linear and nonlinear functionals of a quantum
			state},\ }\href {https://doi.org/10.1103/PhysRevLett.88.217901} {\bibfield
		{journal} {\bibinfo  {journal} {Phys. Rev. Lett.}\ }\textbf {\bibinfo
			{volume} {88}},\ \bibinfo {pages} {217901} (\bibinfo {year}
		{2002})}\BibitemShut {NoStop}%
	\bibitem [{\citenamefont {van Enk}\ and\ \citenamefont
		{Beenakker}(2012)}]{van2012Measuring}%
	\BibitemOpen
	\bibfield  {author} {\bibinfo {author} {\bibfnamefont {S.~J.}\ \bibnamefont
			{van Enk}}\ and\ \bibinfo {author} {\bibfnamefont {C.~W.~J.}\ \bibnamefont
			{Beenakker}},\ }\bibfield  {title} {\bibinfo {title} {Measuring
			$\mathrm{Tr}{\ensuremath{\rho}}^{n}$ on single copies of $\ensuremath{\rho}$
			using random measurements},\ }\href
	{https://doi.org/10.1103/PhysRevLett.108.110503} {\bibfield  {journal}
		{\bibinfo  {journal} {Phys. Rev. Lett.}\ }\textbf {\bibinfo {volume} {108}},\
		\bibinfo {pages} {110503} (\bibinfo {year} {2012})}\BibitemShut {NoStop}%
	\bibitem [{\citenamefont {Elben}\ \emph {et~al.}(2018)\citenamefont {Elben},
		\citenamefont {Vermersch}, \citenamefont {Dalmonte}, \citenamefont {Cirac},\
		and\ \citenamefont {Zoller}}]{Elben2018Random}%
	\BibitemOpen
	\bibfield  {author} {\bibinfo {author} {\bibfnamefont {A.}~\bibnamefont
			{Elben}}, \bibinfo {author} {\bibfnamefont {B.}~\bibnamefont {Vermersch}},
		\bibinfo {author} {\bibfnamefont {M.}~\bibnamefont {Dalmonte}}, \bibinfo
		{author} {\bibfnamefont {J.~I.}\ \bibnamefont {Cirac}},\ and\ \bibinfo
		{author} {\bibfnamefont {P.}~\bibnamefont {Zoller}},\ }\bibfield  {title}
	{\bibinfo {title} {R\'enyi entropies from random quenches in atomic hubbard
			and spin models},\ }\href {https://doi.org/10.1103/PhysRevLett.120.050406}
	{\bibfield  {journal} {\bibinfo  {journal} {Phys. Rev. Lett.}\ }\textbf
		{\bibinfo {volume} {120}},\ \bibinfo {pages} {050406} (\bibinfo {year}
		{2018})}\BibitemShut {NoStop}%
	\bibitem [{\citenamefont {Coles}\ \emph {et~al.}(2019)\citenamefont {Coles},
		\citenamefont {Cerezo},\ and\ \citenamefont {Cincio}}]{coles2019tracedis}%
	\BibitemOpen
	\bibfield  {author} {\bibinfo {author} {\bibfnamefont {P.~J.}\ \bibnamefont
			{Coles}}, \bibinfo {author} {\bibfnamefont {M.}~\bibnamefont {Cerezo}},\ and\
		\bibinfo {author} {\bibfnamefont {L.}~\bibnamefont {Cincio}},\ }\bibfield
	{title} {\bibinfo {title} {Strong bound between trace distance and
			hilbert-schmidt distance for low-rank states},\ }\href
	{https://doi.org/10.1103/PhysRevA.100.022103} {\bibfield  {journal} {\bibinfo
			{journal} {Phys. Rev. A}\ }\textbf {\bibinfo {volume} {100}},\ \bibinfo
		{pages} {022103} (\bibinfo {year} {2019})}\BibitemShut {NoStop}%
	\bibitem [{\citenamefont {Yu}(2019)}]{yu2019quantum}%
	\BibitemOpen
	\bibfield  {author} {\bibinfo {author} {\bibfnamefont {N.}~\bibnamefont
			{Yu}},\ }\bibfield  {title} {\bibinfo {title} {Quantum closeness testing: A
			streaming algorithm and applications},\ }\href
	{https://arxiv.org/abs/1904.03218} {\bibfield  {journal} {\bibinfo  {journal}
			{arXiv:1904.03218}\ } (\bibinfo {year} {2019})}\BibitemShut {NoStop}%
	\bibitem [{\citenamefont {Huang}\ \emph {et~al.}(2020)\citenamefont {Huang},
		\citenamefont {Kueng},\ and\ \citenamefont {Preskill}}]{huang2020predicting}%
	\BibitemOpen
	\bibfield  {author} {\bibinfo {author} {\bibfnamefont {H.-Y.}\ \bibnamefont
			{Huang}}, \bibinfo {author} {\bibfnamefont {R.}~\bibnamefont {Kueng}},\ and\
		\bibinfo {author} {\bibfnamefont {J.}~\bibnamefont {Preskill}},\ }\bibfield
	{title} {\bibinfo {title} {Predicting many properties of a quantum system
			from very few measurements},\ }\href
	{https://doi.org/10.1038/s41567-020-0932-7} {\bibfield  {journal} {\bibinfo
			{journal} {Nature Physics}\ }\textbf {\bibinfo {volume} {16}},\ \bibinfo
		{pages} {1050} (\bibinfo {year} {2020})}\BibitemShut {NoStop}%
	\bibitem [{\citenamefont {Elben}\ \emph
		{et~al.}(2020{\natexlab{a}})\citenamefont {Elben}, \citenamefont {Kueng},
		\citenamefont {Huang}, \citenamefont {van Bijnen}, \citenamefont {Kokail},
		\citenamefont {Dalmonte}, \citenamefont {Calabrese}, \citenamefont {Kraus},
		\citenamefont {Preskill}, \citenamefont {Zoller},\ and\ \citenamefont
		{Vermersch}}]{elben2020mixedstate}%
	\BibitemOpen
	\bibfield  {author} {\bibinfo {author} {\bibfnamefont {A.}~\bibnamefont
			{Elben}}, \bibinfo {author} {\bibfnamefont {R.}~\bibnamefont {Kueng}},
		\bibinfo {author} {\bibfnamefont {H.-Y.~R.}\ \bibnamefont {Huang}}, \bibinfo
		{author} {\bibfnamefont {R.}~\bibnamefont {van Bijnen}}, \bibinfo {author}
		{\bibfnamefont {C.}~\bibnamefont {Kokail}}, \bibinfo {author} {\bibfnamefont
			{M.}~\bibnamefont {Dalmonte}}, \bibinfo {author} {\bibfnamefont
			{P.}~\bibnamefont {Calabrese}}, \bibinfo {author} {\bibfnamefont
			{B.}~\bibnamefont {Kraus}}, \bibinfo {author} {\bibfnamefont
			{J.}~\bibnamefont {Preskill}}, \bibinfo {author} {\bibfnamefont
			{P.}~\bibnamefont {Zoller}},\ and\ \bibinfo {author} {\bibfnamefont
			{B.}~\bibnamefont {Vermersch}},\ }\bibfield  {title} {\bibinfo {title}
		{Mixed-state entanglement from local randomized measurements},\ }\href
	{https://doi.org/10.1103/PhysRevLett.125.200501} {\bibfield  {journal}
		{\bibinfo  {journal} {Phys. Rev. Lett.}\ }\textbf {\bibinfo {volume} {125}},\
		\bibinfo {pages} {200501} (\bibinfo {year} {2020}{\natexlab{a}})}\BibitemShut
	{NoStop}%
	\bibitem [{\citenamefont {Zhou}\ \emph {et~al.}(2020)\citenamefont {Zhou},
		\citenamefont {Zeng},\ and\ \citenamefont {Liu}}]{singlezhou}%
	\BibitemOpen
	\bibfield  {author} {\bibinfo {author} {\bibfnamefont {Y.}~\bibnamefont
			{Zhou}}, \bibinfo {author} {\bibfnamefont {P.}~\bibnamefont {Zeng}},\ and\
		\bibinfo {author} {\bibfnamefont {Z.}~\bibnamefont {Liu}},\ }\bibfield
	{title} {\bibinfo {title} {Single-copies estimation of entanglement
			negativity},\ }\href {https://doi.org/10.1103/PhysRevLett.125.200502}
	{\bibfield  {journal} {\bibinfo  {journal} {Phys. Rev. Lett.}\ }\textbf
		{\bibinfo {volume} {125}},\ \bibinfo {pages} {200502} (\bibinfo {year}
		{2020})}\BibitemShut {NoStop}%
	\bibitem [{\citenamefont {Neven}\ \emph {et~al.}(2021)\citenamefont {Neven},
		\citenamefont {Carrasco}, \citenamefont {Vitale}, \citenamefont {Kokail},
		\citenamefont {Elben}, \citenamefont {Dalmonte}, \citenamefont {Calabrese},
		\citenamefont {Zoller}, \citenamefont {Vermersch}, \citenamefont {Kueng},\
		and\ \citenamefont {Kraus}}]{neven2021symmetryresolved}%
	\BibitemOpen
	\bibfield  {author} {\bibinfo {author} {\bibfnamefont {A.}~\bibnamefont
			{Neven}}, \bibinfo {author} {\bibfnamefont {J.}~\bibnamefont {Carrasco}},
		\bibinfo {author} {\bibfnamefont {V.}~\bibnamefont {Vitale}}, \bibinfo
		{author} {\bibfnamefont {C.}~\bibnamefont {Kokail}}, \bibinfo {author}
		{\bibfnamefont {A.}~\bibnamefont {Elben}}, \bibinfo {author} {\bibfnamefont
			{M.}~\bibnamefont {Dalmonte}}, \bibinfo {author} {\bibfnamefont
			{P.}~\bibnamefont {Calabrese}}, \bibinfo {author} {\bibfnamefont
			{P.}~\bibnamefont {Zoller}}, \bibinfo {author} {\bibfnamefont
			{B.}~\bibnamefont {Vermersch}}, \bibinfo {author} {\bibfnamefont
			{R.}~\bibnamefont {Kueng}},\ and\ \bibinfo {author} {\bibfnamefont
			{B.}~\bibnamefont {Kraus}},\ }\bibfield  {title} {\bibinfo {title}
		{Symmetry-resolved entanglement detection using partial transpose moments},\
	}\href {https://doi.org/10.1038/s41534-021-00487-y} {\bibfield  {journal}
		{\bibinfo  {journal} {npj Quantum Information}\ }\textbf {\bibinfo {volume}
			{7}},\ \bibinfo {pages} {152} (\bibinfo {year} {2021})}\BibitemShut {NoStop}%
	\bibitem [{\citenamefont {Tran}\ \emph {et~al.}(2016)\citenamefont {Tran},
		\citenamefont {Daki\ifmmode~\acute{c}\else \'{c}\fi{}}, \citenamefont
		{Laskowski},\ and\ \citenamefont {Paterek}}]{Tran2016Correlations}%
	\BibitemOpen
	\bibfield  {author} {\bibinfo {author} {\bibfnamefont {M.~C.}\ \bibnamefont
			{Tran}}, \bibinfo {author} {\bibfnamefont {B.}~\bibnamefont
			{Daki\ifmmode~\acute{c}\else \'{c}\fi{}}}, \bibinfo {author} {\bibfnamefont
			{W.}~\bibnamefont {Laskowski}},\ and\ \bibinfo {author} {\bibfnamefont
			{T.}~\bibnamefont {Paterek}},\ }\bibfield  {title} {\bibinfo {title}
		{Correlations between outcomes of random measurements},\ }\href
	{https://doi.org/10.1103/PhysRevA.94.042302} {\bibfield  {journal} {\bibinfo
			{journal} {Phys. Rev. A}\ }\textbf {\bibinfo {volume} {94}},\ \bibinfo
		{pages} {042302} (\bibinfo {year} {2016})}\BibitemShut {NoStop}%
	\bibitem [{\citenamefont {Ketterer}\ \emph {et~al.}(2019)\citenamefont
		{Ketterer}, \citenamefont {Wyderka},\ and\ \citenamefont
		{G\"uhne}}]{ketterer2019characterizing}%
	\BibitemOpen
	\bibfield  {author} {\bibinfo {author} {\bibfnamefont {A.}~\bibnamefont
			{Ketterer}}, \bibinfo {author} {\bibfnamefont {N.}~\bibnamefont {Wyderka}},\
		and\ \bibinfo {author} {\bibfnamefont {O.}~\bibnamefont {G\"uhne}},\
	}\bibfield  {title} {\bibinfo {title} {Characterizing multipartite
			entanglement with moments of random correlations},\ }\href
	{https://doi.org/10.1103/PhysRevLett.122.120505} {\bibfield  {journal}
		{\bibinfo  {journal} {Phys. Rev. Lett.}\ }\textbf {\bibinfo {volume} {122}},\
		\bibinfo {pages} {120505} (\bibinfo {year} {2019})}\BibitemShut {NoStop}%
	\bibitem [{\citenamefont {Knips}\ \emph {et~al.}(2020)\citenamefont {Knips},
		\citenamefont {Dziewior}, \citenamefont {Kobus}, \citenamefont {Laskowski},
		\citenamefont {Paterek}, \citenamefont {Shadbolt}, \citenamefont
		{Weinfurter},\ and\ \citenamefont {Meinecke}}]{Knips2020}%
	\BibitemOpen
	\bibfield  {author} {\bibinfo {author} {\bibfnamefont {L.}~\bibnamefont
			{Knips}}, \bibinfo {author} {\bibfnamefont {J.}~\bibnamefont {Dziewior}},
		\bibinfo {author} {\bibfnamefont {W.}~\bibnamefont {Kobus}}, \bibinfo
		{author} {\bibfnamefont {W.}~\bibnamefont {Laskowski}}, \bibinfo {author}
		{\bibfnamefont {T.}~\bibnamefont {Paterek}}, \bibinfo {author} {\bibfnamefont
			{P.~J.}\ \bibnamefont {Shadbolt}}, \bibinfo {author} {\bibfnamefont
			{H.}~\bibnamefont {Weinfurter}},\ and\ \bibinfo {author} {\bibfnamefont
			{J.~D.~A.}\ \bibnamefont {Meinecke}},\ }\bibfield  {title} {\bibinfo {title}
		{Multipartite entanglement analysis from random correlations},\ }\href
	{https://doi.org/10.1038/s41534-020-0281-5} {\bibfield  {journal} {\bibinfo
			{journal} {npj Quantum Information}\ }\textbf {\bibinfo {volume} {6}},\
		\bibinfo {pages} {51} (\bibinfo {year} {2020})}\BibitemShut {NoStop}%
	\bibitem [{\citenamefont {Ketterer}\ \emph {et~al.}(2020)\citenamefont
		{Ketterer}, \citenamefont {Wyderka},\ and\ \citenamefont
		{G{\"{u}}hne}}]{ketterer2020entanglement}%
	\BibitemOpen
	\bibfield  {author} {\bibinfo {author} {\bibfnamefont {A.}~\bibnamefont
			{Ketterer}}, \bibinfo {author} {\bibfnamefont {N.}~\bibnamefont {Wyderka}},\
		and\ \bibinfo {author} {\bibfnamefont {O.}~\bibnamefont {G{\"{u}}hne}},\
	}\bibfield  {title} {\bibinfo {title} {Entanglement characterization using
			quantum designs},\ }\href {https://doi.org/10.22331/q-2020-09-16-325}
	{\bibfield  {journal} {\bibinfo  {journal} {{Quantum}}\ }\textbf {\bibinfo
			{volume} {4}},\ \bibinfo {pages} {325} (\bibinfo {year} {2020})}\BibitemShut
	{NoStop}%
	\bibitem [{\citenamefont {Ketterer}\ \emph {et~al.}(2021)\citenamefont
		{Ketterer}, \citenamefont {Imai}, \citenamefont {Wyderka},\ and\
		\citenamefont {Gühne}}]{ketterer2020certifying}%
	\BibitemOpen
	\bibfield  {author} {\bibinfo {author} {\bibfnamefont {A.}~\bibnamefont
			{Ketterer}}, \bibinfo {author} {\bibfnamefont {S.}~\bibnamefont {Imai}},
		\bibinfo {author} {\bibfnamefont {N.}~\bibnamefont {Wyderka}},\ and\ \bibinfo
		{author} {\bibfnamefont {O.}~\bibnamefont {Gühne}},\ }\bibfield  {title}
	{\bibinfo {title} {Statistically significant tests of multiparticle quantum
			correlations based on randomized measurements},\ }\href
	{http://arxiv.org/abs/2012.12176} {\bibfield  {journal} {\bibinfo  {journal}
			{arXiv:2012.12176 [quant-ph]}\ } (\bibinfo {year} {2021})}\BibitemShut
	{NoStop}%
	\bibitem [{\citenamefont {Rath}\ \emph
		{et~al.}(2021{\natexlab{a}})\citenamefont {Rath}, \citenamefont {Branciard},
		\citenamefont {Minguzzi},\ and\ \citenamefont {Vermersch}}]{rath2021Fisher}%
	\BibitemOpen
	\bibfield  {author} {\bibinfo {author} {\bibfnamefont {A.}~\bibnamefont
			{Rath}}, \bibinfo {author} {\bibfnamefont {C.}~\bibnamefont {Branciard}},
		\bibinfo {author} {\bibfnamefont {A.}~\bibnamefont {Minguzzi}},\ and\
		\bibinfo {author} {\bibfnamefont {B.}~\bibnamefont {Vermersch}},\ }\bibfield
	{title} {\bibinfo {title} {Quantum fisher information from randomized
			measurements},\ }\href {https://doi.org/10.1103/PhysRevLett.127.260501}
	{\bibfield  {journal} {\bibinfo  {journal} {Phys. Rev. Lett.}\ }\textbf
		{\bibinfo {volume} {127}},\ \bibinfo {pages} {260501} (\bibinfo {year}
		{2021}{\natexlab{a}})}\BibitemShut {NoStop}%
	\bibitem [{\citenamefont {Yu}\ \emph {et~al.}(2021)\citenamefont {Yu},
		\citenamefont {Li}, \citenamefont {Wang}, \citenamefont {Chu}, \citenamefont
		{Yang}, \citenamefont {Gong}, \citenamefont {Goldman},\ and\ \citenamefont
		{Cai}}]{yu2021Fisher}%
	\BibitemOpen
	\bibfield  {author} {\bibinfo {author} {\bibfnamefont {M.}~\bibnamefont
			{Yu}}, \bibinfo {author} {\bibfnamefont {D.}~\bibnamefont {Li}}, \bibinfo
		{author} {\bibfnamefont {J.}~\bibnamefont {Wang}}, \bibinfo {author}
		{\bibfnamefont {Y.}~\bibnamefont {Chu}}, \bibinfo {author} {\bibfnamefont
			{P.}~\bibnamefont {Yang}}, \bibinfo {author} {\bibfnamefont {M.}~\bibnamefont
			{Gong}}, \bibinfo {author} {\bibfnamefont {N.}~\bibnamefont {Goldman}},\ and\
		\bibinfo {author} {\bibfnamefont {J.}~\bibnamefont {Cai}},\ }\bibfield
	{title} {\bibinfo {title} {Experimental estimation of the quantum fisher
			information from randomized measurements},\ }\href
	{https://arxiv.org/abs/2104.00519} {\bibfield  {journal} {\bibinfo  {journal}
			{arXiv:2104.00519}\ } (\bibinfo {year} {2021})}\BibitemShut {NoStop}%
	\bibitem [{\citenamefont {Elben}\ \emph
		{et~al.}(2020{\natexlab{b}})\citenamefont {Elben}, \citenamefont {Vermersch},
		\citenamefont {van Bijnen}, \citenamefont {Kokail}, \citenamefont {Brydges},
		\citenamefont {Maier}, \citenamefont {Joshi}, \citenamefont {Blatt},
		\citenamefont {Roos},\ and\ \citenamefont {Zoller}}]{Elben2020Cross}%
	\BibitemOpen
	\bibfield  {author} {\bibinfo {author} {\bibfnamefont {A.}~\bibnamefont
			{Elben}}, \bibinfo {author} {\bibfnamefont {B.}~\bibnamefont {Vermersch}},
		\bibinfo {author} {\bibfnamefont {R.}~\bibnamefont {van Bijnen}}, \bibinfo
		{author} {\bibfnamefont {C.}~\bibnamefont {Kokail}}, \bibinfo {author}
		{\bibfnamefont {T.}~\bibnamefont {Brydges}}, \bibinfo {author} {\bibfnamefont
			{C.}~\bibnamefont {Maier}}, \bibinfo {author} {\bibfnamefont {M.~K.}\
			\bibnamefont {Joshi}}, \bibinfo {author} {\bibfnamefont {R.}~\bibnamefont
			{Blatt}}, \bibinfo {author} {\bibfnamefont {C.~F.}\ \bibnamefont {Roos}},\
		and\ \bibinfo {author} {\bibfnamefont {P.}~\bibnamefont {Zoller}},\
	}\bibfield  {title} {\bibinfo {title} {Cross-platform verification of
			intermediate scale quantum devices},\ }\href
	{https://doi.org/10.1103/PhysRevLett.124.010504} {\bibfield  {journal}
		{\bibinfo  {journal} {Phys. Rev. Lett.}\ }\textbf {\bibinfo {volume} {124}},\
		\bibinfo {pages} {010504} (\bibinfo {year} {2020}{\natexlab{b}})}\BibitemShut
	{NoStop}%
	\bibitem [{\citenamefont {Zhang}\ \emph {et~al.}(2020)\citenamefont {Zhang},
		\citenamefont {Zhang}, \citenamefont {Chen}, \citenamefont {Peng},
		\citenamefont {Xu}, \citenamefont {Yin}, \citenamefont {Yu}, \citenamefont
		{Ye}, \citenamefont {Han}, \citenamefont {Xu}, \citenamefont {Chen},
		\citenamefont {Li},\ and\ \citenamefont {Guo}}]{zhang2020experimental}%
	\BibitemOpen
	\bibfield  {author} {\bibinfo {author} {\bibfnamefont {W.-H.}\ \bibnamefont
			{Zhang}}, \bibinfo {author} {\bibfnamefont {C.}~\bibnamefont {Zhang}},
		\bibinfo {author} {\bibfnamefont {Z.}~\bibnamefont {Chen}}, \bibinfo {author}
		{\bibfnamefont {X.-X.}\ \bibnamefont {Peng}}, \bibinfo {author}
		{\bibfnamefont {X.-Y.}\ \bibnamefont {Xu}}, \bibinfo {author} {\bibfnamefont
			{P.}~\bibnamefont {Yin}}, \bibinfo {author} {\bibfnamefont {S.}~\bibnamefont
			{Yu}}, \bibinfo {author} {\bibfnamefont {X.-J.}\ \bibnamefont {Ye}}, \bibinfo
		{author} {\bibfnamefont {Y.-J.}\ \bibnamefont {Han}}, \bibinfo {author}
		{\bibfnamefont {J.-S.}\ \bibnamefont {Xu}}, \bibinfo {author} {\bibfnamefont
			{G.}~\bibnamefont {Chen}}, \bibinfo {author} {\bibfnamefont {C.-F.}\
			\bibnamefont {Li}},\ and\ \bibinfo {author} {\bibfnamefont {G.-C.}\
			\bibnamefont {Guo}},\ }\bibfield  {title} {\bibinfo {title} {Experimental
			optimal verification of entangled states using local measurements},\ }\href
	{https://doi.org/10.1103/PhysRevLett.125.030506} {\bibfield  {journal}
		{\bibinfo  {journal} {Phys. Rev. Lett.}\ }\textbf {\bibinfo {volume} {125}},\
		\bibinfo {pages} {030506} (\bibinfo {year} {2020})}\BibitemShut {NoStop}%
	\bibitem [{\citenamefont {Zhang}\ \emph
		{et~al.}(2021{\natexlab{a}})\citenamefont {Zhang}, \citenamefont {Sun},
		\citenamefont {Fang}, \citenamefont {Zhang}, \citenamefont {Yuan},\ and\
		\citenamefont {Lu}}]{zhang2021experimental}%
	\BibitemOpen
	\bibfield  {author} {\bibinfo {author} {\bibfnamefont {T.}~\bibnamefont
			{Zhang}}, \bibinfo {author} {\bibfnamefont {J.}~\bibnamefont {Sun}}, \bibinfo
		{author} {\bibfnamefont {X.-X.}\ \bibnamefont {Fang}}, \bibinfo {author}
		{\bibfnamefont {X.-M.}\ \bibnamefont {Zhang}}, \bibinfo {author}
		{\bibfnamefont {X.}~\bibnamefont {Yuan}},\ and\ \bibinfo {author}
		{\bibfnamefont {H.}~\bibnamefont {Lu}},\ }\bibfield  {title} {\bibinfo
		{title} {Experimental quantum state measurement with classical shadows},\
	}\href {https://doi.org/10.1103/PhysRevLett.127.200501} {\bibfield  {journal}
		{\bibinfo  {journal} {Phys. Rev. Lett.}\ }\textbf {\bibinfo {volume} {127}},\
		\bibinfo {pages} {200501} (\bibinfo {year} {2021}{\natexlab{a}})}\BibitemShut
	{NoStop}%
	\bibitem [{\citenamefont {Vermersch}\ \emph {et~al.}(2019)\citenamefont
		{Vermersch}, \citenamefont {Elben}, \citenamefont {Sieberer}, \citenamefont
		{Yao},\ and\ \citenamefont {Zoller}}]{Vermersch2019Scrambling}%
	\BibitemOpen
	\bibfield  {author} {\bibinfo {author} {\bibfnamefont {B.}~\bibnamefont
			{Vermersch}}, \bibinfo {author} {\bibfnamefont {A.}~\bibnamefont {Elben}},
		\bibinfo {author} {\bibfnamefont {L.~M.}\ \bibnamefont {Sieberer}}, \bibinfo
		{author} {\bibfnamefont {N.~Y.}\ \bibnamefont {Yao}},\ and\ \bibinfo {author}
		{\bibfnamefont {P.}~\bibnamefont {Zoller}},\ }\bibfield  {title} {\bibinfo
		{title} {Probing scrambling using statistical correlations between randomized
			measurements},\ }\href {https://doi.org/10.1103/PhysRevX.9.021061} {\bibfield
		{journal} {\bibinfo  {journal} {Phys. Rev. X}\ }\textbf {\bibinfo {volume}
			{9}},\ \bibinfo {pages} {021061} (\bibinfo {year} {2019})}\BibitemShut
	{NoStop}%
	\bibitem [{\citenamefont {Elben}\ \emph
		{et~al.}(2020{\natexlab{c}})\citenamefont {Elben}, \citenamefont {Yu},
		\citenamefont {Zhu}, \citenamefont {Hafezi}, \citenamefont {Pollmann},
		\citenamefont {Zoller},\ and\ \citenamefont
		{Vermersch}}]{Elben2020topological}%
	\BibitemOpen
	\bibfield  {author} {\bibinfo {author} {\bibfnamefont {A.}~\bibnamefont
			{Elben}}, \bibinfo {author} {\bibfnamefont {J.}~\bibnamefont {Yu}}, \bibinfo
		{author} {\bibfnamefont {G.}~\bibnamefont {Zhu}}, \bibinfo {author}
		{\bibfnamefont {M.}~\bibnamefont {Hafezi}}, \bibinfo {author} {\bibfnamefont
			{F.}~\bibnamefont {Pollmann}}, \bibinfo {author} {\bibfnamefont
			{P.}~\bibnamefont {Zoller}},\ and\ \bibinfo {author} {\bibfnamefont
			{B.}~\bibnamefont {Vermersch}},\ }\bibfield  {title} {\bibinfo {title}
		{Many-body topological invariants from randomized measurements in synthetic
			quantum matter},\ }\href {https://doi.org/10.1126/sciadv.aaz3666} {\bibfield
		{journal} {\bibinfo  {journal} {Science Advances}\ }\textbf {\bibinfo
			{volume} {6}},\ \bibinfo {pages} {eaaz3666} (\bibinfo {year}
		{2020}{\natexlab{c}})}\BibitemShut {NoStop}%
	\bibitem [{\citenamefont {Cian}\ \emph {et~al.}(2021)\citenamefont {Cian},
		\citenamefont {Dehghani}, \citenamefont {Elben}, \citenamefont {Vermersch},
		\citenamefont {Zhu}, \citenamefont {Barkeshli}, \citenamefont {Zoller},\ and\
		\citenamefont {Hafezi}}]{Cian2020Chern}%
	\BibitemOpen
	\bibfield  {author} {\bibinfo {author} {\bibfnamefont {Z.-P.}\ \bibnamefont
			{Cian}}, \bibinfo {author} {\bibfnamefont {H.}~\bibnamefont {Dehghani}},
		\bibinfo {author} {\bibfnamefont {A.}~\bibnamefont {Elben}}, \bibinfo
		{author} {\bibfnamefont {B.}~\bibnamefont {Vermersch}}, \bibinfo {author}
		{\bibfnamefont {G.}~\bibnamefont {Zhu}}, \bibinfo {author} {\bibfnamefont
			{M.}~\bibnamefont {Barkeshli}}, \bibinfo {author} {\bibfnamefont
			{P.}~\bibnamefont {Zoller}},\ and\ \bibinfo {author} {\bibfnamefont
			{M.}~\bibnamefont {Hafezi}},\ }\bibfield  {title} {\bibinfo {title}
		{Many-body chern number from statistical correlations of randomized
			measurements},\ }\href {https://doi.org/10.1103/PhysRevLett.126.050501}
	{\bibfield  {journal} {\bibinfo  {journal} {Phys. Rev. Lett.}\ }\textbf
		{\bibinfo {volume} {126}},\ \bibinfo {pages} {050501} (\bibinfo {year}
		{2021})}\BibitemShut {NoStop}%
	\bibitem [{\citenamefont {Garcia}\ \emph {et~al.}(2021)\citenamefont {Garcia},
		\citenamefont {Zhou},\ and\ \citenamefont {Jaffe}}]{garcia2021quantum}%
	\BibitemOpen
	\bibfield  {author} {\bibinfo {author} {\bibfnamefont {R.~J.}\ \bibnamefont
			{Garcia}}, \bibinfo {author} {\bibfnamefont {Y.}~\bibnamefont {Zhou}},\ and\
		\bibinfo {author} {\bibfnamefont {A.}~\bibnamefont {Jaffe}},\ }\bibfield
	{title} {\bibinfo {title} {Quantum scrambling with classical shadows},\
	}\href {https://doi.org/10.1103/PhysRevResearch.3.033155} {\bibfield
		{journal} {\bibinfo  {journal} {Phys. Rev. Research}\ }\textbf {\bibinfo
			{volume} {3}},\ \bibinfo {pages} {033155} (\bibinfo {year}
		{2021})}\BibitemShut {NoStop}%
	\bibitem [{\citenamefont {Elben}\ \emph {et~al.}(2019)\citenamefont {Elben},
		\citenamefont {Vermersch}, \citenamefont {Roos},\ and\ \citenamefont
		{Zoller}}]{Elben2019toolbox}%
	\BibitemOpen
	\bibfield  {author} {\bibinfo {author} {\bibfnamefont {A.}~\bibnamefont
			{Elben}}, \bibinfo {author} {\bibfnamefont {B.}~\bibnamefont {Vermersch}},
		\bibinfo {author} {\bibfnamefont {C.~F.}\ \bibnamefont {Roos}},\ and\
		\bibinfo {author} {\bibfnamefont {P.}~\bibnamefont {Zoller}},\ }\bibfield
	{title} {\bibinfo {title} {Statistical correlations between locally
			randomized measurements: A toolbox for probing entanglement in many-body
			quantum states},\ }\href {https://doi.org/10.1103/PhysRevA.99.052323}
	{\bibfield  {journal} {\bibinfo  {journal} {Phys. Rev. A}\ }\textbf {\bibinfo
			{volume} {99}},\ \bibinfo {pages} {052323} (\bibinfo {year}
		{2019})}\BibitemShut {NoStop}%
	\bibitem [{\citenamefont {DiVincenzo}\ \emph {et~al.}(2002)\citenamefont
		{DiVincenzo}, \citenamefont {Leung},\ and\ \citenamefont
		{Terhal}}]{Divincenzo2002quantum}%
	\BibitemOpen
	\bibfield  {author} {\bibinfo {author} {\bibfnamefont {D.~P.}\ \bibnamefont
			{DiVincenzo}}, \bibinfo {author} {\bibfnamefont {D.~W.}\ \bibnamefont
			{Leung}},\ and\ \bibinfo {author} {\bibfnamefont {B.~M.}\ \bibnamefont
			{Terhal}},\ }\bibfield  {title} {\bibinfo {title} {Quantum data hiding},\
	}\href {https://ieeexplore.ieee.org/abstract/document/985948} {\bibfield
		{journal} {\bibinfo  {journal} {IEEE Transactions on Information Theory}\
		}\textbf {\bibinfo {volume} {48}},\ \bibinfo {pages} {580} (\bibinfo {year}
		{2002})}\BibitemShut {NoStop}%
	\bibitem [{\citenamefont {Dankert}\ \emph {et~al.}(2009)\citenamefont
		{Dankert}, \citenamefont {Cleve}, \citenamefont {Emerson},\ and\
		\citenamefont {Livine}}]{dankert2009exact}%
	\BibitemOpen
	\bibfield  {author} {\bibinfo {author} {\bibfnamefont {C.}~\bibnamefont
			{Dankert}}, \bibinfo {author} {\bibfnamefont {R.}~\bibnamefont {Cleve}},
		\bibinfo {author} {\bibfnamefont {J.}~\bibnamefont {Emerson}},\ and\ \bibinfo
		{author} {\bibfnamefont {E.}~\bibnamefont {Livine}},\ }\bibfield  {title}
	{\bibinfo {title} {Exact and approximate unitary 2-designs and their
			application to fidelity estimation},\ }\href
	{https://doi.org/10.1103/PhysRevA.80.012304} {\bibfield  {journal} {\bibinfo
			{journal} {Phys. Rev. A}\ }\textbf {\bibinfo {volume} {80}},\ \bibinfo
		{pages} {012304} (\bibinfo {year} {2009})}\BibitemShut {NoStop}%
	\bibitem [{\citenamefont {Zhu}\ \emph {et~al.}(2016)\citenamefont {Zhu},
		\citenamefont {Kueng}, \citenamefont {Grassl},\ and\ \citenamefont
		{Gross}}]{zhu2016clifford}%
	\BibitemOpen
	\bibfield  {author} {\bibinfo {author} {\bibfnamefont {H.}~\bibnamefont
			{Zhu}}, \bibinfo {author} {\bibfnamefont {R.}~\bibnamefont {Kueng}}, \bibinfo
		{author} {\bibfnamefont {M.}~\bibnamefont {Grassl}},\ and\ \bibinfo {author}
		{\bibfnamefont {D.}~\bibnamefont {Gross}},\ }\bibfield  {title} {\bibinfo
		{title} {The clifford group fails gracefully to be a unitary 4-design},\
	}\href {https://arxiv.org/abs/1609.08172} {\bibfield  {journal} {\bibinfo
			{journal} {arXiv:1609.08172}\ } (\bibinfo {year} {2016})}\BibitemShut
	{NoStop}%
	\bibitem [{\citenamefont {Haah}\ \emph {et~al.}(2017)\citenamefont {Haah},
		\citenamefont {Harrow}, \citenamefont {Ji}, \citenamefont {Wu},\ and\
		\citenamefont {Yu}}]{haah2017sample}%
	\BibitemOpen
	\bibfield  {author} {\bibinfo {author} {\bibfnamefont {J.}~\bibnamefont
			{Haah}}, \bibinfo {author} {\bibfnamefont {A.~W.}\ \bibnamefont {Harrow}},
		\bibinfo {author} {\bibfnamefont {Z.}~\bibnamefont {Ji}}, \bibinfo {author}
		{\bibfnamefont {X.}~\bibnamefont {Wu}},\ and\ \bibinfo {author}
		{\bibfnamefont {N.}~\bibnamefont {Yu}},\ }\bibfield  {title} {\bibinfo
		{title} {Sample-optimal tomography of quantum states},\ }\href
	{https://ieeexplore.ieee.org/abstract/document/7956181} {\bibfield  {journal}
		{\bibinfo  {journal} {IEEE Transactions on Information Theory}\ }\textbf
		{\bibinfo {volume} {63}},\ \bibinfo {pages} {5628} (\bibinfo {year}
		{2017})}\BibitemShut {NoStop}%
	\bibitem [{\citenamefont {Wootters}(2001)}]{wootters2001entanglement}%
	\BibitemOpen
	\bibfield  {author} {\bibinfo {author} {\bibfnamefont {W.~K.}\ \bibnamefont
			{Wootters}},\ }\bibfield  {title} {\bibinfo {title} {Entanglement of
			formation and concurrence},\ }\href {https://doi.org/10.26421/QIC1.1-3}
	{\bibfield  {journal} {\bibinfo  {journal} {Quantum Inf. Comput.}\ }\textbf
		{\bibinfo {volume} {1}},\ \bibinfo {pages} {27} (\bibinfo {year}
		{2001})}\BibitemShut {NoStop}%
	\bibitem [{\citenamefont {Beacom}\ and\ \citenamefont
		{Vagins}(2004)}]{Beacom2004Spectroscopy}%
	\BibitemOpen
	\bibfield  {author} {\bibinfo {author} {\bibfnamefont {J.~F.}\ \bibnamefont
			{Beacom}}\ and\ \bibinfo {author} {\bibfnamefont {M.~R.}\ \bibnamefont
			{Vagins}},\ }\bibfield  {title} {\bibinfo {title} {Antineutrino spectroscopy
			with large water \ifmmode \check{C}\else \v{C}\fi{}erenkov detectors},\
	}\href {https://doi.org/10.1103/PhysRevLett.93.171101} {\bibfield  {journal}
		{\bibinfo  {journal} {Phys. Rev. Lett.}\ }\textbf {\bibinfo {volume} {93}},\
		\bibinfo {pages} {171101} (\bibinfo {year} {2004})}\BibitemShut {NoStop}%
	\bibitem [{\citenamefont {Hadfield}\ \emph {et~al.}(2020)\citenamefont
		{Hadfield}, \citenamefont {Bravyi}, \citenamefont {Raymond},\ and\
		\citenamefont {Mezzacapo}}]{hadfield2020measurements}%
	\BibitemOpen
	\bibfield  {author} {\bibinfo {author} {\bibfnamefont {C.}~\bibnamefont
			{Hadfield}}, \bibinfo {author} {\bibfnamefont {S.}~\bibnamefont {Bravyi}},
		\bibinfo {author} {\bibfnamefont {R.}~\bibnamefont {Raymond}},\ and\ \bibinfo
		{author} {\bibfnamefont {A.}~\bibnamefont {Mezzacapo}},\ }\bibfield  {title}
	{\bibinfo {title} {Measurements of quantum hamiltonians with locally-biased
			classical shadows},\ }\href {https://arxiv.org/abs/2006.15788} {\bibfield
		{journal} {\bibinfo  {journal} {arXiv:2006.15788}\ } (\bibinfo {year}
		{2020})}\BibitemShut {NoStop}%
	\bibitem [{\citenamefont {Huang}\ \emph {et~al.}(2021)\citenamefont {Huang},
		\citenamefont {Kueng},\ and\ \citenamefont {Preskill}}]{huang2021efficient}%
	\BibitemOpen
	\bibfield  {author} {\bibinfo {author} {\bibfnamefont {H.-Y.}\ \bibnamefont
			{Huang}}, \bibinfo {author} {\bibfnamefont {R.}~\bibnamefont {Kueng}},\ and\
		\bibinfo {author} {\bibfnamefont {J.}~\bibnamefont {Preskill}},\ }\bibfield
	{title} {\bibinfo {title} {Efficient estimation of pauli observables by
			derandomization},\ }\href {https://arxiv.org/abs/2103.07510} {\bibfield
		{journal} {\bibinfo  {journal} {arXiv:2103.07510}\ } (\bibinfo {year}
		{2021})}\BibitemShut {NoStop}%
	\bibitem [{\citenamefont {Wu}\ \emph {et~al.}(2021)\citenamefont {Wu},
		\citenamefont {Sun}, \citenamefont {Huang},\ and\ \citenamefont
		{Yuan}}]{wu2021overlapped}%
	\BibitemOpen
	\bibfield  {author} {\bibinfo {author} {\bibfnamefont {B.}~\bibnamefont
			{Wu}}, \bibinfo {author} {\bibfnamefont {J.}~\bibnamefont {Sun}}, \bibinfo
		{author} {\bibfnamefont {Q.}~\bibnamefont {Huang}},\ and\ \bibinfo {author}
		{\bibfnamefont {X.}~\bibnamefont {Yuan}},\ }\bibfield  {title} {\bibinfo
		{title} {Overlapped grouping measurement: A unified framework for measuring
			quantum states},\ }\href {https://arxiv.org/abs/2105.13091} {\bibfield
		{journal} {\bibinfo  {journal} {arXiv:2105.13091}\ } (\bibinfo {year}
		{2021})}\BibitemShut {NoStop}%
	\bibitem [{\citenamefont {Rath}\ \emph
		{et~al.}(2021{\natexlab{b}})\citenamefont {Rath}, \citenamefont {van Bijnen},
		\citenamefont {Elben}, \citenamefont {Zoller},\ and\ \citenamefont
		{Vermersch}}]{rath2021importance}%
	\BibitemOpen
	\bibfield  {author} {\bibinfo {author} {\bibfnamefont {A.}~\bibnamefont
			{Rath}}, \bibinfo {author} {\bibfnamefont {R.}~\bibnamefont {van Bijnen}},
		\bibinfo {author} {\bibfnamefont {A.}~\bibnamefont {Elben}}, \bibinfo
		{author} {\bibfnamefont {P.}~\bibnamefont {Zoller}},\ and\ \bibinfo {author}
		{\bibfnamefont {B.}~\bibnamefont {Vermersch}},\ }\bibfield  {title} {\bibinfo
		{title} {Importance sampling of randomized measurements for probing
			entanglement},\ }\href {https://doi.org/10.1103/PhysRevLett.127.200503}
	{\bibfield  {journal} {\bibinfo  {journal} {Phys. Rev. Lett.}\ }\textbf
		{\bibinfo {volume} {127}},\ \bibinfo {pages} {200503} (\bibinfo {year}
		{2021}{\natexlab{b}})}\BibitemShut {NoStop}%
	\bibitem [{\citenamefont {Yuan}\ \emph {et~al.}(2019)\citenamefont {Yuan},
		\citenamefont {Endo}, \citenamefont {Zhao}, \citenamefont {Li},\ and\
		\citenamefont {Benjamin}}]{yuan2019theory}%
	\BibitemOpen
	\bibfield  {author} {\bibinfo {author} {\bibfnamefont {X.}~\bibnamefont
			{Yuan}}, \bibinfo {author} {\bibfnamefont {S.}~\bibnamefont {Endo}}, \bibinfo
		{author} {\bibfnamefont {Q.}~\bibnamefont {Zhao}}, \bibinfo {author}
		{\bibfnamefont {Y.}~\bibnamefont {Li}},\ and\ \bibinfo {author}
		{\bibfnamefont {S.~C.}\ \bibnamefont {Benjamin}},\ }\bibfield  {title}
	{\bibinfo {title} {Theory of variational quantum simulation},\ }\href
	{https://doi.org/10.22331/q-2019-10-07-191} {\bibfield  {journal} {\bibinfo
			{journal} {{Quantum}}\ }\textbf {\bibinfo {volume} {3}},\ \bibinfo {pages}
		{191} (\bibinfo {year} {2019})}\BibitemShut {NoStop}%
	\bibitem [{\citenamefont {Cerezo}\ \emph {et~al.}(2020)\citenamefont {Cerezo},
		\citenamefont {Arrasmith}, \citenamefont {Babbush}, \citenamefont {Benjamin},
		\citenamefont {Endo}, \citenamefont {Fujii}, \citenamefont {McClean},
		\citenamefont {Mitarai}, \citenamefont {Yuan}, \citenamefont {Cincio} \emph
		{et~al.}}]{cerezo2020variational}%
	\BibitemOpen
	\bibfield  {author} {\bibinfo {author} {\bibfnamefont {M.}~\bibnamefont
			{Cerezo}}, \bibinfo {author} {\bibfnamefont {A.}~\bibnamefont {Arrasmith}},
		\bibinfo {author} {\bibfnamefont {R.}~\bibnamefont {Babbush}}, \bibinfo
		{author} {\bibfnamefont {S.~C.}\ \bibnamefont {Benjamin}}, \bibinfo {author}
		{\bibfnamefont {S.}~\bibnamefont {Endo}}, \bibinfo {author} {\bibfnamefont
			{K.}~\bibnamefont {Fujii}}, \bibinfo {author} {\bibfnamefont {J.~R.}\
			\bibnamefont {McClean}}, \bibinfo {author} {\bibfnamefont {K.}~\bibnamefont
			{Mitarai}}, \bibinfo {author} {\bibfnamefont {X.}~\bibnamefont {Yuan}},
		\bibinfo {author} {\bibfnamefont {L.}~\bibnamefont {Cincio}}, \emph
		{et~al.},\ }\bibfield  {title} {\bibinfo {title} {Variational quantum
			algorithms},\ }\href {https://arxiv.org/abs/2012.09265} {\bibfield  {journal}
		{\bibinfo  {journal} {arXiv:2012.09265}\ } (\bibinfo {year}
		{2020})}\BibitemShut {NoStop}%
	\bibitem [{\citenamefont {Khatri}\ \emph {et~al.}(2019)\citenamefont {Khatri},
		\citenamefont {LaRose}, \citenamefont {Poremba}, \citenamefont {Cincio},
		\citenamefont {Sornborger},\ and\ \citenamefont
		{Coles}}]{Khatri2019quantumassisted}%
	\BibitemOpen
	\bibfield  {author} {\bibinfo {author} {\bibfnamefont {S.}~\bibnamefont
			{Khatri}}, \bibinfo {author} {\bibfnamefont {R.}~\bibnamefont {LaRose}},
		\bibinfo {author} {\bibfnamefont {A.}~\bibnamefont {Poremba}}, \bibinfo
		{author} {\bibfnamefont {L.}~\bibnamefont {Cincio}}, \bibinfo {author}
		{\bibfnamefont {A.~T.}\ \bibnamefont {Sornborger}},\ and\ \bibinfo {author}
		{\bibfnamefont {P.~J.}\ \bibnamefont {Coles}},\ }\bibfield  {title} {\bibinfo
		{title} {Quantum-assisted quantum compiling},\ }\href
	{https://doi.org/10.22331/q-2019-05-13-140} {\bibfield  {journal} {\bibinfo
			{journal} {{Quantum}}\ }\textbf {\bibinfo {volume} {3}},\ \bibinfo {pages}
		{140} (\bibinfo {year} {2019})}\BibitemShut {NoStop}%
	\bibitem [{\citenamefont {Zhang}\ \emph
		{et~al.}(2021{\natexlab{b}})\citenamefont {Zhang}, \citenamefont {Kyaw},
		\citenamefont {Kottmann}, \citenamefont {Degroote},\ and\ \citenamefont
		{Aspuru-Guzik}}]{zhang2020mutual}%
	\BibitemOpen
	\bibfield  {author} {\bibinfo {author} {\bibfnamefont {Z.-J.}\ \bibnamefont
			{Zhang}}, \bibinfo {author} {\bibfnamefont {T.~H.}\ \bibnamefont {Kyaw}},
		\bibinfo {author} {\bibfnamefont {J.}~\bibnamefont {Kottmann}}, \bibinfo
		{author} {\bibfnamefont {M.}~\bibnamefont {Degroote}},\ and\ \bibinfo
		{author} {\bibfnamefont {A.}~\bibnamefont {Aspuru-Guzik}},\ }\bibfield
	{title} {\bibinfo {title} {Mutual information-assisted adaptive variational
			quantum eigensolver},\ }\href
	{http://iopscience.iop.org/article/10.1088/2058-9565/abdca4} {\bibfield
		{journal} {\bibinfo  {journal} {Quantum Science and Technology}\ } (\bibinfo
		{year} {2021}{\natexlab{b}})}\BibitemShut {NoStop}%
	\bibitem [{\citenamefont {Zyczkowski}\ and\ \citenamefont
		{Kus}(1994)}]{zyczkowski1994random}%
	\BibitemOpen
	\bibfield  {author} {\bibinfo {author} {\bibfnamefont {K.}~\bibnamefont
			{Zyczkowski}}\ and\ \bibinfo {author} {\bibfnamefont {M.}~\bibnamefont
			{Kus}},\ }\bibfield  {title} {\bibinfo {title} {Random unitary matrices},\
	}\href {https://doi.org/10.1088/0305-4470/27/12/028} {\bibfield  {journal}
		{\bibinfo  {journal} {Journal of Physics A: Mathematical and General}\
		}\textbf {\bibinfo {volume} {27}},\ \bibinfo {pages} {4235} (\bibinfo {year}
		{1994})}\BibitemShut {NoStop}%
	\bibitem [{\citenamefont {Goodman}\ and\ \citenamefont
		{Wallach}(2000)}]{goodman2000representations}%
	\BibitemOpen
	\bibfield  {author} {\bibinfo {author} {\bibfnamefont {R.}~\bibnamefont
			{Goodman}}\ and\ \bibinfo {author} {\bibfnamefont {N.~R.}\ \bibnamefont
			{Wallach}},\ }\href
	{https://books.google.com.hk/books?hl=zh-CN&lr=&id=KBFoV2jpnPUC&oi=fnd&pg=PR13&dq=Representations+and+invariants+of+the+classical+groups&ots=YCAeXKb0-j&sig=yU0JlF15FVaS6e7tv7HFlcsbSHg&redir_esc=y&hl=zh-CN&sourceid=cndr#v=onepage&q=Representations%20and%20invariants%20of%20the%20classical%20groups&f=false}
		{\emph {\bibinfo {title} {Representations and invariants of the classical
					groups}}}\ (\bibinfo  {publisher} {Cambridge University Press},\ \bibinfo
		{year} {2000})\BibitemShut {NoStop}%
		\bibitem [{\citenamefont {Gu}(2013)}]{Gu2013Moments}%
		\BibitemOpen
		\bibfield  {author} {\bibinfo {author} {\bibfnamefont {Y.}~\bibnamefont
				{Gu}},\ }\emph {\bibinfo {title} {Moments of random matrices and weingarten
				functions}},\ \href
		{https://qspace.library.queensu.ca/bitstream/handle/1974/8241/Gu_Yinzheng_201308_MSc.pdf?sequence=1}
		{Ph.D. thesis} (\bibinfo {year} {2013})\BibitemShut {NoStop}%
		\bibitem [{\citenamefont {Lindner}\ and\ \citenamefont
			{Rodger}(2017)}]{lindner2017design}%
		\BibitemOpen
		\bibfield  {author} {\bibinfo {author} {\bibfnamefont {C.~C.}\ \bibnamefont
				{Lindner}}\ and\ \bibinfo {author} {\bibfnamefont {C.~A.}\ \bibnamefont
				{Rodger}},\ }\href
		{https://books.google.com.hk/books?hl=zh-CN&lr=&id=OoRFisqn-aIC&oi=fnd&pg=PP1&dq=Design+theory&ots=Uc2euxOmZs&sig=VD1hBFcsX7FmjqDH026By0DljeQ&redir_esc=y&hl=zh-CN&sourceid=cndr#v=onepage&q=Design%20theory&f=false}
			{\emph {\bibinfo {title} {Design theory}}}\ (\bibinfo  {publisher} {CRC
				press},\ \bibinfo {year} {2017})\BibitemShut {NoStop}%
			\bibitem [{\citenamefont {Wood}\ \emph {et~al.}(2015)\citenamefont {Wood},
				\citenamefont {Biamonte},\ and\ \citenamefont {Cory}}]{wood2011tensor}%
			\BibitemOpen
			\bibfield  {author} {\bibinfo {author} {\bibfnamefont {C.~J.}\ \bibnamefont
					{Wood}}, \bibinfo {author} {\bibfnamefont {J.~D.}\ \bibnamefont {Biamonte}},\
				and\ \bibinfo {author} {\bibfnamefont {D.~G.}\ \bibnamefont {Cory}},\
			}\bibfield  {title} {\bibinfo {title} {Tensor networks and graphical calculus
					for open quantum systems},\ }\href
			{https://dl.acm.org/doi/abs/10.5555/2871422.2871425} {\bibfield  {journal}
				{\bibinfo  {journal} {Quantum Information \& Computation}\ }\textbf {\bibinfo
					{volume} {15}},\ \bibinfo {pages} {759} (\bibinfo {year} {2015})}\BibitemShut
			{NoStop}%
			\bibitem [{\citenamefont {Shapourian}\ \emph {et~al.}(2021)\citenamefont
				{Shapourian}, \citenamefont {Liu}, \citenamefont {Kudler-Flam},\ and\
				\citenamefont {Vishwanath}}]{shapourian2021entanglement}%
			\BibitemOpen
			\bibfield  {author} {\bibinfo {author} {\bibfnamefont {H.}~\bibnamefont
					{Shapourian}}, \bibinfo {author} {\bibfnamefont {S.}~\bibnamefont {Liu}},
				\bibinfo {author} {\bibfnamefont {J.}~\bibnamefont {Kudler-Flam}},\ and\
				\bibinfo {author} {\bibfnamefont {A.}~\bibnamefont {Vishwanath}},\ }\bibfield
			{title} {\bibinfo {title} {Entanglement negativity spectrum of random mixed
					states: A diagrammatic approach},\ }\href
			{https://doi.org/10.1103/PRXQuantum.2.030347} {\bibfield  {journal} {\bibinfo
					{journal} {PRX Quantum}\ }\textbf {\bibinfo {volume} {2}},\ \bibinfo {pages}
				{030347} (\bibinfo {year} {2021})}\BibitemShut {NoStop}%
			\bibitem [{\citenamefont {Zhang}\ \emph {et~al.}(2008)\citenamefont {Zhang},
				\citenamefont {Zhang}, \citenamefont {Zhang},\ and\ \citenamefont
				{Guo}}]{zhang2008entanglement}%
			\BibitemOpen
			\bibfield  {author} {\bibinfo {author} {\bibfnamefont {C.-J.}\ \bibnamefont
					{Zhang}}, \bibinfo {author} {\bibfnamefont {Y.-S.}\ \bibnamefont {Zhang}},
				\bibinfo {author} {\bibfnamefont {S.}~\bibnamefont {Zhang}},\ and\ \bibinfo
				{author} {\bibfnamefont {G.-C.}\ \bibnamefont {Guo}},\ }\bibfield  {title}
			{\bibinfo {title} {Entanglement detection beyond the computable cross-norm or
					realignment criterion},\ }\href {https://doi.org/10.1103/PhysRevA.77.060301}
			{\bibfield  {journal} {\bibinfo  {journal} {Phys. Rev. A}\ }\textbf {\bibinfo
					{volume} {77}},\ \bibinfo {pages} {060301} (\bibinfo {year}
				{2008})}\BibitemShut {NoStop}%
			\bibitem [{\citenamefont {Kai}\ and\ \citenamefont {Wu}(2002)}]{2002AKai}%
			\BibitemOpen
			\bibfield  {author} {\bibinfo {author} {\bibfnamefont {C.}~\bibnamefont
					{Kai}}\ and\ \bibinfo {author} {\bibfnamefont {L.~A.}\ \bibnamefont {Wu}},\
			}\bibfield  {title} {\bibinfo {title} {A matrix realignment method for
					recognizing entanglement},\ }\href {https://arxiv.org/abs/quant-ph/0205017}
			{\bibfield  {journal} {\bibinfo  {journal} {Quantum Information and
						Computation}\ }\textbf {\bibinfo {volume} {3}},\ \bibinfo {pages} {193}
				(\bibinfo {year} {2002})}\BibitemShut {NoStop}%
			\bibitem [{\citenamefont {de~Vicente}(2008)}]{2007Further}%
			\BibitemOpen
			\bibfield  {author} {\bibinfo {author} {\bibfnamefont {J.~I.}\ \bibnamefont
					{de~Vicente}},\ }\bibfield  {title} {\bibinfo {title} {Further results on
					entanglement detection and quantification from the correlation matrix
					criterion},\ }\href {https://doi.org/10.1088/1751-8113/41/6/065309}
			{\bibfield  {journal} {\bibinfo  {journal} {Journal of Physics A:
						Mathematical and Theoretical}\ }\textbf {\bibinfo {volume} {41}},\ \bibinfo
				{pages} {065309} (\bibinfo {year} {2008})}\BibitemShut {NoStop}%
			\bibitem [{\citenamefont {Aolita}\ and\ \citenamefont
				{Mintert}(2006)}]{Aolita_2006}%
			\BibitemOpen
			\bibfield  {author} {\bibinfo {author} {\bibfnamefont {L.}~\bibnamefont
					{Aolita}}\ and\ \bibinfo {author} {\bibfnamefont {F.}~\bibnamefont
					{Mintert}},\ }\bibfield  {title} {\bibinfo {title} {Measuring multipartite
					concurrence with a single factorizable observable},\ }\href
			{https://doi.org/10.1103/PhysRevLett.97.050501} {\bibfield  {journal}
				{\bibinfo  {journal} {Phys. Rev. Lett.}\ }\textbf {\bibinfo {volume} {97}},\
				\bibinfo {pages} {050501} (\bibinfo {year} {2006})}\BibitemShut {NoStop}%
		\end{thebibliography}
\end{document}